\title{A new GARCH model with a deterministic time-varying intercept\thanks{Alexander Back gratefully acknowledges financial support from the OP Research Foundation (grants No. 20220141, 20210112 and 20200213). Material from this paper has been presented at the Graduate School of Finance Summer Workshop in Finance, Helsinki, June 2021; 15th International Conference on Computational and Financial Econometrics (CFE 2021), London, December 2021; 6th International Workshop on Financial Markets and Nonlinear Dynamics (FMND), Paris, June 2022; 4th Quantitative Finance and Financial Econometrics International Conference (QFFE 2022), Marseilles, June 2022; 24th International Conference on Computational Statistics (COMPSTAT 2022), Bologna, August 2022; 29th Nordic Conference in Mathematical Statistics (NORDSTAT 2023), Gothenburg, June 2023; 75th Econometric Society European Meetings (ESEM2023), Barcelona, August 2023. Comments from participants at these conferences are gratefully acknowledged. We also wish to thank Yongmiao Hong, Matthijs Lof  and Esther Ruiz for many thoughtful comments.}}
\author{Niklas Ahlgren$^1$, Alexander Back$^1$ \& Timo Teräsvirta$^{2}$}
\date{%
    $^1$Hanken School of Economics\\
    $^2$Aarhus BSS, Aarhus University
}
\newtheorem{thm}{Theorem}
\newtheorem*{thm*}{Theorem}
\newtheorem{rmk}{Remark}
\newtheorem*{asmpt*}{Assumption}
\newtheorem{lemma}{Lemma}
\newcommand\norm[1]{\left\lVert#1\right\rVert}
\providecommand{\keywords}[1]
{
  \small	
  \textbf{\textit{Keywords--}} #1
}
\providecommand{\JEL}[1]
{
  \small	
  \textbf{\textit{JEL Classification Codes--}} #1
}
\begin{document}

\maketitle
\begin{abstract}
\footnotesize
It is common for long financial time series to exhibit gradual change in the unconditional volatility. We propose a new model that captures this type of nonstationarity in a parsimonious way. The model augments the volatility equation of a standard GARCH model by a deterministic time-varying intercept. It captures structural change that slowly affects the amplitude of a time series while keeping the short-run dynamics constant. We parameterize the intercept as a linear combination of logistic transition functions. We show that the model can be derived from a multiplicative decomposition of volatility and preserves the financial motivation of variance decomposition. We use the theory of locally stationary processes to  show that the quasi maximum likelihood estimator (QMLE) of the parameters of the model is consistent and asymptotically normally distributed. We examine the quality of the asymptotic approximation in a small simulation study. An empirical application to Oracle Corporation stock returns demonstrates the usefulness of the model. We find that the persistence implied by the GARCH parameter estimates is reduced by including a time-varying intercept in the volatility equation.\bigskip \\
\noindent \keywords{Locally stationary GARCH, Nonlinear time series, Quasi maximum likelihood estimator, Time-varying GARCH, Smooth transition, Volatility modelling}\\
\JEL{C22, C51, C58}
\end{abstract}
\thispagestyle{empty}
\clearpage

\section{Introduction}
\setcounter{page}{1}
It is common for long financial time series to exhibit gradual change in the unconditional volatility. Failure to model such change can lead to an overstatement of the persistence of volatility (\cite{mikoschstarica2004a}, \cite{mikoschstarica2004b}, \cite{staricagranger2005}). In this paper, we propose a new model that captures this type of nonstationarity in a parsimonious way. The most popular volatility models are the autoregressive conditional heteroskedasticity (ARCH) model by \textcite{Engle82} and its generalization, the generalized ARCH (GARCH) model by \textcite{Bollerslev86} and \textcite[pp. 78--79]{Taylor86}. A major reason for their popularity is that they are easy to fit and likelihood-based inference can be done. This paper extends the GARCH model to accommodate nonstationarity in a way that preserves the parametric likelihood framework. The model, called the additive time-varying (ATV-)GARCH model, augments the volatility equation by a deterministic time-varying intercept. The model combines the flexibility of the time-varying GARCH (tvGARCH) class of models (see \cite{ChenH16}) with the financial motivation of multiplicative decomposition of volatility (see \cite{Engle&Rangel08}). 

In the nonparametric literature several papers consider time-varying GARCH models. \textcite{DahlhausSR06} generalized the ARCH model to a nonstationary ARCH model with time-varying coefficients. The time-varying ARCH (tvARCH) process can be locally approximated by a stationary ARCH process. Such processes are referred to as locally stationary. The locally stationary tvARCH process was further extended to tvGARCH; see \textcite{rao2006}, \textcite{Rohan13}, \textcite{ChenH16}, \textcite{kristensen2019local} and \textcite{KarmakarRichterWu}. \textcite{Truquet17} introduced a semiparametric ARCH model where some, but not all, parameters are time-varying. In statistical tests for detecting non-time-varying parameters in models fitted on real data, he finds that the null hypothesis tends to be rejected for the constant but not the ARCH parameters.  

We show that the entirely parametric ATV-GARCH model admits a representation as a time-varying GARCH process. 
Many asymptotic results are available for time-varying GARCH processes. \textcite{dahlhaus2019towards}, henceforth DRW, developed a general theory for nonlinear locally stationary processes. In this class of models the parameters are Lipschitz continuous parameter curves. DRW proved a global law of large numbers and a global central limit theorem based on the stationary approximation. We apply these results to the ATV-GARCH model.  

Time-varying GARCH models have not been much used to model financial data. Instead, multiplicative decomposition of volatility into a stationary GARCH component and a financially motivated slowly moving component has received a lot of attention in the literature. The first papers examining the multiplicative decomposition were \textcite{Feng04} and \textcite{BellegemS04}, followed by \textcite{Engle&Rangel08}, \textcite{fengmcneil2008}, \textcite{BrownleesG10}, \textcite{MishraSU10}, \textcite{MazurPipien2012}, \textcite{AmadoT13} and \textcite{lintonkoo2015}. For surveys, see \textcite{VBellegem12} and \textcite{Terasvirta12}. For a more recent model with a multiplicative structure, see \textcite{ES2018}. We show that a particular choice of the multiplicative decomposition yields a model that can be asymptotically expressed as an ATV-GARCH model. The ATV-GARCH model preserves the financial motivation of variance decomposition while the framework of tvGARCH processes can be utilized to derive asymptotic results. The ATV specification can be viewed as a parsimoniously parameterized alternative to the very general tvGARCH by \textcite{rao2006}.  

In the ATV-GARCH model, volatility is mean-reverting towards a time-varying mean, but with constant persistence. This way,
the model captures structural change that slowly affects the amplitude of a time series while keeping the short-run dynamics constant over time. The model is particularly well suited for situations in which the volatility of an asset or index is smoothly increasing or decreasing over time. The idea is that short-run fluctuations in volatility are stationary, but long-run structural changes make the assumption of stationarity inappropriate. This view is complementary to the interpretation by \textcite{Diebold86} that a neglected level shift in the intercept of a GARCH model can overstate persistence and erroneously leads one to an integrated GARCH by \textcite{EngleB86}.  \textcite{LamoureuxL90} investigate Diebold’s conjecture in finite samples. \textcite{hillebrand2005} proves that the sum of the estimated autoregressive parameters of the
conditional variance converges to one if a time series contains parameter changes in the conditional volatility. See also \textcite{mikoschstarica2004a}, who argue that assuming a constant unconditional variance can lead to spurious Integrated GARCH (IGARCH) models. A similar comment applies to the fractionally integrated GARCH (FIGARCH) process introduced by \textcite{baillie1996}. The FIGARCH model features a parameter that measures the degree of fractional integration of the volatility process. Neglecting structural breaks can cause overestimation of the parameter and in consequence an overestimation of persistence (see \cite{hillebrand2005} and the references therein). It is therefore possible that allowing for a model that incorporates changes in the intercept can make the evidence of long memory disappear.
By
making the intercept a smooth function of time, it is possible to capture level shifts that occur
gradually, rather than abruptly as in volatility models with structural breaks; see e.g. \textcite{chu1995} and \textcite{smith2008}. 

The ATV-GARCH model is globally nonstationary but locally stationary. The parameters of the model are estimated globally by QMLE. The primary goal of this paper is to establish consistency and asymptotic normality of the QMLE of the parameters of the ATV-GARCH model based on the stationary approximation. More precisely, we make use of the theory of locally stationary processes in DRW. \textcite{BHK2003}, henceforth BHK, and \textcite{FrancqZakoian2004} proved consistency and asymptotic normality of the QMLE of the parameters of strictly stationary GARCH processes under mild conditions. For earlier developments, see \textcite{leehansen1994} and \textcite{lumsdaine1996}. We use the theory in DRW to extend the results to a nonstationary setting. To the best of our knowledge, this paper is the first one to give rigorous proofs of consistency and asymptotic normality in a parametric GARCH model with a deterministic time-varying intercept.

Our model is related to the GARCH-X model of \textcite{HanPark2012} and \textcite{HanK14}, the FIGARCH model with a time-varying intercept of \textcite{baillie2009} and the smoothly time-varying parameter GARCH model of \textcite{ChenGL14}, among others.
In the GARCH-X model there is an additive component in the variance equation, but instead of being smooth and deterministic it is a positive-valued function of an exogenous stochastic variable. It may be viewed as a special case of the functional coefficient GARCH model by \textcite{MedeirosW09}. \textcite{baillie2009} proposed a FIGARCH model with a time-varying intercept given by a trigonometric function. However, the authors do not provide results on the consistency and asymptotic normality of the QMLE of the parameters. \textcite{ChenGL14} proposed a model in which all parameters are time-varying in a way similar to the ATV-GARCH model, but estimation and inference are Bayesian. \textcite{ambrovzevivciute2008tvgarch} study the statistical properties of the tvGARCH$(1,1)$ model with logistic coefficients, but do not consider inference. 

The plan of the paper is as follows. In Section 2 we introduce the ATV-GARCH model and show that it can be locally approximated by a stationary GARCH process. Maximum likelihood estimation of the model is discussed in Section 3. We prove consistency and asymptotic normality of the QMLE of the parameters based on the stationary approximation. In Section 4 we provide a simulation study of the QMLE of the parameters. In Section 5 we demonstrate the usefulness of the model by an empirical application to Oracle Corporation stock returns. Section 6 concludes. Proofs can be found in the Appendix.

Throughout the paper, we use $\norm{X}_p=(\mathbb{E}|X|^p)^{1/p}$ to denote the norm of a random variable $X$, and when applied to a vector or matrix, we use $|\cdot|$ to denote the maximum norm. We use $C_1,C_2,\ldots$ as generic constants, not necessarily the same across contexts.

\section{The model}
In this section, we define the ATV-GARCH model and show that 
a particular choice of multiplicative decomposition yields a model that is asymptotically equivalent to an ATV-GARCH model. We show that the ATV-GARCH model belongs to the class of locally stationary time-varying GARCH processes. We discuss some technical results related to the moment assumptions required in the proofs of consistency and asymptotic normality of the QMLE of the parameters.

\subsection{The additive time-varying GARCH model}
The ATV-GARCH model is defined by augmenting the volatility equation of the GARCH model of Bollerslev (1986) and \textcite[pp. 78--79]{Taylor86} by a deterministic time-varying intercept:
\begin{equation}
	X_{t,T}=\sigma _{t,T}\varepsilon_{t}, \ t=1,\ldots,T,
	\label{eq:GARCH}
\end{equation}
where $\varepsilon_{t}$ is IID$(0,1)$, while the volatility equation is given by
\begin{equation}
\sigma^2_{t,T} = \alpha_0(t/T;\boldsymbol{\theta}) + \alpha_1X^2_{t-1,T} + \beta_1 \sigma^2_{t-1,T}. \label{eq:atvgarch}
\end{equation}
In applications, $X_{t,T}$ will typically be a log-return of a financial asset or an index. To keep the notation simple,  the volatility equation (\ref{eq:atvgarch}) is of order one. The time-varying intercept is defined as $\alpha_0(t/T;\boldsymbol{\theta}):= \alpha_0 + g(t/T, \boldsymbol{\theta})$, where $g(t/T, \boldsymbol{\theta})$ is a Lipschitz continuous function in the parameters $\boldsymbol{\theta}$; the details will follow. The double subscript $(t,T)$ is used to emphasize that we are working with triangular arrays in rescaled time.

Interestingly, the ATV-GARCH model can be derived from a multiplicative decomposition of volatility. Such a decomposition typically contains two parts: a slowly changing component that captures some feature of the data-generation process, and a transient component that generates the daily volatility. A general multiplicative decomposition can be stated by considering (\ref{eq:GARCH}) with 
\begin{align}
    &\sigma^2_{t,T} = w_{t,T}g_{t,T}, \nonumber \\ 
    &w_{t,T} = \alpha_0 + \alpha_1X^2_{t-1,T}/g_{t-1,T} + \beta_1w_{t-1,T}, \label{eq:multvar}
\end{align}
where $g_{t,T}$ is some function capturing the slowly changing component. The function $g_{t,T}$ has been motivated by financial arguments. This is the case in e.g. \textcite{ES2018}, where the authors posit that the slowly changing component stems from a flexible leverage multiplier. An alternative specification is to make $g_{t,T}$ a function of rescaled time $t/T$, $g_{t,T}=g(t/T)$, as in \textcite{AmadoT13}. Here, we take the latter route and further require $g_{t,T}$ to be Lipschitz continuous. To derive the ATV-GARCH model, rewrite (\ref{eq:multvar}) as follows: $$ \sigma^2_{t,T} =\alpha_0g_{t,T} + \alpha_1\frac{X^2_{t-1,T}g_{t,T}}{g_{t-1,T}} + \beta_1\frac{\sigma^2_{t-1,T}g_{t,T}}{g_{t-1,T}}.$$ Note that by Lipschitz continuity, we have $$\left|g(t/T)-g((t-1)/T)\right| \leq C\left|t/T-(t-1)/T\right|=C/T,$$ for some positive constant $C,$ so as $T\to\infty,$ $g_{t,T}/g_{t-1,T}\to 1.$ Asymptotically, the conditional variance in the multiplicative decomposition is equal to\begin{align*} \sigma^2_{t,T} &=\alpha_0g_{t,T} + \alpha_1X^2_{t-1,T} + \beta_1\sigma^2_{t-1,T}.\\
\end{align*} 
Now, by writing $$g_{t,T} = 1 +\frac{1}{\alpha_0}g\left(t/T; \boldsymbol{\theta}\right)$$ for some parametric function $g(\cdot),$ we obtain the specification (\ref{eq:atvgarch}). See \textcite{Truquet17} for a similar development in the ARCH context. Consequently, financial insights originally used to motivate the multiplicative model (\ref{eq:multvar}) apply equally to the ATV-GARCH model. \textcite{AmadoT13} also briefly discuss an \emph{additive decomposition} of volatility, but do not consider it further.

A computational advantage of the ATV-GARCH model is that the parameters can be estimated by QML in a single step. \textcite{AmadoT13} estimated the parameters of the multiplicative decomposition (\ref{eq:multvar}) by an iterative procedure 
called estimation by parts, which is detailed in \textcite{SongFK05}. Further, the asymptotic theory of QML estimation of the ATV-GARCH model relies on assumptions that are straightforward to verify and comparable to the ones in strictly stationary GARCH processes.

The observations are now assumed to come from an ATV-GARCH$(p,q)$ process with volatility equation
\begin{equation}
	\sigma_{t,T}^{2}= \alpha_{0}+g(t/T;\boldsymbol{\theta}_1)
    +\sum_{i=1}^{p}\alpha_{i}X_{t-i,T}^{2}+\sum_{j=1}^{q}\beta_{j}\sigma_{t-j,T}^{2},\label{eq:themodel}
\end{equation}
where $\alpha_0>0$, $\alpha_1, \alpha_2, \ldots, \alpha_p\geq0$, $\beta_1, \beta_2, \ldots, \beta_q\geq0$ and $g(t/T;\boldsymbol{\theta}_1 )>0$. We follow \textcite{AmadoT13} and parameterize $g(\cdot)$ as a linear combination of logistic transition functions
\begin{equation}
	g(t/T;\boldsymbol{\theta}_1 ):=\sum_{l=1}^{L}\alpha_{0l}G_{l}\left(t/T;\gamma_{l},\boldsymbol{c}_{l}\right),\label{eq:tvintercept}
\end{equation}
where
\begin{equation}
	G(t/T;\gamma,\boldsymbol{c})=\left(1+\exp\left\{ -\gamma\prod\limits _{k=1}^{K}\left(t/T-c_{k}\right)\right\} \right)^{-1} \label{eq:lfunct}
\end{equation}
is the general logistic function with $\gamma>0,\ c_{1}<\ldots<c_{K}.$ When $L>1$,  the parameters satisfy $\gamma_{l}>0,l=1,\ldots,L,\ c_{11}<c_{12}<\ldots<c_{1K}<c_{21}<\ldots<c_{LK}$.
We use the partition $\boldsymbol{\theta} = (\boldsymbol{\theta}^{\intercal}_1, \boldsymbol{\theta}^{\intercal}_2)^{\intercal}$, where $\boldsymbol{\theta}_1 = (\alpha_{01}, \ldots, \alpha_{0L}, \gamma_1, \ldots, \gamma_L, \boldsymbol{c}_{1}^{\intercal}, \ldots, \boldsymbol{c}_{L}^{\intercal})^{\intercal}$ and $\boldsymbol{\theta}_2 = (\alpha_0,\alpha_1, \ldots, \alpha_p, \beta_1, \ldots, \beta_q)^{\intercal}$.
The ATV-GARCH model is unidentified if in (\ref{eq:tvintercept}) at least one $\alpha_{0l}=0$, $l=1, \ldots, L$. A Lagrange multiplier (LM) test of the null hypothesis $\alpha_{01}=...=\alpha_{0L}=0$, based on approximating the alternative by a Taylor expansion around the null hypothesis as in \textcite{LST88b}, is presented in \textcite{ABT2023b}. This circumvents the identification problem that invalidates the standard asymptotic inference; see e.g. \textcite{Davies77} who first pointed out this problem and discussed a solution to it.

When $L=1$ in (\ref{eq:tvintercept}) and $K=1$ in (\ref{eq:lfunct}), the change in the unconditional variance is monotonic. Nonmonotonic change can be achieved by choosing $K>1$ or by setting $L>1$ in (\ref{eq:tvintercept}). Typical transition functions are illustrated in Figure \ref{fig:graphlfunct}. Panels (a) and (b) show how the logistic transition function behaves as functions of the shape parameter $\gamma$ and location parameter $c$. Panel (c) depicts a double increase in volatility and panel (d) illustrates a case where volatility first increases and then decreases. As shown by \textcite{hornik1989}, continuous functions on compact subsets of $\mathbb{R}$ can be uniformly approximated as closely as desired by linear combinations of logistic functions. This indicates the flexibility of the function $g(t/T;\boldsymbol{\theta}_1 )$.
\begin{figure}[h]
    \centering
    \caption{The logistic transition function}
    \label{fig:graphlfunct}
    \begin{subfigure}{0.5\textwidth}
    \centering
        \includegraphics[width=0.9\linewidth]{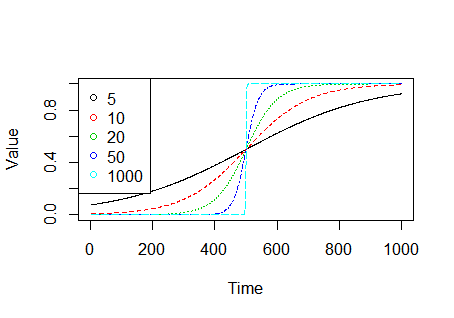}
        \caption{$\gamma_{1}$ varies, $c_{1}$ is fixed at 0.5.}
        \label{fig:changegamma}
    \end{subfigure}%
    \begin{subfigure}{0.5\textwidth}
    \centering
        \includegraphics[width=0.9\linewidth]{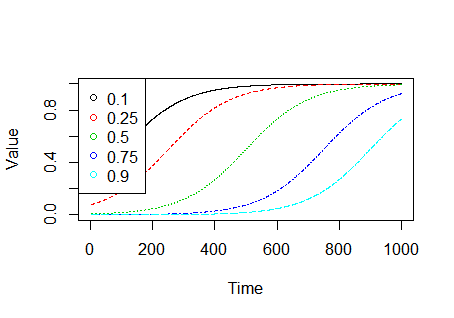}
        \caption{$c_{1}$ varies, $\gamma_{1}$ is fixed at 10.}
 \label{fig:changec}
    \end{subfigure}%
    \\
    \begin{subfigure}{0.5\textwidth}
    \centering
        \includegraphics[width=0.9\linewidth]{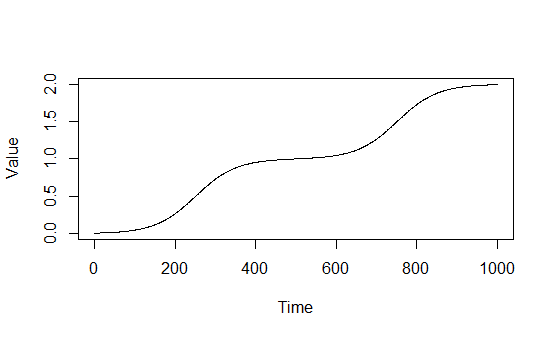}
        \caption{Two transitions with $\gamma_1 = \gamma_2 = 20$, $c_1 = 0.25$, $c_2 = 0.75$, no sign change.}
        \label{fig:2tsamedir}
    \end{subfigure}%
    \begin{subfigure}{0.5\textwidth}
    \centering
        \includegraphics[width=0.9\linewidth]{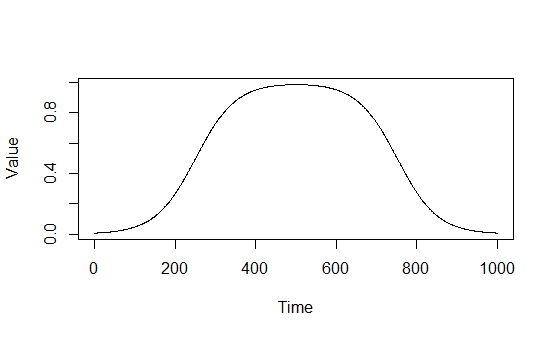}
        \caption{Two transitions with $\gamma_1 = \gamma_2 = 20$, $c_1 = 0.25$, $c_2 = 0.75$, one sign change.}
 \label{fig:2toppdir}
    \end{subfigure}%
   \end{figure}

The GARCH equation (\ref{eq:themodel}) can easily be made asymmetric. The simplest way to do so is to add an indicator variable for negative returns as in the GJR-GARCH model by \textcite{GlostenJR93}. For notational simplicity we retain the form (\ref{eq:themodel}). 

\subsection{Local stationarity}
Since the intercept in (\ref{eq:themodel}) is deterministically time-varying, the process $X_{t,T}$ is nonstationary. Standard asymptotic results for stationary and ergodic processes do not apply to the ATV-GARCH model. However, the rescaling device $t/T$ enables a meaningful asymptotic theory for processes that can be locally approximated by stationary processes.

This rescaling device is used by \textcite{dahlhaus1997, dahlhaus2000} and \textcite{DahlhausSR06} in the definition of local stationarity. In the following we work with $X_{t,T}^{2}$ instead of $X_{t,T}$ (cf. \textcite{DahlhausSR06}). By the triangle inequality, decompose the difference between $X_{t,T}^{2}$ and the stationary approximation $\widetilde{X}_t^{2}(u)$ at $u \in [0,1]$ as \begin{equation}
\left|X_{t,T}^{2}-\widetilde{X}_t^{2}(u)\right|\leq\left|X_{t,T}^{2}-\widetilde{X}_t^{2}(t/T)\right| + \left|\widetilde{X}_t^{2}(t/T)-\widetilde{X}_t^{2}(u)\right|.\label{eq:trianglels}
\end{equation} It is seen from (\ref{eq:trianglels}) that if $t/T$ is close to $u$, then $X_{t,T}^{2}$ and $%
\widetilde{X}_{t}^{2}(u)$ should be close and that the degree of the approximation
depends on the rescaling factor $T$ and the deviation $|t/T-u|$. The process $\{X^2_{t,T}\}$ is said to be locally stationary if (Dahlhaus and Subba Rao 2006) 
\begin{equation}
X_{t,T}^{2}= \widetilde{X}_t^{2}(u) + O_P\left(\left|\frac{t}{T}-u\right| + \frac{1}{T}\right), \label{eq:localstatdef}
\end{equation}
where $u\in[0,1]$ and $\widetilde{X}^2_t(u)$ is the stationary approximation at $u$. 

\textcite{rao2006} considers the following class of time-varying GARCH (tvGARCH) processes:
\begin{align}
X_{t,T} &= \sigma _{t,T} \varepsilon _{t}, \label{eq:tvgarch1} \\
\sigma _{t,T}^{2} &=\alpha _{0}\left( \frac{t}{T}\right)
+\sum_{i=1}^{p}\alpha _{i}\left( \frac{t}{T}\right)
X_{t-i,T}^{2}+\sum_{j=1}^{q}\beta _{j}\left( \frac{t}{T}\right) \sigma
_{t-j,T}^{2}, \label{eq:tvgarch2}
\end{align}
where the parameters $\alpha _{0}(t/T)$, $\alpha _{i}(t/T)$, $i=1,\ldots ,p$%
, and $\beta _{j}(t/T)$, $j=1\ldots ,q$, are smooth functions of time. The stationary approximation $\widetilde{X}_t(u)$ of the tvGARCH process at $u$ is given by 
\begin{align*}
\widetilde{X}_{t}(u) &=\sigma _{t}(u) \varepsilon _{t}, \\
\sigma^{2}_{t}(u) &=\alpha _{0}\left( u\right) +\sum_{i=1}^{p}\alpha
_{i}\left( u\right) \widetilde{X}^{2}_{t-1}(u)+\sum_{j=1}^{q}\beta
_{j}\left(u\right) \sigma^{2}_{t-j}(u).
\end{align*}%
Theorem 2.1 of \textcite{rao2006} states conditions under which a nonstationary, nonlinear process with time-dependent parameters can be locally approximated by a stationary process. Applied to the tvGARCH process, the conditions are as follows (cf. Subba Rao 2006, Section 5.1).
\begin{itemize}
    
\item[(i)] The parameter curves $\{\alpha_0(\cdot)\}$, $\{\alpha_i(\cdot)\}$ and $\{\beta_j(\cdot)\}$ are Lipschitz continuous.

\item[(ii)] \begin{equation*}
\sup_{u}\left\{ \sum_{i=1}^{p}\alpha _{i}\left( u\right)
+\sum_{j=1}^{q}\beta _{j}\left( u\right) \right\} <1-\eta
\end{equation*}
for some $\eta >0$.

\item[(iii)] $\mathbb{E}(\varepsilon_{t}^{2}) = 1$.
\end{itemize}
We now return to the ATV-GARCH$(p,q)$-version of the model in (\ref{eq:GARCH}) and (\ref{eq:atvgarch}) and show that it is locally stationary. The ATV-GARCH model is a parsimonious parameterization of the more general tvGARCH process (\ref{eq:tvgarch1}) and (\ref{eq:tvgarch2}), with $$\alpha
_{0}(t/T; \boldsymbol{\theta}):=\alpha_{0}+g(t/T; \boldsymbol{\theta}_1),$$ where $g(t/T; \boldsymbol{\theta}_1)$ is defined in (\ref{eq:tvintercept}) and (\ref{eq:lfunct}), $%
\alpha _{i}(t/T)=\alpha _{i}$, $i=1,\ldots ,p$, and $\beta _{j}(t/T)=\beta
_{j}$, $j=1,\ldots ,q$. The intercept and unconditional variance are time-varying, but the persistence is constant. The stationary approximation is obtained by fixing the intercept at the value the function $g(\cdot; \boldsymbol{\theta}_1)$ takes at $u$. 
The ATV-GARCH process is locally stationary and we have that \begin{equation}X_{t,T}^{2}=\widetilde{X}^{2}_{t}(u)+\left( \left\vert \frac{t}{T}%
-u\right\vert +\frac{1}{T}\right) R_{t,T},\quad \text{where\quad }\sup_{t,T}%
\mathbb{E}(R_{t,T})<\infty \label{eq:SRAOlocalstat}.\end{equation}
We state the result as a proposition.

\newtheorem{prop}{Proposition}
\begin{prop}
Assume that the parameter space $\varTheta$ is compact, $\sum_{i=1}^{p}\alpha _{i}
+\sum_{j=1}^{q}\beta _{j}<1$ and $\mathbb{E}(\varepsilon _{t}^{2})=1$. Then the ATV-GARCH model is locally stationary. \end{prop}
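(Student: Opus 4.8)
The plan is to recognise the ATV-GARCH$(p,q)$ process in \eqref{eq:themodel} as a special case of the tvGARCH process \eqref{eq:tvgarch1}--\eqref{eq:tvgarch2} and then verify the three conditions (i)--(iii) that precede the statement, so that Theorem 2.1 of \textcite{rao2006} yields the local stationarity representation \eqref{eq:SRAOlocalstat} directly. I identify the parameter curves as $\alpha_0(u)=\alpha_0+g(u;\boldsymbol{\theta}_1)$, $\alpha_i(u)\equiv\alpha_i$ for $i=1,\ldots,p$, and $\beta_j(u)\equiv\beta_j$ for $j=1,\ldots,q$, with the stationary approximation $\widetilde{X}_t(u)$ obtained by freezing the intercept at $\alpha_0+g(u;\boldsymbol{\theta}_1)$. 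The three hypotheses of the proposition then map cleanly onto the three conditions: compactness of $\varTheta$ will yield (i), the summability assumption is (ii), and $\mathbb{E}(\varepsilon_t^2)=1$ is (iii).

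The substantive step is condition (i), Lipschitz continuity of the parameter curves in $u$. The curves $\alpha_i(u)$ and $\beta_j(u)$ are constant and hence trivially Lipschitz, so it suffices to show that $g(\cdot;\boldsymbol{\theta}_1)$ is Lipschitz on $[0,1]$ uniformly over $\boldsymbol{\theta}_1\in\varTheta$. Since $g$ is the finite linear combination \eqref{eq:tvintercept} of logistic functions \eqref{eq:lfunct}, I would bound the time derivative of $G$. Writing $P(u)=\prod_{k=1}^{K}(u-c_k)$, one has $\partial G/\partial u=\gamma\,P'(u)\,G(1-G)$, and since $G(1-G)\le \tfrac{1}{4}$ this gives $|\partial G/\partial u|\le \tfrac{1}{4}\gamma\,|P'(u)|$. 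For $u\in[0,1]$ and $(\gamma,\boldsymbol{c})$ ranging over the compact set $\varTheta$, both $\gamma$ and $|P'(u)|$ are uniformly bounded, so each $G_l$, and therefore $g$, has a Lipschitz constant independent of the parameters. This uniform control over $\varTheta$ is exactly what makes the stationary approximation $\widetilde{X}_t(u)$ depend Lipschitz-continuously on $u$, bounding the term $|\widetilde{X}_t^2(t/T)-\widetilde{X}_t^2(u)|$ in the decomposition \eqref{eq:trianglels}.

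Conditions (ii) and (iii) are then read off the hypotheses. Because the ARCH and GARCH curves are constant, the supremum in (ii) equals $\sum_{i}\alpha_i+\sum_{j}\beta_j<1$, so (ii) holds for any $\eta\in(0,\,1-\sum_{i}\alpha_i-\sum_{j}\beta_j)$, and (iii) is assumed outright. Positivity of the frozen intercept $\alpha_0+g(u;\boldsymbol{\theta}_1)$, so that the stationary approximation is a genuine GARCH process, follows from $\alpha_0>0$ and $g>0$. With (i)--(iii) verified, Theorem 2.1 of \textcite{rao2006} delivers \eqref{eq:SRAOlocalstat}, and the moment bound $\sup_{t,T}\mathbb{E}(R_{t,T})<\infty$ comes with it: the second-order stationarity guaranteed by $\sum_{i}\alpha_i+\sum_{j}\beta_j<1$ together with $\mathbb{E}(\varepsilon_t^2)=1$ ensures $\mathbb{E}(X_{t,T}^2)<\infty$ and hence a finite expected remainder.

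I expect the only real obstacle to be the uniform Lipschitz bound on $g$: one must check that differentiating the nested structure, a logistic transform of a degree-$K$ polynomial in $u$, leaves a derivative that stays bounded simultaneously in $u$ and in the parameters, and it is compactness of $\varTheta$ that supplies this. Everything else is a matter of matching the hypotheses to conditions (i)--(iii) and invoking the cited theorem.
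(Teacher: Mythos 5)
Your proposal is correct and follows essentially the same route as the paper: both map the ATV-GARCH model onto the tvGARCH framework of Subba Rao (2006), note that conditions (ii) and (iii) hold by assumption since the ARCH/GARCH parameter curves are constant, and reduce condition (i) to Lipschitz continuity of $g(\cdot;\boldsymbol{\theta}_1)$, which the paper establishes in Appendix A.3 (Lemma 1) by the same derivative bound $\gamma G(1-G)\le\gamma/4$ combined with compactness of the parameter space. If anything, your treatment of general $K$ via $\partial G/\partial u=\gamma P'(u)G(1-G)$ is slightly more explicit than the paper's Lemma 1, which writes out the $K=1$ case, but this is a refinement of the same argument rather than a different approach.
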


\begin{proof} Conditions (ii) and (iii) are satisfied by assumption. For (i), see Appendix A.3.
\end{proof}

\begin{rmk} 
The condition on the GARCH coefficients is a necessary condition for weak stationarity of the GARCH$(p,q)$\
process. For estimation by QML, it is a stronger assumption than needed for strictly stationary GARCH processes, where a weaker condition on the coefficients can be obtained in terms of the top Lyapunov exponent; see BHK and \textcite{FrancqZakoian2004}.
\end{rmk}
\subsection{Moment assumptions and the stationary approximation}

In order to prove consistency and asymptotic normality of the QMLE of the parameters of the ATV-GARCH model, we require assumptions on the moment structure of both the process $\{X_{t,T}\}$ and the errors $\{\varepsilon_t\}$. The GARCH$(p,q)$ process has a finite second moment under the condition $\sum_{i=1}^{p}\alpha _{i}
+\sum_{j=1}^{q}\beta _{j}<1$ coupled with the existence of a second moment of the error term. To prove asymptotic normality of the QMLE in strictly stationary GARCH processes, it suffices to assume a finite fourth moment of the errors; see BHK and \textcite{FrancqZakoian2004}. For the ATV-GARCH model, we also have to assume that the process possesses a finite fourth moment. To explain why, we require some additional results from \textcite{rao2006}. Following Subba Rao (2006), the tvGARCH$(p,q)$ process $\{X_{t,T}\}$ admits the state
space representation (assume without loss of generality that $p,q\geq 2$)%
\begin{equation}
\mathcal{X}_{t,T}=\mathbf{b}_{t}\left( \frac{t}{T}\right) +\mathbf{A}%
_{t}\left( \frac{t}{T}\right) \mathcal{X}_{t-1,T}
\label{eq:GARCHSS}
\end{equation}%
with
\begin{equation*}
\mathcal{X}_{t,T}=\left(\sigma _{t,T}^{2}, \ldots,  
\sigma _{t-q+1,T}^{2}, X_{t-1,T}^{2}, \ldots, X_{t-p+1,T}^{2}
\right)^\mathsf{T} \in \mathbb{R}^{p+q-1},
\end{equation*}
\begin{equation*}
\mathbf{b}_{t}(u)=\left(\alpha _{0}(u), 0, \ldots, 0 
\right)^\mathsf{T} \in \mathbb{R}^{p+q-1},\quad 
\end{equation*}
and
\begin{equation*}
\mathbf{A}_{t}(u)=\left( 
\begin{array}{cccc}
\boldsymbol{\tau }_{t}(u) & \beta _{q}(u) & \boldsymbol{\alpha }(u) & \alpha _{p}(u)
\\ 
\mathbf{I}_{q-1} & \mathbf{0} & \mathbf{0} & \mathbf{0} \\ 
\mathbf{z}_{t-1}^{2} & 0 & \mathbf{0} & 0 \\ 
\mathbf{0} & \mathbf{0} & \mathbf{I}_{p-2} & \mathbf{0}%
\end{array}%
\right),
\label{eq:Atu}
\end{equation*}%
a $(p+q-1)\times (p+q-1)$ matrix where $\boldsymbol{\tau }_{t}(u)=(\beta
_{1}(u)+\alpha _{1}(u)\epsilon_{t-1}^{2},\beta _{2}(u),\ldots ,\beta _{q-1}(u))$, $%
\boldsymbol{\alpha }(u)=(\alpha _{2}(u),\ldots ,\alpha _{p-1}(u))$ and $\mathbf{z}_{t-1}^{2}=(\varepsilon _{t-1}^{2},0,\ldots ,0)\in \mathbb{R}^{q-1}$.
The stationary process $\{\widetilde{\mathcal{X}}_t(u)\}$ at $u$ is given by
\begin{equation}
\widetilde{\mathcal{X}}_t(u)=\mathbf{b}_{t}(u) 
+\mathbf{A}_t(u) \widetilde{\mathcal{X}}_{t-1}(u).
\label{eq:ssstat}
\end{equation}

We now show that the ATV-GARCH process admits the representation (\ref{eq:GARCHSS}), and that the results in Subba Rao (2006) apply to it. The vectors $\mathbf{b}_{t}(u)$ are defined as in (\ref{eq:GARCHSS}) with $\alpha_0 + g(u)$ in place of $\alpha_0(u)$. The matrices $\mathbf{A}_t(u)$ are matrices with constant parameters. By Lipschitz continuity of $g(u)$, it follows from Subba Rao (2006, Theorem 2.1) that
\begin{equation}
	\left|\mathcal{X}_t-\widetilde{\mathcal{X}}_t(u)\right|\leq\left|\frac{t}{T}-u\right|W_t + \frac{1}{T}V_{t,T}, \label{eq:lipstochprocesses}
\end{equation}
where $W_t$ and $V_{t,T}$ are stochastic processes.
For asymptotic results for locally stationary processes, conditions of the type
\begin{equation}
	\norm{\mathcal{X}_t-\widetilde{\mathcal{X}}_t(u)}_n \leq \left|\frac{t}{T}-u\right|C_1 + \frac{1}{T}C_2
	\label{eq:statespaceS1}
\end{equation}
are needed (see Appendices A.1 and A.2). For consistency, we need (\ref{eq:statespaceS1}) to be satisfied with $n=1$ and for asymptotic normality with $n=2$. It is obvious that (\ref{eq:statespaceS1}) requires the existence of moments of $W_t$ and $V_{t,T}$, to which we turn next.  

Define the stationary sequence
$$
\widetilde{\mathbf{b}}_{t}=\left(	\sup_{u\in[0,1]}\alpha_0(u), 0, \ldots, 0 \\
\right)^\mathsf{T} 
\in \mathbb{R}^{p+q-1}.
$$
The quantity $\widetilde{\mathbf{b}}_{t}$ is deterministic and bounded. The matrices included in Assumption 2.1 of Subba Rao that bound $\mathbf{A}_t(u)$ can be taken to be $\mathbf{A}_t(u)$ in (\ref{eq:GARCHSS}) with constant parameters, denoted here $\mathbf{A}_t$. Define the matrix  $(\mathbf{A})_n = \left(\mathbb{E}\left|A_{t, ij}\right|^n\right)^{1/n}$ and let $\lambda_{\text{spec}}(\mathbf{A})$ denote the largest absolute eigenvalue of $\mathbf{A}$. By Proposition 2.1 of Subba Rao (2006), if

(i) $\norm{\widetilde{\mathbf{b}_t}}_n^n<\infty$

(ii) $\lambda_{\text{spec}}((\mathbf{A})_n)<1-\delta$ for some $\delta>0$, then
\begin{equation}
	\sup_{t,T}\norm{W_t}_n^n <\infty.
\end{equation}
and
\begin{equation}
	\sup_{t,T}\norm{V_{t,T}}_n^n <\infty.
\end{equation}
By boundedness of the intercept, (i) is fulfilled. For asymptotic normality, we have to assume that (ii) holds with $n=2$. Because the matrices $\mathbf{A}_{t}$ are matrices with constant parameters, (ii) with $n=2$ is equivalent to the necessary and sufficient condition for the existence of a fourth moment of the GARCH process in \textcite{he1999}, and \textcite{LingMcaleer}. To see this, consider the ATV-GARCH$(1,1)$ model with one transition function. The state space representation is given by (\ref{eq:GARCHSS}) with 
\begin{align*}
\mathbf{b}_{t}(u) &=\left(\alpha _{0}+\alpha _{01}G(u), 0, 0
\right)^\mathsf{T},  \\
\mathbf{A}_{t} &=\left( 
\begin{array}{ccc}
\beta _{1}+\alpha _{1}\varepsilon_{t-1}^{2} & 0 & 0 \\ 
1 & 0 & 0 \\ 
\varepsilon_{t-1}^{2} & 0 & 0%
\end{array}%
\right).
\end{align*}%
We have
\begin{align*}
\widetilde{\mathbf{b}}_{t}=\sup_{u}|\mathbf{b}_{t}(u)| &\leq \left(\alpha _{0}+\alpha _{01}, 0, 0 
\right)^\mathsf{T} , \\
(\mathbf{A})_{2} &=\left( 
\begin{array}{ccc}
\{\mathbb{E}(\beta _{1}+\alpha _{1}\varepsilon_{t-1}^{2})^{2}\}^{1/2} & 0 & 0 \\ 
1 & 0 & 0 \\ 
\{\mathbb{E}(\varepsilon_{t-1}^{2})^{2}\}^{1/2} & 0 & 0%
\end{array}%
\right) \\
&=\left( 
\begin{array}{ccc}
\{\beta_{1}^{2}+2\alpha _{1}\beta _{1}\mathbb{E}(\varepsilon_{t}^{2})+\alpha _{1}^{2}%
\mathbb{E}(\varepsilon_{t}^{4})\}^{1/2} & 0 & 0 \\ 
1 & 0 & 0 \\ 
\{\mathbb{E}(\varepsilon_{t}^{4})\}^{1/2} & 0 & 0%
\end{array}%
\right) .
\end{align*}%
Since the trace of a matrix is equal to the sum of its eigenvalues, the condition $\lambda_{\text{spec}}((\mathbf{A})_{2})<1-\delta $ translates
into%
\begin{equation}
\beta _{1}^{2}+2\alpha _{1}\beta _{1}\mathbb{E}(\varepsilon_{t}^{2})+\alpha _{1}^{2}%
\mathbb{E}(\varepsilon_{t}^{4})<1-\delta \label{eq:4thmoment},
\end{equation} 
which, less the term $\delta$, is the fourth moment condition in \textcite{he1999}. Figure 3 depicts the region of existence of the fourth moment for the ATV-GARCH$(1,1)$ model with Gaussian errors: $\mathbb{E}(\varepsilon_{t}^{2})=1$ and $\mathbb{E}(\varepsilon_{t}^{4})=3$ in (\ref{eq:4thmoment}). 
\begin{figure}[h]
    \centering
    \includegraphics[width=0.98\linewidth]{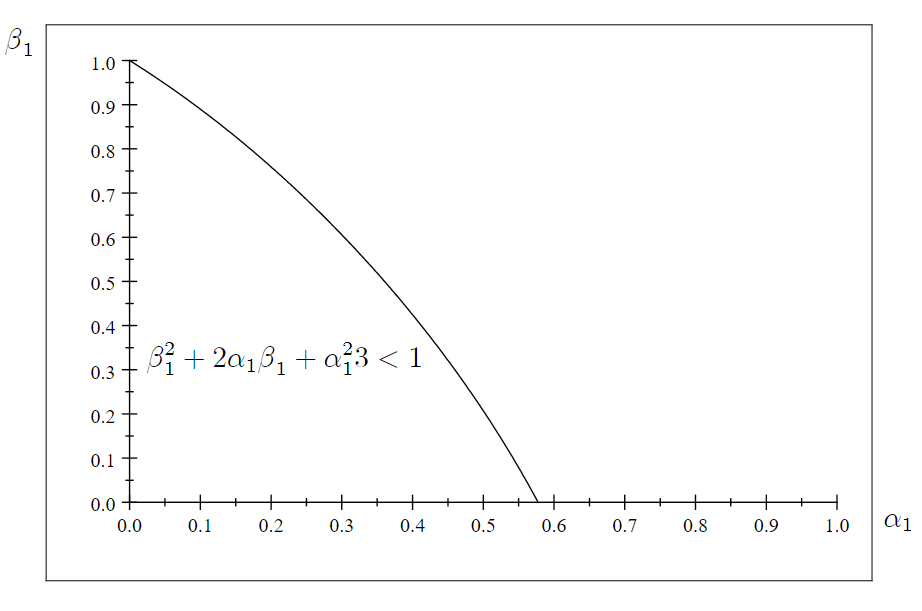}
    \caption{Existence of the fourth moment for the ATV-GARCH$(1,1)$ model with Gaussian errors.}
    \label{fig:4thmoment}
\end{figure}

\section{Maximum likelihood estimation}

In this section we consider estimation of the ATV-GARCH model by QML. Estimation is based on a representation of $\sigma^2_{t,T}$ in terms of past observations $X_{t-i,T}$ and the time-varying intercept $g_{t-i+1,T}$, $1 \leq i < \infty$. We use results from DRW for processes that can be locally approximated by stationary processes to derive the asymptotic properties of the QMLE.

\subsection{Representation for the conditional variance}

BHK derived a representation $h_{t,T}(\boldsymbol{\theta})$ for the conditional variance of the GARCH$(p,q)$ process in terms of past observations $X_{t-i,T}$. Our aim is to obtain a similar representation for the ATV-GARCH process in terms of $X_{t-i,T}$ and the time-varying intercept $g_{t-i+1,T}(\boldsymbol{\theta}_1)$, $1 \leq i < \infty$. The BHK representation was also used by \textcite{ChenH16} in their non-parametric tvGARCH model. First, we define a parameter space similar to the one in BHK. Let $\boldsymbol{\theta} = (\boldsymbol{\theta}_1^{\intercal},\boldsymbol{\theta}_2^{\intercal})^{\intercal}$ be the partitioning of the parameters in Section 2.1, where $\boldsymbol{\theta}_1 \subset \mathbb{R}^{\text{dim}(\boldsymbol{\theta}_1)}$ contains the parameters in the parameterisation of the time-varying intercept and $\boldsymbol{\theta}_2 \subset \mathbb{R}^{p+q+1}$ the remaining GARCH parameters. Let $0<\underline{\vartheta}<\overline{\vartheta}$, $0<\rho_0<1$ and $q\underline{\vartheta}<\rho_0$. Define \begin{equation}
\Theta = \{\boldsymbol{\theta}:\beta_1 + \beta_2 + \ldots + \beta_q\leq\rho_0\} \label{eq:parameterspace1}
\end{equation}
and
\begin{align}
&\{\underline{\vartheta}\leq \min(\alpha_0 + \sum_{l = 1}^L\alpha_{0l}, \alpha_1, \alpha_2, \ldots, \alpha_p, \beta_1, \beta_2, \ldots, \beta_q) \nonumber \\ &\leq \max(\alpha_0 + \sum_{l = 1}^L\alpha_{0l}, \alpha_1, \alpha_2, \ldots, \alpha_p, \beta_1, \beta_2, \ldots, \beta_q)\leq\overline{\vartheta}\}. \label{eq:parameterspace2}
\end{align}
The sequence $h_{t,T}(\boldsymbol{\theta})$ is computed from
\begin{align}
	h_{t,T}(\boldsymbol{\theta})=c_{0}(\boldsymbol{\theta})
	+ \sum^\infty_{i=1}d_i(\boldsymbol{\theta})g_{t-i+1,T}(\boldsymbol{\theta}_1) + \sum^{\infty}_{i=1}c_i(\boldsymbol{\theta})X^2_{t-i,T},
	\label{eq:decrecht}
\end{align}
where
\begin{align}
	c_{0}(\boldsymbol{\theta}) = 
	\frac{\alpha_0}{1-\sum^q_{j=1} \beta_j}
\end{align}
and $g_{t,T}(\boldsymbol{\theta}_1)$ is a short-hand notation for $g(t/T;\boldsymbol{\theta}_1)$.

The formulas for the coefficients $c_i(\boldsymbol{\theta})$ in (\ref{eq:decrecht}) are given in BHK. By Lemma 3.1 of BHK, formula (3.4), the coefficients $c_{i}(\boldsymbol{\theta})$ satisfy
\begin{align}
    c_{i}(\boldsymbol{\theta}) \leq C_2 \rho_0^{i/q}, \quad 0 \leq i < \infty, \label{eq:cineq}
\end{align}
for some constant $C_2$. The coefficients  
$d_{i}(\boldsymbol{\theta})$ obey a similar formula and (see Appendix A.3)
\begin{align}
    d_{i}(\boldsymbol{\theta}) \leq C_1 \rho_0^{i/q}, \quad 0 \leq i < \infty. \label{eq:dineq}
\end{align}

BHK proved that under the assumption $\mathbb{E}\ln\sigma^2_0<\infty$, the representation (\ref{eq:decrecht}) with $d_{i}(\boldsymbol{\theta})=0$, $1 \leq i < \infty$, yields $\sigma^2_{t,T}$ almost surely. In the ATV-GARCH model, the result follows similarly by considering quantities that bound $\sigma^2_{t,T}.$ Assume for notational simplicity the ATV-GARCH$(1,1)$ model. Then from (\ref{eq:themodel}),
\begin{align}
    \alpha_0 + g_{t,T}(\boldsymbol{\theta}_1) + \alpha_1X^2_{t-1,T} = \sigma^2_{t,T} - \beta_1\sigma^2_{t-1,T}.
    \label{repsigma2t}
\end{align}
Define the nonstationary sequence $\Phi_{t,T} = \alpha_0 + g_{t,T}(\boldsymbol{\theta}_1) + \alpha_1X^2_{t-1,T}$. Iterating (\ref{repsigma2t}), we obtain
\begin{align}
\Phi_{t,T} + \beta_1 \Phi_{t-1,T} + \cdots + \beta_1^i \Phi_{t-i,T} = \sigma_t^2 - \beta_1^{i+1} \sigma^2_{t-i-1,T}.
\label{itrepsigma2t}
\end{align}
Lemma 2.2 of BHK yields that the left-hand side of (\ref{itrepsigma2t}) with $g_{t,T}(\boldsymbol{\theta}_1) = 0$ in (\ref{repsigma2t}) converges almost surely as $i\to\infty$. Define $\Phi_{t}^* = \alpha_0 + \sup_{[0,1]}g_{t,T}(\boldsymbol{\theta}_1) + \alpha_1X^{*2}_{t-1}$, and similarly $\sigma^{*2}_{t}$. By weak stationarity of the bounding process, $\mathbb{E} \sigma^{*2}_0<\infty$, and therefore also $\mathbb{E}\ln \sigma^{*2}_0<\infty$. Lemma 2.2 of BHK now also yields almost sure convergence of the bounding process $\Phi_t^*$, which in turn implies that the original process also converges. Next we use the following zero-one law, which is a consequence of the first Borel-Cantelli lemma: if $\{Z_t\}$ is a sequence and $\sum_{t=1}^{\infty} P(|Z_t| > \delta ) < \infty$ for all $\delta>0$, then $Z_T \to 0$ a.s., as $T\to\infty$; see \textcite[p. 98]{gut2005}. Applying this theorem to the right-hand side of (\ref{itrepsigma2t}) yields
\begin{align*}
    \sum_{i=1}^{\infty} P \{|\beta_1^{i+1} \sigma_{t-i-1,T}^2| > \delta \} < \infty,
\end{align*}
so
\begin{equation*}
    \beta_1^{i+1} \sigma_{t-i-1,T}^2 \to 0
\end{equation*}
almost surely as $i\to\infty$ since $\beta_1^i$ decays exponentially and $\sigma^2_{t,T}$ is bounded by $\sigma^{*2}_t,$ which by weak stationarity is bounded in expectation. This proves the representation (\ref{eq:decrecht}) for the ATV-GARCH$(1,1)$ model. Let $\boldsymbol{\theta}_0$ denote the "true" parameter vector and note that $\sigma^2_{t,T}=h_{t,T}(\boldsymbol{\theta}_0)$. The representation for the ATV-GARCH$(p,q)$ model follows along similar lines. This scheme generates the nonstationary sequence $h_{t,T}(\boldsymbol{\theta})$ without the need to initialize it with arbitrary starting values but, as noted by \textcite{FrancqZakoian2004}, the computational cost of the procedure is of order $O(T^2)$ instead of $O(T)$.

\subsection{Likelihood functions}

We now turn to QML estimation of the ATV-GARCH model (\ref{eq:GARCH}) and (\ref{eq:themodel})--(\ref{eq:lfunct}). The Gaussian log-likelihood function of $(X_{1,T} \ldots, X_{T,T})$, is given by
\begin{equation}
	L_{T}(\boldsymbol{\theta}) = \frac{1}{T}\sum^T_{t=1}l_{t,T}(\boldsymbol{\theta}),
	\label{loglik} 
\end{equation}
where
\begin{equation*}
	l_{t,T}(\boldsymbol{\theta})=-\frac{1}{2}\left[\log h_{t,T}(\boldsymbol{\theta})+\frac{X^2_{t,T}}{h_{t,T}(\boldsymbol{\theta})}\right]
\end{equation*}
is the log-likelihood for observation $t$. The QMLE is defined as   
\begin{equation}
	\widehat{\boldsymbol{\theta}}_T=\arg \underset{\boldsymbol{\theta}\in\boldsymbol{\Theta}}{\max}\frac{1}{T}\sum^T_{t=1}l_{t,T}(\boldsymbol{\theta}). \label{infestimator} 
\end{equation}

Denote the score for observation $t$ by $\boldsymbol{s}_{t,T}(\boldsymbol{\theta})= \partial l_{t,T}(\boldsymbol{\theta})/\partial \boldsymbol{\theta}$
and the Hessian by
$\boldsymbol{H}_{t,T}(\boldsymbol{\theta})= \partial^2 l_{t,T}/(\boldsymbol{\theta})\partial \boldsymbol{\theta}\partial\boldsymbol{\theta}^{\intercal}.$
We have
\begin{align}
	\boldsymbol{s}_{t,T}(\boldsymbol{\theta})&=-\frac{1}{2}\frac{\partial}{\partial\boldsymbol{\theta}}\left\{ \ln h_{t,T}(\boldsymbol{\theta})+\frac{X_{t,T}^{2}}{h_{t,T}(\boldsymbol{\theta})}\right\}\nonumber \\ 
	&=  -\frac{1}{2}\left(1-\frac{X_{t,T}^{2}}{h_{t,T}(\boldsymbol{\theta})}\right)\frac{1}{h_{t,T}(\boldsymbol{\theta})}\frac{\partial h_{t,T}(\boldsymbol{\theta})}{\partial\boldsymbol{\theta}} \label{eq:firstderivs}
\end{align}
and \begin{align}
	&\boldsymbol{H}_{t,T}(\boldsymbol{\theta})=-\frac{1}{2}\frac{\partial}{\partial\boldsymbol{\theta}^{\intercal}}\left\{ \left(1-\frac{X_{t,T}^{2}(\boldsymbol{\theta})}{h_{t,T}(\boldsymbol{\theta})}\right)\frac{\partial\ln h_{t,T}(\boldsymbol{\theta})}{\partial\boldsymbol{\theta}}\right\} \nonumber \\ &=-\frac{1}{2}\left(1-\frac{X_t^{2}}{h_{t,T}(\boldsymbol{\theta})}\right)\frac{1}{h_{t,T}(\boldsymbol{\theta})}\frac{\partial^2 h_{t,T}(\boldsymbol{\theta})}{\partial\boldsymbol{\theta}\partial\boldsymbol{\theta}^{\intercal}}-\frac{1}{2}\left(2\frac{X_{t,T}^{2}}{h_{t,T}(\boldsymbol{\theta})}-1\right)\frac{1}{h_{t,T}(\boldsymbol{\theta})}\frac{\partial h_{t,T}(\boldsymbol{\theta})}{\partial\boldsymbol{\theta}}\frac{1}{h_{t,T}(\boldsymbol{\theta})}\frac{\partial h_{t,T}(\boldsymbol{\theta})}{\partial\boldsymbol{\theta}^{\intercal}}. \label{eq:secondderivs}
\end{align}

In practice, we only observe $(X_{1,T}, \ldots, X_{T,T})$, so the log-likelihood (\ref{loglik}) cannot be computed. Hence, as in BHK, we replace $l_{t,T}(\boldsymbol{\theta})$ with
\begin{equation*}
	\bar{l}_{t,T}(\boldsymbol{\theta})=-\frac{1}{2}\left[\log \bar{h}_{t,T}(\boldsymbol{\theta})+\frac{X^2_{t,T}}{\bar{h}_{t,T}(\boldsymbol{\theta})}\right],
\end{equation*}
where
\begin{equation}
	\bar{h}_{t,T}(\boldsymbol{\theta})=c_{0}(\boldsymbol{\theta}) + 
 \sum^{t}_{i=1}d_i(\boldsymbol{\theta})g_{t-i+1,T}(\boldsymbol{\theta}_1)
   + \sum^{t-1}_{i=1}c_i(\boldsymbol{\theta})X^2_{t-i,T}. \label{eq:frecht}
\end{equation}
The truncated log-likelihood function is given by
\begin{equation}
	\bar{L}_{T}(\boldsymbol{\theta}) = \frac{1}{T}\sum^{T}_{t=1}\bar{l}_{t,T}(\boldsymbol{\theta}).
\end{equation}
Analogously to (\ref{infestimator}), the truncated QMLE is defined as
\begin{equation}
	\bar{\boldsymbol{\theta}}=\arg \underset{\boldsymbol{\theta}\in\varTheta}{\max}\frac{1}{T}\sum^T_{t=1}\bar{l}_{t,T}(\boldsymbol{\theta}) \label{truncestimator}
\end{equation}
and, analogously to (\ref{eq:firstderivs}) and (\ref{eq:secondderivs}) the expressions for the score and Hessian for observation $t$ are  $\bar{\boldsymbol{s}}_{t,T}(\boldsymbol{\theta})$ and $\bar{\boldsymbol{H}}_{t,T}(\boldsymbol{\theta})$.

Our proofs of consistency and asymptotic normality of the QMLE are based on the local approximation of the log-likelihood function $l_{t,T}(\boldsymbol{\theta})$. Thus we also need the log-likelihood function of the stationary process $\widetilde{X}_t(u)$, defined by
\begin{equation}
	\widetilde{L}_{T}(u, \boldsymbol{\theta}) = \frac{1}{T}\sum^T_{t=1}\widetilde{l}_t(u, \boldsymbol{\theta}),
\end{equation}
where 
\begin{equation*}
	\widetilde{l}_t(u,\boldsymbol{\theta})=-\frac{1}{2}\left[\log \widetilde{h}_t(u,\boldsymbol{\theta})+\frac{\widetilde{X}^2_t(u)}{\widetilde{h}_t(u,\boldsymbol{\theta})}\right].
\end{equation*}
Furthermore,
\begin{equation*}
	\widetilde{h}_t(u,\boldsymbol{\theta})=c_0(\boldsymbol{\theta}) 
 + \sum^{\infty}_{i=1}d_i(\boldsymbol{\theta})g_{t-i+1}(u) +\sum^{\infty}_{i=1}c_i(\boldsymbol{\theta})\widetilde{X}^2_{t-i}(u), \label{eq:infatv}
\end{equation*}
for $u\in[0,1]$, where $g_{t,T}(\boldsymbol{\theta}_1)$ is approximated by $g_{t}(u, \boldsymbol{\theta}_1)$.
Similarly to (\ref{eq:firstderivs}) and (\ref{eq:secondderivs}), the expressions for the score and Hessian for observation $t$ are $\widetilde{\boldsymbol{s}}_t(u,\boldsymbol{\theta})$ and $\widetilde{\boldsymbol{H}}_t(u,\boldsymbol{\theta})$.

\clearpage
\subsection{Main results}

We are now ready to present the main results on consistency and asymptotic normality of the QMLE of the parameters of the ATV-GARCH model. 

To show consistency of the QMLE, we make the following assumptions.
\begin{itemize}
	\item[(A1)]  The random variables $\varepsilon _{t}$ are IID with $\mathbb{E}(\varepsilon_0) = 0$ and $\mathbb{E}(\varepsilon_0^2) = 1$.  $\mathbb{E}\left|\varepsilon _{0}^{2}\right|^{1+d}<\infty$, for some $d>0$. The random variable $\varepsilon^2_0$ is non-degenerate and $\underset{t\to0}{\lim}\ t^{-\mu}\mathbb{P}\{\varepsilon^2_0\leq t\}=0$, for some $\mu>0.$
	\item[(A2)] The parameter space $\Theta$ is compact and the vector $\boldsymbol{\theta}_{0}\in \mbox{int}(\Theta).$
	\item[(A3)] The equation $$\boldsymbol{\lambda}^{\intercal}\frac{\partial \alpha_0(u,\boldsymbol{\theta}_0)}{\partial \boldsymbol{\theta}} =\boldsymbol{0}$$
 for a vector of constants $\boldsymbol{\lambda}$  implies $\boldsymbol{\lambda} = \boldsymbol{0}.$
	\item[(A4)]  The polynomials $\mathcal{A}(z)=\alpha_1z+\alpha_2z^2+\ldots+\alpha_pz^p$ and $\mathcal{B}(z)=1 - \beta_1z-\beta_2z^2-\ldots-\beta_qz^q$ are coprime on the set of polynomials with real coefficients.
	\item[(A5)]  $\sum_{i=1}^{p}\alpha _{i}+\sum_{j=1}^{q}\beta_j<1.$
\end{itemize}

\begin{rmk}
    The assumption that the random variables $\varepsilon _{t}$ are IID$(0,1)$ was made in Proposition 1 to show that the ATV-GARCH model can be locally approximated by a stationary process. The remaining assumptions in (A1) are inherited from BHK. The assumption (A2) is a standard assumption used for proving consistency and asymptotic normality. As BHK noted, (A2) precludes zero coefficients for $\boldsymbol{\alpha}=(\alpha_1, \ldots, \alpha_p)^\intercal$ and $\boldsymbol{\beta}=(\beta_1, \ldots \beta_q)^\intercal$. (A3) is an identification condition on the time-varying intercept. In the appendix, we show that the identification condition holds for an intercept containing one logistic transition function with one $c$ parameter. It is likely to hold more generally, but to avoid difficulties in proving it, we leave it as an assumption. (A4) is an identification condition. (A5) together with the first part of (A1) is a sufficient condition for weak stationarity of the standard GARCH process.
\end{rmk}

We are now ready to present the following theorem.

\begin{thm}
	Under (A1)--(A5),
	\begin{equation*}	 \widehat{\boldsymbol{\theta}}_T\overset{P}{\to}{\boldsymbol{\theta}_0}, \text{ as } T\to\infty. \label{eq:consistency}
		\end{equation*}
\end{thm}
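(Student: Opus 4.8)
The plan is to run the classical M-estimation argument—uniform convergence of the criterion together with a uniquely identified limit—but adapted to the locally stationary setting through the global law of large numbers of DRW. The natural limit criterion is the integral over rescaled time of the expected stationary-approximation log-likelihood, $L(\boldsymbol{\theta}) = \int_0^1 \mathbb{E}[\widetilde{l}_t(u,\boldsymbol{\theta})]\,du$. The three pillars are (a) negligibility of the truncation that replaces $L_T$ by the feasible $\bar{L}_T$, (b) a uniform law of large numbers $\sup_{\boldsymbol{\theta}\in\Theta}|L_T(\boldsymbol{\theta})-L(\boldsymbol{\theta})|\overset{P}{\to}0$, and (c) identification, i.e.\ $L(\boldsymbol{\theta})$ is uniquely maximized at $\boldsymbol{\theta}_0$.

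First I would dispose of the truncation. On $\Theta$ the conditional variances $h_{t,T}(\boldsymbol{\theta})$ and $\widetilde{h}_t(u,\boldsymbol{\theta})$ are bounded below by a positive constant (by construction of $\Theta$ and positivity of $g$), so $\log h$ and $X^2/h$ are well controlled. Using this lower bound together with the exponential decay of the coefficients in (\ref{eq:cineq}) and (\ref{eq:dineq}), the tail sums defining $h_{t,T}-\bar{h}_{t,T}$ are $O(\rho_0^{t/q})$ uniformly in $\boldsymbol{\theta}$; summing the induced bound on $|\bar{l}_{t,T}-l_{t,T}|$ over $t$ gives $\sup_{\boldsymbol{\theta}\in\Theta}|\bar{L}_T(\boldsymbol{\theta})-L_T(\boldsymbol{\theta})|\overset{P}{\to}0$, so it suffices to work with $L_T$. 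For the uniform LLN, pointwise convergence $L_T(\boldsymbol{\theta})\overset{P}{\to}L(\boldsymbol{\theta})$ follows from DRW's global law of large numbers applied to $\{l_{t,T}(\boldsymbol{\theta})\}$, once the DRW stationary-approximation and moment conditions are checked; these reduce to the bounds (\ref{eq:SRAOlocalstat}) and (\ref{eq:statespaceS1}) with $n=1$, the lower bound on $h_{t,T}$, and a finite first moment of $\widetilde{X}_t^2(u)$ (guaranteed by (A5) and $\mathbb{E}\varepsilon_t^2=1$). Upgrading to uniformity is a standard Jennrich-type argument: $\Theta$ is compact by (A2), and one establishes a stochastic Lipschitz/equicontinuity bound for $l_{t,T}(\cdot)$ in $\boldsymbol{\theta}$, again exploiting the exponential decay of the $c_i,d_i$ and their $\boldsymbol{\theta}$-derivatives and the lower bound on $h_{t,T}$.

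The identification step is where the real work lies. At each fixed $u$ the map $\boldsymbol{\theta}\mapsto\mathbb{E}[\widetilde{l}_t(u,\boldsymbol{\theta})]$ is, up to an additive constant, a Kullback--Leibler contrast that is maximized at any $\boldsymbol{\theta}$ with $\widetilde{h}_t(u,\boldsymbol{\theta})=\widetilde{h}_t(u,\boldsymbol{\theta}_0)$ almost surely, the inequality being strict otherwise (here the non-degeneracy and lower-tail condition on $\varepsilon_0^2$ in (A1), inherited from BHK, rule out degenerate ties). Integrating, $L(\boldsymbol{\theta})=L(\boldsymbol{\theta}_0)$ forces $\widetilde{h}_t(u,\boldsymbol{\theta})=\widetilde{h}_t(u,\boldsymbol{\theta}_0)$ a.s.\ for Lebesgue-a.e.\ $u$. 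I would then read off parameters from the representation (\ref{eq:infatv}): matching the coefficients $c_i(\boldsymbol{\theta})=c_i(\boldsymbol{\theta}_0)$ of the stochastic regressors $\widetilde{X}_{t-i}^2(u)$ identifies $\alpha_1,\ldots,\alpha_p,\beta_1,\ldots,\beta_q$ exactly as in BHK, using the coprimeness condition (A4). With the GARCH lag structure thereby fixed, the remaining deterministic channel $c_0(\boldsymbol{\theta})+\sum_i d_i(\boldsymbol{\theta})g_{t-i+1}(u)$ must equal its value at $\boldsymbol{\theta}_0$ for a.e.\ $u$; since this is generated by the single intercept curve $\alpha_0(u,\boldsymbol{\theta})=\alpha_0+g(u;\boldsymbol{\theta}_1)$ filtered through the now-common lag polynomials, equality of the curves for a.e.\ $u$ follows, and the injectivity condition (A3) pins down $\alpha_0$ and $\boldsymbol{\theta}_1$.

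I expect the disentangling of the deterministic intercept from the GARCH dynamics to be the main obstacle: the coefficients $c_0,c_i,d_i$ all depend on the \emph{full} vector $\boldsymbol{\theta}$, so the two parameter blocks cannot be separated naively. The argument must first use the randomness of $\{\widetilde{X}_{t-i}^2(u)\}$ to isolate and match the $c_i$ (hence the GARCH part of $\boldsymbol{\theta}_2$ via A4), and only then treat the purely deterministic intercept channel, where (A3) supplies local injectivity of $\boldsymbol{\theta}\mapsto\alpha_0(u,\boldsymbol{\theta})$ that the logistic-transition parameterization promotes to global identification of $\boldsymbol{\theta}_1$. Given the uniform convergence of the criterion and this unique maximizer, consistency follows from the standard compactness argument: $\widehat{\boldsymbol{\theta}}_T$ cannot remain outside any neighborhood of $\boldsymbol{\theta}_0$ without eventually violating $L_T(\widehat{\boldsymbol{\theta}}_T)\geq L_T(\boldsymbol{\theta}_0)$, whence $\widehat{\boldsymbol{\theta}}_T\overset{P}{\to}\boldsymbol{\theta}_0$.
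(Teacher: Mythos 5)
Your proposal is correct and follows essentially the same route as the paper: pointwise convergence of $L_T(\boldsymbol{\theta})$ to $L(\boldsymbol{\theta})=\int_0^1 \mathbb{E}[\widetilde{l}_t(u,\boldsymbol{\theta})]\,du$ via DRW's global law of large numbers (after verifying the $n=1$ stationary-approximation bounds for $l_{t,T}$), upgraded to uniform convergence by stochastic equicontinuity and compactness, followed by the Kullback--Leibler contrast argument for uniqueness of the maximizer and the standard M-estimation conclusion. Your identification step is in fact a more explicit rendering of what the paper compresses into a citation of BHK (Theorem 2.3, Lemma 5.5) plus assumptions (A3)--(A4) --- you correctly note that BHK's injectivity only covers the GARCH block and that the intercept-curve parameters must be pinned down by varying $u$, which is the right reading of why (A3) is needed.
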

\begin{proof}
    See Appendix A.4.
\end{proof}

\begin{rmk}
Under (A1)–(A5), the ATV-GARCH process is locally stationary. In the proof of the theorem, we show that the sequence of log-likelihood functions $l_{t,T}(\boldsymbol{\theta})$ can be locally approximated by the
stationary process $\widetilde{l}_t(u,\boldsymbol{\theta})$.
Using the global law of large numbers (Theorem 2.7(i) of DRW), our proof also requires 
$\sup_{u\in[0,1]}\norm{\widetilde{l}_t(u,\boldsymbol{\theta})}_1 <\infty$. 
\end{rmk}
\begin{rmk} A main difference between the ATV-GARCH model and the nonparametric maximum likelihood estimation
of parameter curves in DRW is that in a nonparametric framework, smoothness conditions on the log-likelihood function are imposed, whereas in the parametric ATV-GARCH model we need to show that Lipschitz- 
or H\"{o}lder-type conditions hold for the log-likelihood function and its derivatives (the score and the Hessian). In the proof we show that a condition of the type (\ref{eq:statespaceS1}) with $n=1$ holds for the log-likelihood function.
\end{rmk}
\begin{rmk} Theorem 2.7 (i) of DRW gives convergence in $L^{1}$. Therefore, we are not able to show strong (almost sure) consistency of the QMLE of the parameters. 
\end{rmk}
\begin{rmk} In practice, we have to use the truncated estimator
$\bar{\boldsymbol{\theta}}_{T}$ instead of the estimator $\widehat{\boldsymbol{\theta}}_{T}$. To prove consistency of $\bar{\boldsymbol{\theta }}_{T}$ from consistency of $\widehat{\boldsymbol{\theta }}_{T}$, it is sufficient to show that the truncated log-likelihood function $\bar{L}_{T}(\boldsymbol{\theta })$ converges uniformly to the log-likelihood function $L_{T}(\boldsymbol{\theta})$ (cf. BHK). 
\end{rmk}
To show asymptotic normality of the QMLE, we make the following additional assumptions. 	\begin{itemize}
\item[(A6)] $\mathbb{E}|\varepsilon_t|^{4+d}<\infty$, for some $d>0$.
\item[(A7)] $\lambda_{\text{spec}}((\mathbf{A})_2)<1-\delta$, for some $\delta>0$, where $\mathbf{A}_t(u)$ is given in (\ref{eq:GARCHSS}) and the parameters are constant. 
\end{itemize}

\begin{rmk}
    (A7) is equivalent to the necessary and sufficient condition for the existence of a fourth moment of the standard GARCH$(p,q)$ process in He and Teräsvirta (1999) and \textcite{LingMcaleer}.
\end{rmk}

\begin{thm}
Under (A1)--(A7), 
\begin{equation}
		\sqrt{T}\left(\widehat{\boldsymbol{\theta}}_T-\boldsymbol{\theta}_0\right) \overset{D}{\to} N\left(\mathbf{0},\mathbf{B}^{-1}\mathbf{A}\mathbf{B}^{-1}\right), \text{ as } T\to\infty. \label{eq:asynormality}
	\end{equation}
The expressions for $\mathbf{A}$ and $\mathbf{B}$ are integrals of the stationary approximation of the expected score and Hessian at $u$, $u \in [0,1]$, respectively:

$$
\mathbf{A} = \int_0^1 \mathbb{E}(\mathbf{A}(u))  
\ \text{d}u,
$$
where
$$
\mathbf{A}(u) = \text{Var}(\widetilde{\mathbf{s}}_{0} (u,\boldsymbol{\theta}_0)),
$$
and
$$
\mathbf{B} = \int_0^1 \mathbb{E}(\widetilde{\mathbf{H}}(u,\boldsymbol{\theta}_0))\ \text{d}u.
$$
\end{thm}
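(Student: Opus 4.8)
The plan is to proceed through the classical M-estimation argument, adapted to the locally stationary setting via the global limit theorems of DRW. Since $\boldsymbol{\theta}_0\in\text{int}(\Theta)$ by (A2) and $\widehat{\boldsymbol{\theta}}_T\overset{P}{\to}\boldsymbol{\theta}_0$ by Theorem 1, with probability tending to one the maximiser is interior, so the first-order condition $\frac{1}{T}\sum_{t=1}^T\boldsymbol{s}_{t,T}(\widehat{\boldsymbol{\theta}}_T)=\mathbf{0}$ holds. Applying the mean value theorem coordinate-by-coordinate to the score about $\boldsymbol{\theta}_0$ gives
$$\sqrt{T}\left(\widehat{\boldsymbol{\theta}}_T-\boldsymbol{\theta}_0\right)=-\left(\frac{1}{T}\sum_{t=1}^T\boldsymbol{H}_{t,T}(\boldsymbol{\theta}^*_T)\right)^{-1}\frac{1}{\sqrt{T}}\sum_{t=1}^T\boldsymbol{s}_{t,T}(\boldsymbol{\theta}_0),$$
where $\boldsymbol{\theta}^*_T$ lies on the segment joining $\widehat{\boldsymbol{\theta}}_T$ and $\boldsymbol{\theta}_0$ (strictly, a collection of intermediate points, one per row). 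The result then follows from three facts, combined by Slutsky's theorem: a central limit theorem for the score, a law of large numbers for the Hessian, and nonsingularity of $\mathbf{B}$.

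For the score, I would first verify that $\boldsymbol{s}_{t,T}(\boldsymbol{\theta}_0)$ is locally stationary in the sense of (\ref{eq:statespaceS1}) with $n=2$, i.e. that it is approximated in $\norm{\cdot}_2$ by the stationary score $\widetilde{\boldsymbol{s}}_t(t/T,\boldsymbol{\theta}_0)$ with the usual $|t/T-u|+1/T$ bound; this is where (A6) and (A7) enter, since the score contains the ratio $X_{t,T}^2/h_{t,T}$ and the normalised derivative $h_{t,T}^{-1}\partial h_{t,T}/\partial\boldsymbol{\theta}$, whose second moments are finite precisely under a finite fourth moment of the process. The key simplification is that at $\boldsymbol{\theta}_0$ the stationary approximation satisfies $\widetilde{h}_t(u,\boldsymbol{\theta}_0)=\widetilde{\sigma}_t^2(u)$ and $\widetilde{X}_t^2(u)=\widetilde{\sigma}_t^2(u)\varepsilon_t^2$, so that $1-\widetilde{X}_t^2(u)/\widetilde{h}_t(u,\boldsymbol{\theta}_0)=1-\varepsilon_t^2$ is independent of the $\mathcal{F}_{t-1}$-measurable factor $h^{-1}\partial h/\partial\boldsymbol{\theta}$; hence $\widetilde{\boldsymbol{s}}_t(u,\boldsymbol{\theta}_0)$ is a martingale difference with mean zero. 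The global central limit theorem of DRW (the companion of Theorem 2.7(i)) then yields $\frac{1}{\sqrt{T}}\sum_t\boldsymbol{s}_{t,T}(\boldsymbol{\theta}_0)\overset{D}{\to}N(\mathbf{0},\mathbf{A})$; because the martingale-difference property annihilates all autocovariances, the long-run variance reduces to the integrated instantaneous variance $\mathbf{A}=\int_0^1\text{Var}(\widetilde{\boldsymbol{s}}_0(u,\boldsymbol{\theta}_0))\,\mathrm{d}u$, matching the stated form.

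For the Hessian, I would apply the global law of large numbers (Theorem 2.7(i) of DRW) to obtain $\frac{1}{T}\sum_t\boldsymbol{H}_{t,T}(\boldsymbol{\theta}_0)\overset{P}{\to}\mathbf{B}=\int_0^1\mathbb{E}(\widetilde{\boldsymbol{H}}_0(u,\boldsymbol{\theta}_0))\,\mathrm{d}u$, after checking the local-stationarity bound for $\boldsymbol{H}_{t,T}$ and the integrability $\sup_u\norm{\widetilde{\boldsymbol{H}}_t(u,\boldsymbol{\theta})}_1<\infty$ on a neighbourhood of $\boldsymbol{\theta}_0$, which again rests on (A6)–(A7). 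To replace the intermediate point $\boldsymbol{\theta}^*_T$ by $\boldsymbol{\theta}_0$ I would combine $\boldsymbol{\theta}^*_T\overset{P}{\to}\boldsymbol{\theta}_0$ with a uniform law of large numbers for the Hessian over a shrinking neighbourhood and continuity of $\boldsymbol{\theta}\mapsto\int_0^1\mathbb{E}(\widetilde{\boldsymbol{H}}_0(u,\boldsymbol{\theta}))\,\mathrm{d}u$ at $\boldsymbol{\theta}_0$, giving $\frac{1}{T}\sum_t\boldsymbol{H}_{t,T}(\boldsymbol{\theta}^*_T)\overset{P}{\to}\mathbf{B}$.

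The step I expect to be the main obstacle is establishing that $\mathbf{B}$ is nonsingular. At $\boldsymbol{\theta}_0$ the first term of the Hessian (\ref{eq:secondderivs}) vanishes in expectation and one finds $\mathbb{E}(\widetilde{\boldsymbol{H}}_0(u,\boldsymbol{\theta}_0))=-\frac{1}{2}\mathbb{E}\left[\widetilde{h}_0(u,\boldsymbol{\theta}_0)^{-2}\,\partial_{\boldsymbol{\theta}}\widetilde{h}_0\,\partial_{\boldsymbol{\theta}^\intercal}\widetilde{h}_0\right]$, so $-\mathbf{B}$ is positive semidefinite and it suffices to rule out a nonzero $\boldsymbol{\lambda}$ with $\boldsymbol{\lambda}^\intercal\partial_{\boldsymbol{\theta}}\widetilde{h}_0(u,\boldsymbol{\theta}_0)=0$ almost surely on a set of $u$ of positive measure. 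Reading $\partial_{\boldsymbol{\theta}}\widetilde{h}_0$ off the expansion (\ref{eq:infatv}), the derivatives with respect to $\boldsymbol{\theta}_2$ carry the stochastic loadings $\widetilde{X}_{t-i}^2(u)$ and are separated from the purely deterministic loadings $g_{t-i+1}(u)$ attached to $\boldsymbol{\theta}_1$; the non-degeneracy of $\varepsilon_0^2$ in (A1) together with the coprimeness condition (A4) eliminates any nontrivial combination among the GARCH coordinates exactly as in BHK and \textcite{FrancqZakoian2004}, while (A3) eliminates combinations among the intercept coordinates. Integrating the resulting strict positivity over $u$ makes $\mathbf{B}$ negative definite, hence invertible. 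Combining the three facts, Slutsky's theorem delivers $\sqrt{T}(\widehat{\boldsymbol{\theta}}_T-\boldsymbol{\theta}_0)\overset{D}{\to}N(\mathbf{0},\mathbf{B}^{-1}\mathbf{A}\mathbf{B}^{-1})$; the same conclusion for the truncated estimator $\bar{\boldsymbol{\theta}}_T$ follows by showing that truncation contributes an asymptotically negligible $o_P(T^{-1/2})$ term, as in BHK.
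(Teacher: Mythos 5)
Your overall route is the same as the paper's: first-order condition plus a mean-value expansion of the score, DRW's global CLT for the score at $\boldsymbol{\theta}_0$ (using the martingale-difference structure to collapse the long-run variance to the instantaneous variance $\int_0^1\mathrm{Var}(\widetilde{\mathbf{s}}_0(u,\boldsymbol{\theta}_0))\,\mathrm{d}u$), DRW's global LLN for the Hessian, and a linear-independence argument resting on (A1), (A3) and (A4) for invertibility. Two of your choices differ from the paper only in packaging: you prove nonsingularity directly for $\mathbf{B}$, whereas the paper proves positive definiteness of $\mathbf{A}$ (the same integrated outer-product matrix up to the scalar $\mathbb{E}\varepsilon_0^4-1$) and transfers it to $\mathbf{B}$ through the expansion; and you handle the intermediate point $\boldsymbol{\theta}_T^{\ast}$ by a uniform LLN over a shrinking neighbourhood, whereas the paper Taylor-expands the Hessian once more and bounds the third derivatives (condition (N3)) --- your variant would in any case need a stochastic-equicontinuity input of essentially that kind, so this is a wash.

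The genuine gap is in your application of DRW's central limit theorem. That theorem (Theorem 2.9 of DRW, Theorem A.2 in the paper) requires, in addition to (S1) with $n=2$, the almost-sure continuity and envelope condition (S2) and the functional-dependence condition (M1) for the stationary approximation of the score, namely $\sum_{t\geq 0}\sup_{u\in[0,1]}\norm{\boldsymbol{\lambda}^{\intercal}\widetilde{\mathbf{s}}_t(u,\boldsymbol{\theta}_0)-\boldsymbol{\lambda}^{\intercal}\widetilde{\mathbf{s}}_t^{e}(u,\boldsymbol{\theta}_0)}_2<\infty$, where $\widetilde{\mathbf{s}}_t^{e}$ is the coupled version with $\varepsilon_0$ replaced by an independent copy. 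The martingale-difference property identifies the limiting variance but does not substitute for (M1): without verifying it, the theorem you cite cannot be invoked at all. This verification is a substantial, non-trivial part of the paper's proof, combining the decomposition of the score difference, the geometric-moment contraction of stationary GARCH (Wu's Proposition 3), DRW's invariance proposition with $\widetilde{n}=4$ (which is where the fourth-moment assumptions bite a second time), and an arithmetic-geometric series bound. Relatedly, DRW's CLT is stated for scalar triangular arrays, so the multivariate conclusion needs the Cram\'{e}r--Wold device applied to arbitrary linear combinations $\boldsymbol{\lambda}^{\intercal}\mathbf{s}_{t,T}(\boldsymbol{\theta}_0)$, which the paper does explicitly and you leave implicit. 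An alternative that would genuinely exploit your martingale-difference observation is a triangular-array martingale CLT (Lindeberg condition plus convergence of conditional variances), which would bypass (M1) altogether --- but that is a different argument from the one you wrote down, and it would still need its own verifications.
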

\begin{proof}
    See Appendix A.5.
\end{proof}

\begin{rmk} To obtain asymptotic normality, we show that a condition of the type (\ref{eq:statespaceS1}) with $n=2$ holds for the score. Further, the proof requires a moment condition and a mixing condition on the stationary approximation $\widetilde{\mathbf{s}}_{t}(u,\boldsymbol{\theta})$ of the score. 
\end{rmk}
\begin{rmk} To show that Theorem 2 remains true for the truncated estimator $\bar{\boldsymbol{\theta}}_{T}$, it suffices to show that $|\widehat{\boldsymbol{\theta}}_{T}-\bar{\boldsymbol{\theta}}
_{T}|=O(T^{-1})$ almost surely. Asymptotic normality of $\sqrt{T}(\bar{\boldsymbol{\theta}}_{T}-\boldsymbol{\theta}_{0})$
is then an immediate consequence of Theorem 2 (cf. BHK, proof of Theorem 4.4, p. 226).
\end{rmk}

\section{Simulation study}

In order to examine the approximations in Theorems 1 and 2, we carry out a simulation study. We use time series lengths $T = 3000$ and  $6000$. The number of Monte Carlo replications is $10000$. The GARCH$(1,1)$ parameters in the data-generating processes (DGPs) are $\alpha _{0}=0.05$, $\alpha _{1}=0.1$ and $\beta _{1}=0.8$. We consider three different DGPs for $g(t/T;\boldsymbol{\theta }_{1})=\alpha_{01} G(t/T; \gamma, c)$, with $\alpha _{01}=0.15$. This implies that the intercept increases by a factor of four. The value of the linear parameter $\alpha_{01}$ is empirically relevant. In the empirical example in the next section, the intercept is reduced by a factor of five. We experimented with other values for $\alpha_{01}$ and found that the approximations work better in finite samples when the transition is a pronounced feature of the data. We choose $\gamma $ such that $G(a;\gamma ,c)=0.01$ and $G(b;\gamma ,c)=0.99$. The values $a=0.1$ and $b=0.9 $ in DGP 1 define a slow transition with $80\%$ of the observations affected by the transition. The values $a=0.25$ and $b=0.75$ in DGP 2 define a moderate transition with $50\%$ of the observations affected by the transition. Finally, the values $a=0.4 $ and $b = 0.6$ in DGP 3 define a rapid transition with $20\%$ of the observations affected by the transition. The value for $c$ is $c=0.5$. The parameter values are contained in Table \ref{tab:simstudyspecs}. Figure 3 plots the transition functions of DGPs 1--3. The errors $\{\varepsilon_t\}$ are NID$(0,1)$. To reduce the impact of the starting values, we use a burn-in period of $500$ observations. 

\begin{figure}[h]
    \centering
    \caption{Transition functions of DGPs 1--3.}
    \label{fig:gfunctions}
    \includegraphics[width=0.95\linewidth]{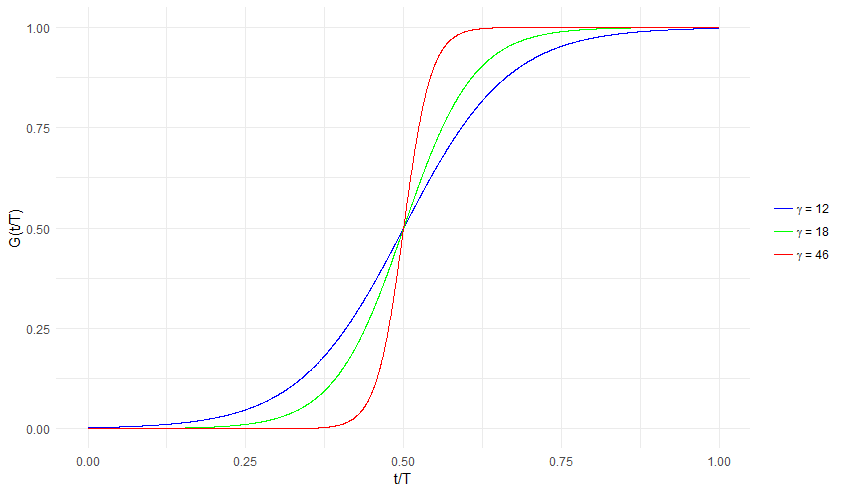}
    \centering
  
\end{figure}

ML estimation is carried out using the solver \texttt{solnp} by \textcite{Ye1987}, implemented in the \texttt{R} package \texttt{Rsolnp} by \textcite{Rsolnp}.  The recursion for the conditional variance (\ref{eq:frecht}) is truncated at $200$ observations. This has a negligible impact on the conditional variance, but it drastically speeds up estimation. In the estimation we impose $\alpha _{1}+\beta _{1}<1$. For large values of $\gamma $, its impact on the shape of the transition function becomes small and the log-likelihood function is flat in the direction of this parameter. To improve the numerical accuracy of the estimate of $\gamma$, we apply the transformation $\gamma =\eta (1-\eta )^{-1}$ proposed by \textcite{ekner2013parameter}. We set the maximum value for $\eta$ equal to $0.999$, which was reached in a small number of replications. They were discarded and new draws obtained until the number of replications reached $10000$. For each DGP, we report in Table \ref{tab:simstudyspecs} the frequency with which this happens. The starting values for the parameters $(\alpha _{0},\alpha _{1},\beta _{1},\gamma,c,\alpha
_{01} )$ are set to the true values in the DGPs.

\begin{table}[tbp] \centering
\caption{DGPs in the simulations. Disc denotes the percentage of replications discarded.}\vspace{0.5cm}%
\begin{tabular}[b]{llllllll}
\hline
DGP & Transition & $a$ & $b$ & $\gamma $ & $c$ & Disc & \\ 
\hline
&  &  &  &  & & $T=3000$ & $T=6000$   \\ 
1 & Slow & $0.1$ & $0.9$ & $12$ & $0.5$  & 0.575\% & 0.033\%\\ 
2 & Medium & $0.25$ & $0.75$ & $18$ & $0.5$  & 1.025\% & 0.042\%   \\ 
3 & Rapid & $0.4$ & $0.6$ & $46$ & $0.5$  & 6.558\% & 1.425\%  \\ \hline
\end{tabular}
\label{tab:simstudyspecs}
\end{table}

Table 2 reports the means and standard deviations of the parameter estimates. The estimates converge to the true parameter values when $T$ increases. The results for $\widehat{\alpha }_{1}$ show that the ARCH parameter $\alpha _{1}$ is accurately estimated in moderate samples. The estimates $\widehat{\alpha }_{0}$ and $\widehat{\alpha }_{01}$ converge
from above, whereas the estimate $\widehat{\beta }_{1}$ converges from below. In small samples, the effect of the overestimated intercept parameters $\alpha _{0}$ and $\alpha _{01}$ on the unconditional variance is offset by an underestimated $\beta_1.$ Consequently, the time-varying intercept is more accurately reproduced than suggested by the individual parameter estimates. The location parameter $c$ and the slope parameter $\gamma $ are well estimated in moderate samples. For large values of $\gamma $, the estimate $\hat{\gamma}$ is less precise. As $\gamma $ increases, the first derivative of the logistic transition function with respect to $\gamma $ increases around $c$, making the Lipschitz constant larger. This means that the number of observations affected by the transition gets smaller.

Figures 4--9 show the distributions of the standardised parameter estimates. 
It is seen that the empirical distributions of the standardised parameter estimates of the GARCH parameters $(\alpha _{0},\alpha _{1},\beta _{1})$
are close to normal for moderate sample sizes. For the estimates of the nonlinear parameters $\gamma $ and $c$, longer time series are required for the normal approximation to be good. As the increasing Lipschitz constant suggests, the quality of the normal approximation deteriorates with an increasing $\gamma $. In some replications the algorithm gets stuck at the initial value for $\gamma $, which results in a spike in the simulated distribution of the standardised parameter estimates. 

The ATV-GARCH model is a model for long financial time series which exhibit gradual change in the unconditional volatility. A consequence of this is that long time series are required to provide enough observations affected by the transition, so that the normal approximation to $\widehat{\gamma }$ is a good approximation. From a practical point of view, the logistic transition function is an approximation of a non-linear structure. In empirical applications of the ATV-GARCH model the shape of the transition function implied by the parameter estimates of $\gamma $ and $c$ is more important than the exact estimates and their standard errors.
\begin{table}
\caption{Simulated means and standard deviations of the parameter estimates.}
\centering
\begin{tabular}[t]{lrrrrrr}
\toprule
 &\multicolumn{3}{c}{GARCH} & \multicolumn{3}{c}{$G$} \\
\cmidrule(l{3pt}r{3pt}){2-4} \cmidrule(l{3pt}r{3pt}){5-7}
  & $\alpha_0$ & $\alpha_1$ & $\beta_1$ & $\eta$ & $c$ & $\alpha_{01}$\\
\midrule
$\gamma = 12$ & & & & & & \\
\cline{1-1}
3000 &$\underset{(0.016)}{0.056}$ &$\underset{(0.017)}{0.101}$ &$\underset{(0.038)}{0.786}$ &$\underset{(0.032)}{0.923}$ &$\underset{(0.058)}{0.507}$ &$\underset{(0.059)}{0.176}$\\
6000 &$\underset{(0.010)}{0.053}$ &$\underset{(0.012)}{0.100}$ &$\underset{(0.026)}{0.793}$ &$\underset{(0.020)}{0.923}$ &$\underset{(0.033)}{0.502}$ &$\underset{(0.034)}{0.161}$\\
\midrule
$\gamma = 18$ & & & & & & \\
\cline{1-1}
3000 &$\underset{(0.015)}{0.056}$ &$\underset{(0.017)}{0.101}$ &$\underset{(0.038)}{0.786}$ &$\underset{(0.021)}{0.948}$ &$\underset{(0.034)}{0.502}$ &$\underset{(0.048)}{0.172}$\\
6000 &$\underset{(0.010)}{0.053}$ &$\underset{(0.012)}{0.100}$ &$\underset{(0.026)}{0.793}$ &$\underset{(0.014)}{0.948}$ &$\underset{(0.022)}{0.500}$ &$\underset{(0.031)}{0.160}$\\
\midrule
$\gamma = 46$ & & & & & & \\
\cline{1-1}
3000 &$\underset{(0.014)}{0.056}$ &$\underset{(0.017)}{0.100}$ &$\underset{(0.038)}{0.786}$ &$\underset{(0.012)}{0.978}$ &$\underset{(0.018)}{0.501}$ &$\underset{(0.045)}{0.171}$\\
6000 &$\underset{(0.010)}{0.053}$ &$\underset{(0.012)}{0.100}$ &$\underset{(0.026)}{0.793}$ &$\underset{(0.008)}{0.979}$ &$\underset{(0.012)}{0.500}$ &$\underset{(0.030)}{0.160}$\\
\bottomrule
\multicolumn{7}{l}{\rule{0pt}{1em}\textit{Note: }}\\
\multicolumn{7}{l}{\rule{0pt}{1em}\small{See Table \ref{tab:simstudyspecs} for details on the simulation design. }}\\
\multicolumn{7}{l}{\rule{0pt}{1em}\small{We report the estimate of the transformed parameter $\eta$. }}\\
\multicolumn{7}{l}{\rule{0pt}{1em}\small{The value $\gamma=12$ corresponds to $\eta=0.923$. }}\\
\multicolumn{7}{l}{\rule{0pt}{1em}\small{The value $\gamma=18$ corresponds to $\eta=0.947$. }}\\
\multicolumn{7}{l}{\rule{0pt}{1em}\small{The value $\gamma=46$ corresponds to $\eta=0.979$. }}\\
\multicolumn{7}{l}{\rule{0pt}{1em}\small{Sample standard deviations in parentheses.}}\\
\end{tabular}
\end{table}

\newpage 

\begin{figure}[h]
    \centering
    \caption{Simulated distributions of the standardised parameter estimates. DGP 1: $\gamma = 12$, $T=3000$.}
    \label{fig:2.2}
    \includegraphics[width=0.95\linewidth]{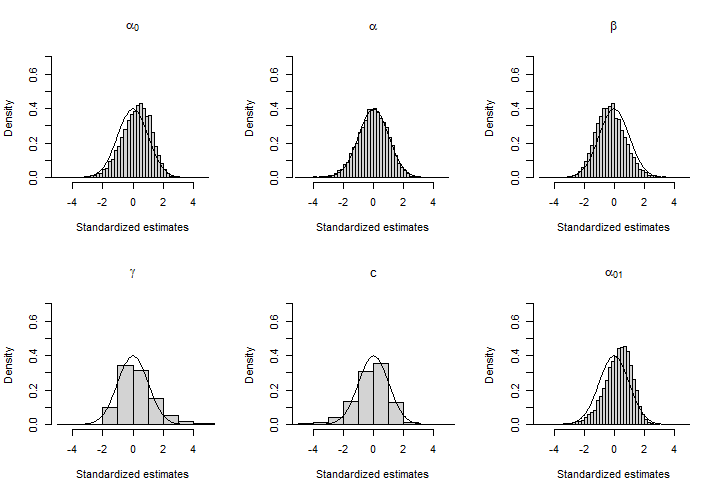}
    \centering
\end{figure}

\begin{figure}[h]
    \centering
    \caption{Simulated distributions of the standardised parameter estimates. DGP 1: $\gamma = 12$, $T=6000$.}
    \label{fig:3.2}
    \includegraphics[width=0.95\linewidth]{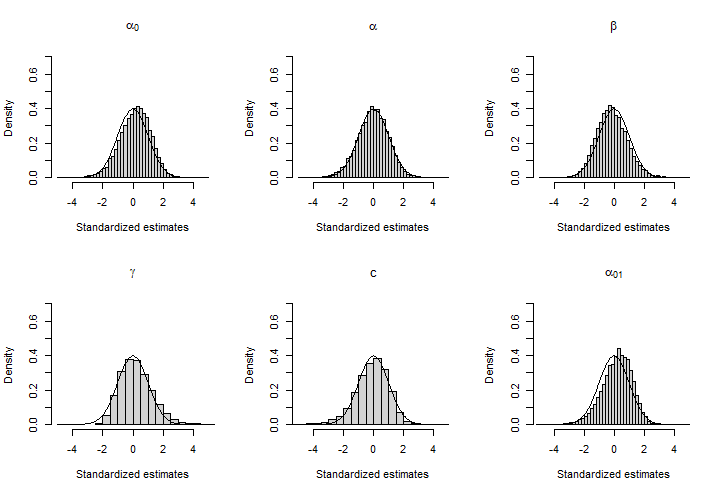}
    \centering
\end{figure}

\begin{figure}[h]
    \centering
    \caption{Simulated distributions of the standardised parameter estimates. DGP 2: $\gamma = 18$, $T=3000$.}
    \label{fig:4.2}
    \includegraphics[width=0.95\linewidth]{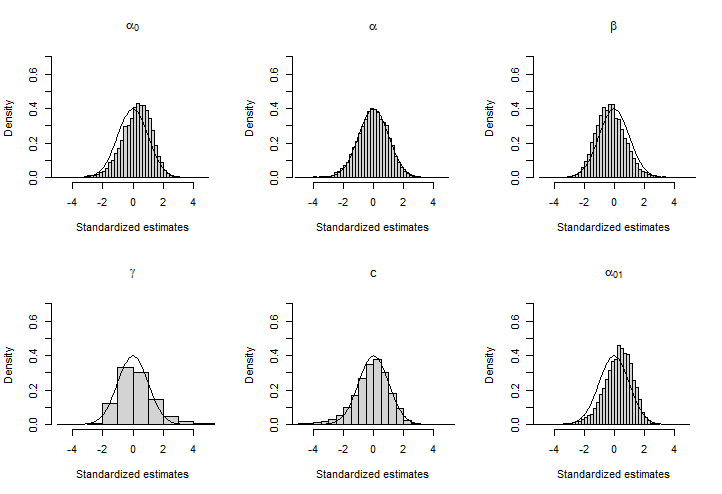}
    \centering
\end{figure}

\begin{figure}[h]
    \centering
    \caption{Simulated distributions of the standardised parameter estimates. DGP 2: $\gamma = 18$, $T=6000$.}
    \label{fig:5.2}
    \includegraphics[width=0.95\linewidth]{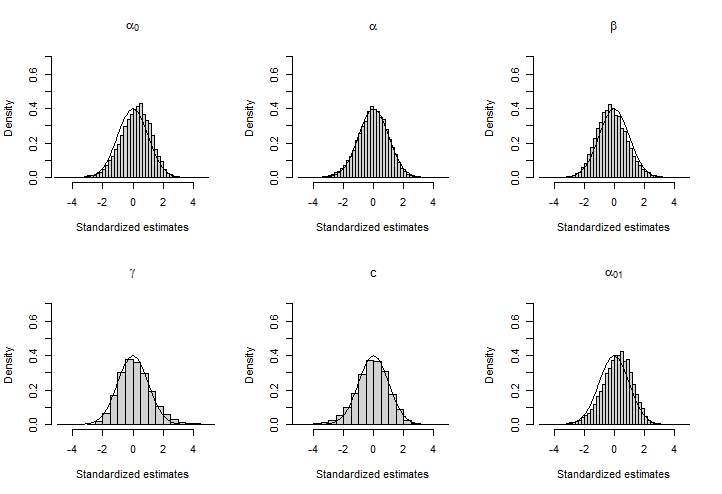}
    \centering
\end{figure}

\begin{figure}[h]
    \centering
    \caption{Simulated distributions of the standardised parameter estimates. DGP 3: $\gamma = 46$, $T=3000$.}
    \label{fig:6.2}
    \includegraphics[width=0.95\linewidth]{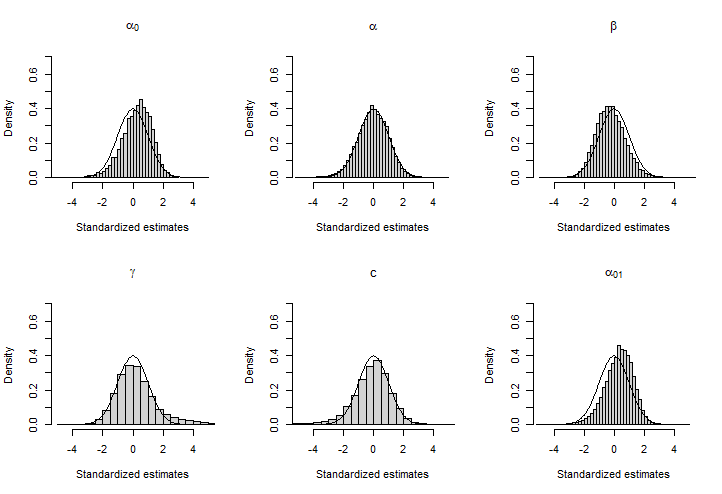}
    \centering
\end{figure}

\begin{figure}[h]
    \centering
    \caption{Simulated distributions of the standardised parameter estimates. DGP 3: $\gamma = 46$, $T=6000$.}
    \label{fig:7.2}
    \includegraphics[width=0.95\linewidth]{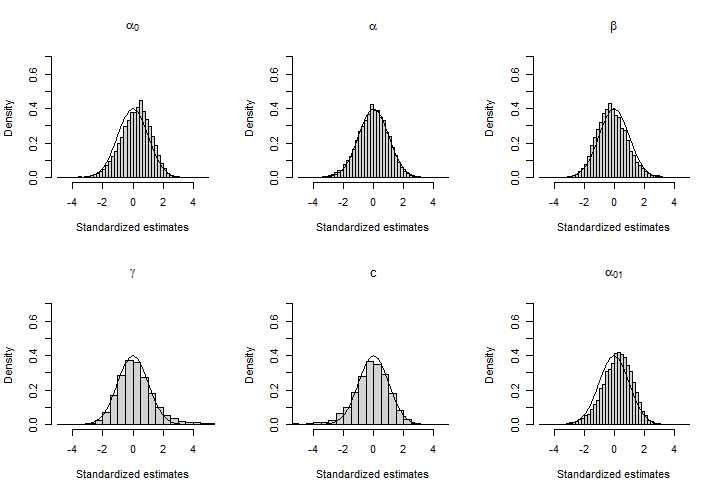}
    \centering
\end{figure}
\newpage

\pagebreak

\section{Empirical example}
As an example of the use of the ATV-GARCH model we consider Oracle Corporation (ticker: ORCL) daily closing prices on the New York Stock Exchange from the beginning of its listing on 12 March 1986 until 13 February 2023, 9306 observations in all. The data are downloaded from Yahoo Finance. The series and the corresponding logarithmic returns are depicted in Figure 10. As can be seen, up until the early 2000s the amplitude of the volatility clusters is large, and the descent that follows corresponds to the time when the dot-com bubble bursts. Summary statistics for the log returns appear in Table 3, Panel A. They show that the robust skewness measure based on quartiles does not indicate any skewness. This suggests that the observed nonrobust skewness is mainly caused by a small number of large negative returns being greater than their positive counterparts. As may be expected, both measures of kurtosis indicate leptokurtosis in the returns.

The results of fitting a standard GARCH$(1,1)$ model to the logarithmic returns can be found in Table 3, Panel B. The estimated persistence is $\widehat{\alpha }_{1}+\widehat{\beta }_{1} > 0.999$, which indicates that the unconditional variance may not be constant over time. This tentative conclusion is supported by the LM-type tests discussed in Section 2.1. The table contains results from both the nonrobust and robust (robustified against departures from distributional assumptions as in  \cite{wooldridge1990}) versions of the tests. The statistics are asymptotically $\chi^{2}$-distributed with three degrees of freedom under the null hypothesis of constant unconditional variance. Both tests strongly reject the null hypothesis.

The parameter estimates and standard errors of the fitted ATV-GARCH(1,1) model can be found in Table \ref{tab:empapp}, Panel C. The results show that the persistence, compared to the standard GARCH, decreases from above $0.999$ to $0.94$. This lends credence to the hypothesis in Diebold (1986). 
 The fitted variance and transition function for the period $1998-2005$ are shown in Figure \ref{fig:oraclmodel}. We note that the fitted model agrees with the informal discussion in the beginning of this section. The smoothly time-varying intercept starts out high and exhibits a precipitous but discernibly gradual decline during the years $2001-2004$, which coincides with the bursting of the tech bubble.
To put the value $\widehat{\eta} =0.994$ of the slope parameter into perspective, the reverse transformation yields $\widehat{\gamma } =166$. Very small changes in $\eta$, when it is large, translate into substantial changes in $\gamma $. From Table \ref{tab:empapp} it is seen that the standard deviation of $\eta $ equals 0.003. The range $\widehat{\eta } \pm 0.003$ corresponds to the interval $(110, 332)$ for $\widehat{\gamma }$.

As can be seen from Table 3, Panel B, testing one transition against two  does not lead to a rejection of the null hypothesis. Our conclusion is that the ATV-GARCH$(1,1)$ model with a single transition is an adequate first-order GARCH-type description of the volatility of the Oracle return series.

\begin{figure}[h]
     \centering
     \caption{Oracle Corporation stock price (ORCL) on the New York Stock Exchange, 12 March 1986 to 13 February 2023.}
     \begin{subfigure}[b]{0.7\textwidth}
         \centering
         \includegraphics[width=\textwidth]{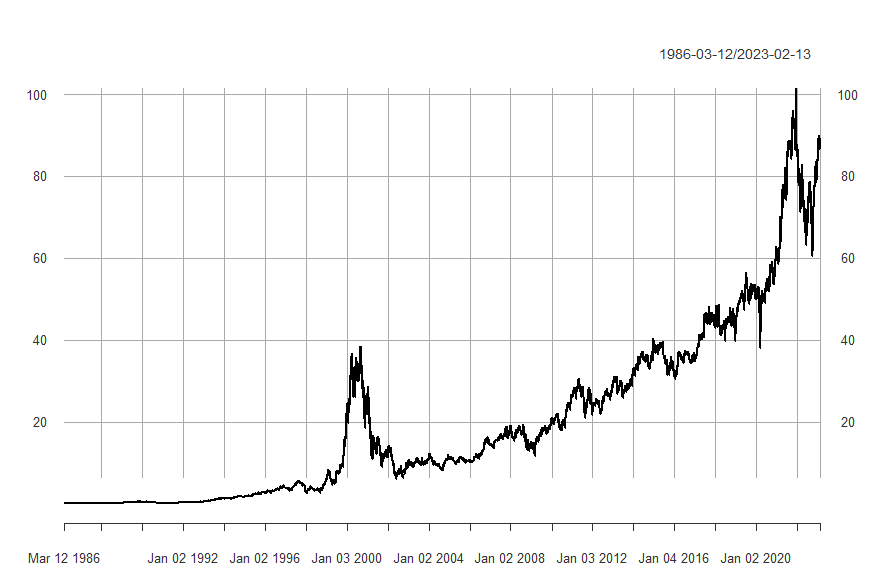}
         \caption{Oracle Corporation stock price.}
         \label{fig:oraclprices}
     \end{subfigure}
     \\
     \begin{subfigure}[b]{0.7\textwidth}
         \centering
         \includegraphics[width=\textwidth]{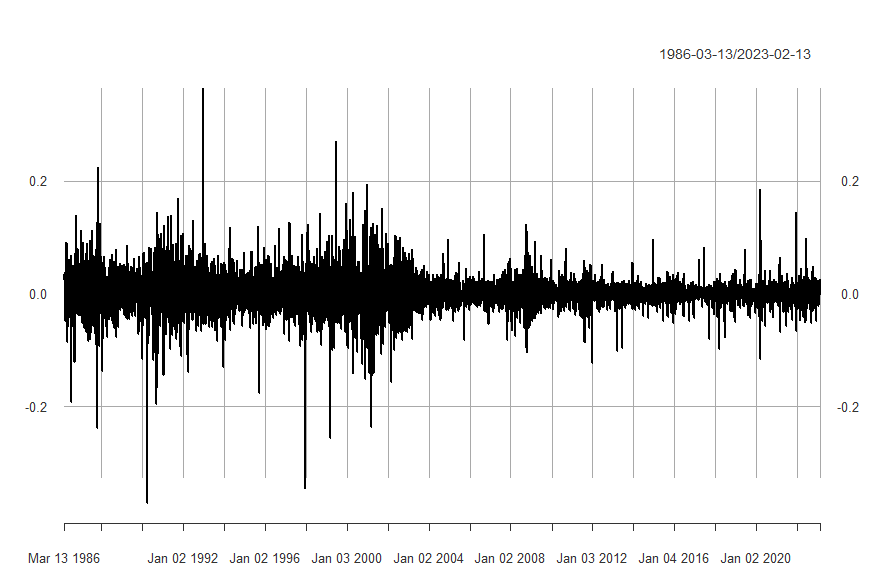}
         \caption{Oracle Corporation log returns.}
         \label{fig:oracllogrets}
     \end{subfigure}
     \label{fig:oraclestock}
\end{figure}

\begin{figure}[h]
     \centering
     \caption{ATV-GARCH$(1,1)$ model of Oracle log returns}
     \begin{subfigure}[b]{0.7\textwidth}
         \centering
         \includegraphics[width=\textwidth]{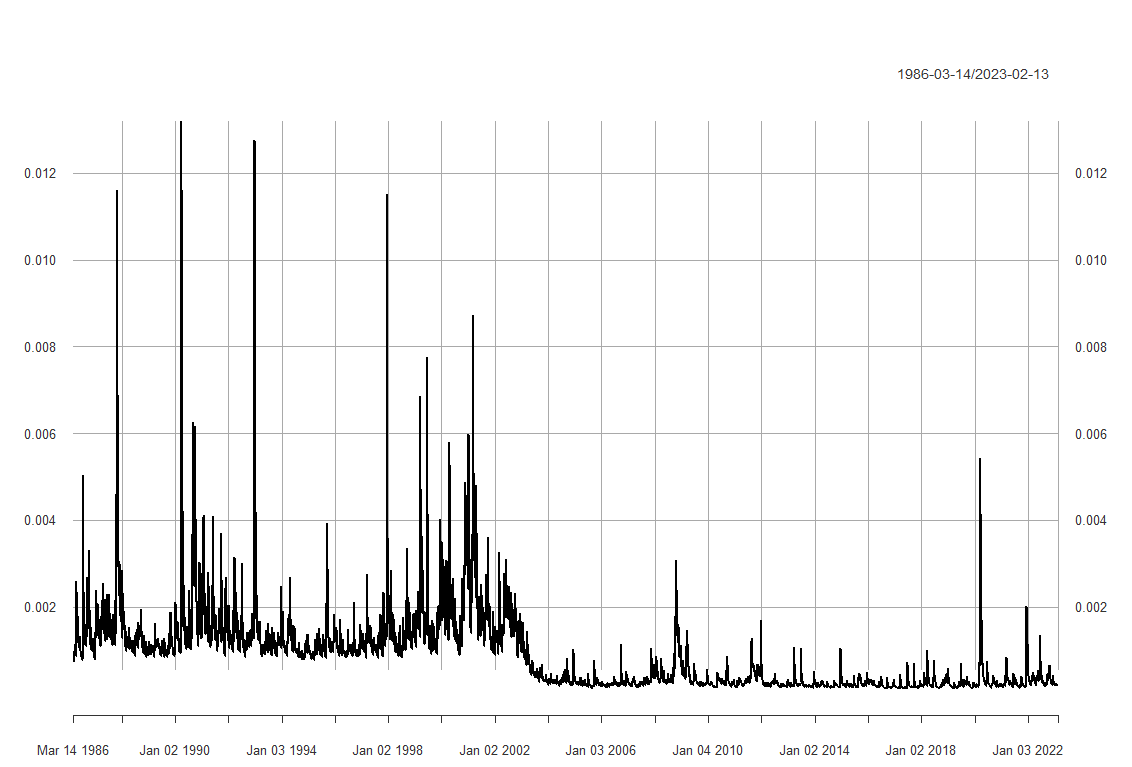}
         \caption{The fitted conditional variance.}
         \label{fig:oraclcondvar}
     \end{subfigure}
     \\
     \begin{subfigure}[b]{0.7\textwidth}
         \centering
         \includegraphics[width=\textwidth]{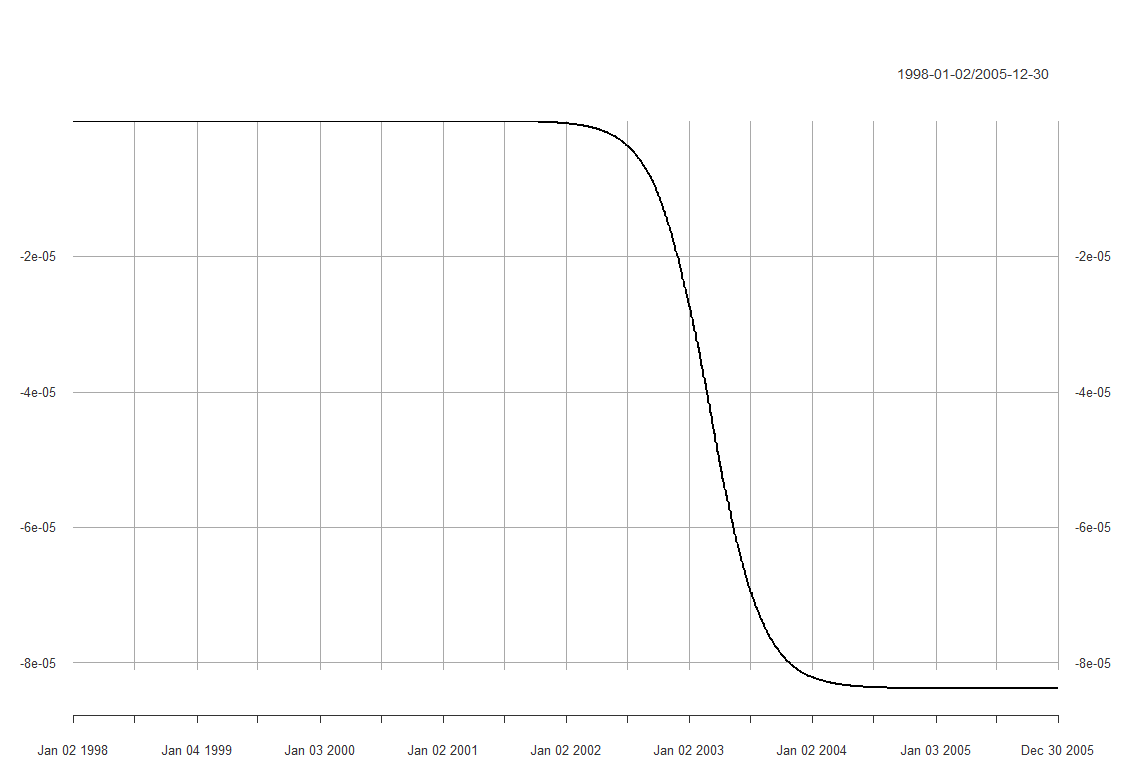}
         \caption{The fitted logistic transition function 1998--2005.}
         \label{fig:oraclgt}
     \end{subfigure}
     \label{fig:oraclmodel}
\end{figure}

\begin{table}[h]
\centering
\caption{Results of the empirical application to Oracle logarithmic returns 1986--2023.}
\resizebox{\columnwidth}{!}{
\begin{tabular}{llllllllll} 
\toprule
\textbf{Panel A. Summary statistics}   & ~                          & ~                         & ~                       & ~                          & ~                         & ~                          & ~              & ~                         & ~             \\ 

Series            & Mean   & Sd    & Med & Min    & Max   & Skew   & R Skew & Kurt & R Kurt   \\ 

ORCL             & 0.001 & 0.029 & 0.000   & $-0.372$ & 0.364 & $-0.168$ & 0.031   & 16.781  &  0.332\\ 
\bottomrule
\textbf{Panel B. Misspecification tests} &                            &                           &                         &                            &                           &                            &          & &                  \\ 

Null              & $LM(3)$                       & R $LM(3)$                     & $p$-val                  & R $p$-val                    & $\widehat{\alpha}_1$                      &  $\widehat{\beta}_1$                            & $\widehat{\alpha}_1 + \widehat{\beta}_1$       & &                       \\ 

$L = 0$               & 34.582                          & 27.281                          &  $<0.000$                       & $<0.000$                           & 0.081                          & 0.917                           &  $> 0.999$            & &              \\

$L=1$                & 4.641                          & 3.841                          &  $0.200$                       & $0.275$                           & 0.087                          & 0.854                           &  0.940     & &                   \\
   
\bottomrule
\textbf{Panel C. ATV-GARCH$(1,1)$ model} &                            &                           &                         &                            &                           &                            &                  & &          \\ 

Parameter              & Estimate                       & Se                    & R se                 &                    &                      &                            &   & &                         \\ 

$100 \times \alpha_{0}$               &  0.010                         &   0.000                        &   0.000                     &                         &                    &                            &                & &            \\

$\alpha_{1}$               &  0.087                          &   0.002                        &   0.002                      &                           &                         &                            &       & &    \\

$\beta_{1}$              &  0.854                          & 0.003                          &    0.005                   &                         &                        &                            &       & &    \\   

$\eta$               &  0.994                          &  0.003                        &  0.004                      &                         &                       &                            &       & &    \\   

$c_1$               &  0.460                        & 0.003                         &   0.004                                                &                       &                            &      & &     \\   

$100 \times \alpha_{01}$               &  $-0.008$                          & 0.000                         &  0.000                      &                            &                     &                            &     & &      \\   
\bottomrule
\end{tabular}
}
\raggedright{{\\ \footnotesize \emph{Note:} \\ Mean is the average value of the series over the sample period, Sd is the standard deviation and Med is the median. Min and Max are the minimum and maximum values, respectively. Skew is the skewness and Kurt is the kurtosis. R Skew and R Kurt are robust and centered measures of skewness and kurtosis (Bowley's skewness and Moors's Kurtosis). Null is the null hypothesis of the misspecification test. $LM(3)$ is the LM statistic and R $LM(3)$ is the robust LM statistic. The table reports the transformed parameter $\eta$. Se denotes standard error. The standard errors are computed using numerical approximations of the score and Hessian. 
 }} 
\label{tab:empapp}
\end{table}

\section{Conclusion}

We have introduced a new GARCH model with a deterministic time-varying intercept, called the ATV-GARCH model. In this model, volatility is mean-reverting towards a time-varying mean. The model captures structural change that slowly affects the amplitude of a time series while keeping the short-run dynamics constant over time. The idea is that short-run fluctuations in volatility are stationary, but long-run change makes the assumption of stationarity inappropriate. The model is particularly well suited for situations in which the volatility of an asset or index is smoothly increasing or decreasing over time. 

The ATV-GARCH model is globally nonstationary but can be locally approximated by a stationary GARCH process. The stationary approximation enables the derivation of asymptotic results. We show that the QMLE of the parameters of the ATV-GARCH model is consistent and asymptotically normal. The asymptotic theory of QML estimation of the ATV-GARCH model relies on assumptions that are straightforward and comparable to the ones in strictly stationary GARCH processes.

We apply our model to Oracle Corporation daily stock returns. As LM tests find strong evidence of a smooth transition in the intercept, we fit an ATV-GARCH$(1,1)$ model to the series. The persistence implied by the standard GARCH parameter estimates is substantially reduced. We conjecture that, by using an ATV-GARCH model, the estimated persistence in volatility is reduced in many situations. This reduction would likely impact the results of empirical studies in which GARCH models are fitted to long financial time series with the purpose of forecasting volatility. It is possible that the slowly changing intercept may generate spurious evidence of long memory in volatility, leading to an IGARCH or a FIGARCH model. This is an empirical problem which is left for future research.

\clearpage

\section*{Appendix: Proofs}
This appendix contains the proofs of consistency and asymptotic normality of the QMLE of the parameters of the ATV-GARCH model. We use results from DRW for processes that can be locally approximated by stationary processes.
Following \textcite{dahlhaus1997} and \textcite{DahlhausSR06}, such processes are called locally stationary. As shown by Subba Rao (2006), the time-varying GARCH process is included in the class of locally stationary processes. By Proposition 1 in the main text, the ATV-GARCH model is locally stationary. To show that the QMLE of the parameters of the ATV-GARCH model is consistent, we mainly need the global law of large numbers of DRW (Theorem 2.7(i)). Application of the theorem to the log-likelihood function requires showing Assumption 2.1(S1) in DRW is satisfied with $n=1$ for the log likelihood function and its stationary approximation. To prove asymptotic normality, we apply the global central limit theorem of DRW (Theorem 2.9) to the elements of the score. In order to do this, we have to show that Assumptions 2.1(S1), (S2) and 2.3(MI) in DRW hold with $n=2$ for the elements of the score and their stationary approximations.

The notation from the main text is used. Section A.1 contains the assumptions from DRW that we have to verify. In Section A.2 we present the global law of large numbers and the central limit theorem from DRW that will be used in the derivations. Section A.3 contains some intermediate results. The proofs of Theorems 1 and 2 can be found in Sections A.4 and A.5, respectively. Finally, Section A.6 contains the derivations for the truncated QMLE. 

\subsection*{A.1 Assumptions from DRW}
\label{section:ADRW}
We begin by stating Assumptions 2.1(S1), (S2) and 2.3(M1) in DRW. The assumptions are referred to as (S1), (S2) and (M1), respectively. Assumptions (S1) and (S2) are concerned with the existence of a stationary approximation. Let $n>0$ and $\norm{X}_n = \left(\mathbb{E}\left|X\right|^n \right )^{1/n}$. Let $X_{t,T}$, $t=0,1, \ldots T$, be a triangular array of stochastic processes. For each $u\in[0,1]$, let $\widetilde{X}_t(u)$ be a stationary and ergodic process such that the following assumptions hold.
\begin{asmpt*}[S1]
(i) $\sup_{u\in[0,1]}\norm{\widetilde{X}_{t}(u)}_n<\infty.$ (ii) There exist $1\geq\alpha>0$, $C>0$ such that uniformly in $t=1,\ldots,T$ and $u,v\in[0,1]$,
$$\norm{\widetilde{X}_{t}(u)-\widetilde{X}_{t}(v)}_n\leq C\left|u-v\right|^{\alpha}, \quad \norm{X_{t,T}-\widetilde{X}_t(t/T)}_n\leq CT^{-\alpha}.$$
\end{asmpt*}
Throughout, we set $\alpha=1$ in (S1).
\begin{asmpt*}[S2]
$u\mapsto \widetilde{X}_t(u)$ is almost surely continuous for all $t\in\mathbb{Z}$ and $\norm{\sup_{u\in [0,1]}\left|\widetilde{X}_{t}(u)\right|}_n<\infty$.
\end{asmpt*}

Finally, Assumption (M1) is a mixing condition on the stationary approximation $\widetilde{X}_t(u)$, required for the proof of asymptotic normality. Let $\varepsilon_t$, $t\in\mathbb{Z}$, be a sequence of independent and identically distributed random variables. For $t\geq0$, define the $\sigma $-fields $\mathcal{F}_{t}=\sigma (\varepsilon _{t},\varepsilon _{t-1}, \ldots)$ and $\mathcal{F}_{t}^{e 0}=\sigma (\varepsilon _{t},\varepsilon _{t-1}, \ldots, \varepsilon_{1}, 
\varepsilon_{0}^{e }, \varepsilon_{-1}, \ldots)$,
where $\varepsilon _{0}^{e }$ has the same distribution as $\varepsilon
_{0}$ and is independent of all $\varepsilon _{t}$, $t\in \mathbb{Z}$. The functional dependence measure for the stationary process $\widetilde{X}_{t}(u)$ is
defined as (see Wu 2011, p. 2) 
\begin{equation*}
\delta^{\widetilde{X}(u)}_{n}(t)=\left\Vert \widetilde{X}_{t}(u)-\widetilde{X}_{t}^{e}(u)\right\Vert _{n},
\end{equation*}
where $\widetilde{X}_{t}^{e}(u)$ is a coupled version of $\widetilde{X}_{t}(u)$
with $\varepsilon _{0}$ in the latter being replaced by $\varepsilon
_{0}^{e}$. The process is said to be $n$-stable if $\sum_{t=0}^{\infty }\delta^{\widetilde{X}(u)}_{n}(t)<\infty$ (see Wu 2011, p. 3).
\begin{asmpt*}[M1, dependence measure of the stationary approximation]
For some $n>0$ and each $u\in[0,1]$, there exists a measurable function $H(u,\cdot)$ such that $\widetilde{X}_{t}(u)=H(u,\mathcal{F}_{t})$ and $\delta_{n}^{\widetilde{X}}(t):= \sup_{u\in[0,1]} \delta_{n}^{\widetilde{X}(u)}(t)$ fulfils $\sum_{t=0}^{\infty }\delta_{n}^{\widetilde{X}}(t)<\infty.$
\end{asmpt*}
\begin{rmk}
The stationary approximation $\widetilde{X}_{t}(u)$ of the ATV-GARCH model is a standard GARCH process with an intercept given by $\alpha_0 + g_t(u, \boldsymbol{\theta}_1)$. \textcite{BougerolPicard} proved that under a Lyapunov exponent condition (sufficient conditions are (A5) and $\mathbb{E}(\varepsilon_0^2) = 1$), the standard GARCH process admits a stationary and causal solution; see also Subba Rao (2006). It follows that  $\widetilde{X}_{t}(u)=H(u, \mathcal{F}_t)$ in (M1). The general theory in DRW is for a nonlinear process defined by a recursion and then the existence of a measurable function $H(u,\mathcal{F}_{t})$ needs to be explicitly included in (M1).  
\end{rmk}

\subsection*{A.2 Global law of large numbers and central limit theorem}
\label{section:ThmDRW}

We will use the following global law of large numbers (Theorem A.1) and central limit theorem (Theorem A.2) from DRW in our derivations. DRW include an assumption about a kernel in the statements of Theorems A.1 and A.2 below. Since our model is parametric, we will not be requiring this assumption. 
\newtheorem*{thmappendix}{Theorem A.1}
\begin{thmappendix}[DRW, Theorem 2.7(i)] 
Suppose that Assumption (S1) holds with $n=1$ for the process $X_{t,T}$. Then
\begin{equation*}
\frac{1}{T} \sum_{t=1}^{T} X_{t,T} \overset{P}{\to} \int_0^1 \mathbb{E} \widetilde{X}_{0}(u)\ \text{d}u.
\end{equation*}
\end{thmappendix}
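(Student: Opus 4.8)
The plan is to establish convergence in $L^{1}$ to the deterministic constant $\int_0^1\mathbb{E}\widetilde{X}_0(u)\,\text{d}u$, which in particular yields the asserted convergence in probability. The whole argument rests on a four-fold decomposition: replace the array $X_{t,T}$ by its stationary approximation $\widetilde{X}_t(t/T)$, freeze the time argument on a fine partition of $[0,1]$, invoke the ergodic theorem on each block, and finally recognize the resulting averages of block means as a Riemann sum. Fix an integer $N$, partition $[0,1]$ into the intervals $I_k=((k-1)/N,k/N]$, $k=1,\ldots,N$, pick $u_k\in I_k$, and let $k(t)$ be the index with $t/T\in I_{k(t)}$. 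By the triangle inequality,
\begin{align*}
\norm{\frac{1}{T}\sum_{t=1}^{T}X_{t,T}-\int_0^1\mathbb{E}\widetilde{X}_0(u)\,\text{d}u}_1
&\leq \frac{1}{T}\sum_{t=1}^{T}\norm{X_{t,T}-\widetilde{X}_t(t/T)}_1
+\frac{1}{T}\sum_{t=1}^{T}\norm{\widetilde{X}_t(t/T)-\widetilde{X}_t(u_{k(t)})}_1 \\
&\quad+\norm{\frac{1}{T}\sum_{t=1}^{T}\widetilde{X}_t(u_{k(t)})-\frac{1}{N}\sum_{k=1}^{N}\mathbb{E}\widetilde{X}_0(u_k)}_1
+\left|\frac{1}{N}\sum_{k=1}^{N}\mathbb{E}\widetilde{X}_0(u_k)-\int_0^1\mathbb{E}\widetilde{X}_0(u)\,\text{d}u\right|.
\end{align*}

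I would first dispose of the three ``easy'' terms. The first is bounded by $C/T$ through the second inequality in (S1)(ii) with $\alpha=1$, and hence vanishes as $T\to\infty$ for every fixed $N$. The second (``freezing'') term is controlled by the first inequality in (S1)(ii): since $|t/T-u_{k(t)}|\leq 1/N$, it is at most $C/N$, a bound that is uniform in $T$. The last term is the error of a Riemann sum for $u\mapsto\mathbb{E}\widetilde{X}_0(u)$, which is continuous because (S1)(ii) gives $|\mathbb{E}\widetilde{X}_0(u)-\mathbb{E}\widetilde{X}_0(v)|\leq\norm{\widetilde{X}_0(u)-\widetilde{X}_0(v)}_1\leq C|u-v|^{\alpha}$; it therefore tends to $0$ as $N\to\infty$ and is independent of $T$.

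The substantive step is the third term. For fixed $N$ I would group the summands by block: on block $k$ the summand is $\widetilde{X}_t(u_k)$ with $u_k$ frozen, and $\{\widetilde{X}_t(u_k)\}_{t}$ is stationary and ergodic by hypothesis. The inner sum runs over the contiguous window $t\in(\lfloor(k-1)T/N\rfloor,\lfloor kT/N\rfloor]$, a shifted Cesàro average; writing $\tfrac1T\sum_{t=a_k+1}^{b_k}\widetilde{X}_t(u_k)=\tfrac{b_k}{T}\cdot\tfrac{1}{b_k}\sum_{t=1}^{b_k}\widetilde{X}_t(u_k)-\tfrac{a_k}{T}\cdot\tfrac{1}{a_k}\sum_{t=1}^{a_k}\widetilde{X}_t(u_k)$ with $a_k=\lfloor(k-1)T/N\rfloor$, $b_k=\lfloor kT/N\rfloor$, and applying the $L^{1}$ ergodic theorem to each ordinary Cesàro average (whose weights $b_k/T\to k/N$ and $a_k/T\to(k-1)/N$ stay bounded) shows that each block contribution converges in $L^{1}$ to $\tfrac1N\mathbb{E}\widetilde{X}_0(u_k)$. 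Summing over the finitely many blocks gives that the third term tends to $0$ in $L^{1}$ as $T\to\infty$, for each fixed $N$.

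Combining the four pieces, for every fixed $N$ the $\limsup_{T\to\infty}$ of the left-hand side is bounded by $C/N$ plus the Riemann error, and letting $N\to\infty$ sends both to zero. The main obstacle is precisely this interchange of limits: the freezing and Riemann errors must be controlled \emph{uniformly in} $T$ (which they are, being $O(1/N)$ and $T$-free), whereas the ergodic convergence is only available for each fixed $N$ as $T\to\infty$. A secondary technical point, handled by the two-average decomposition above, is that the blocks are not exactly of length $T/N$ and their windows are shifted, so the ergodic theorem cannot be applied to them directly.
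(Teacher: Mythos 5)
Your proof is correct, but it is worth noting that the paper does not prove this statement at all: Theorem A.1 is imported verbatim from DRW (their Theorem 2.7(i)), and the paper's only added remark is that DRW's $L^{1}$ convergence implies convergence in probability. You have instead supplied a self-contained argument, and it holds up. The four-fold decomposition is sound: the approximation term is $O(T^{-1})$ by the second inequality in (S1)(ii); the freezing term is $O(N^{-1})$ uniformly in $T$ by the first inequality; the Riemann-sum error vanishes as $N\to\infty$ because $u\mapsto\mathbb{E}\widetilde{X}_{0}(u)$ inherits H\"older continuity via Jensen's inequality; and the block term goes to zero in $L^{1}$ for each fixed $N$ because each $\{\widetilde{X}_{t}(u_{k})\}_{t}$ is stationary and ergodic by hypothesis, with $\mathbb{E}|\widetilde{X}_{0}(u_{k})|<\infty$ from (S1)(i), so the mean ergodic theorem applies to the two ordinary Ces\`aro averages in your two-average decomposition (the case $a_{k}=0$ is handled by the obvious convention that the corresponding term is zero). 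The interchange of limits is legitimate exactly for the reason you state: the $N$-dependent errors are uniform in $T$. What your route buys is transparency — it isolates precisely which ingredients drive the global LLN (per-$u$ ergodicity, the uniform $L^{1}$ H\"older bound, and the $O(T^{-1})$ stationary approximation) — whereas the paper's citation buys generality, since DRW's theorem is stated for kernel-weighted averages and a broader class of locally stationary processes, which the paper deliberately avoids re-deriving.
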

DRW proved Theorem A.1 in $L^{1}$. Convergence in $L^{1}$ implies convergence in probability. The theorem therefore holds in probability. 
\newtheorem*{thmappendix2}{Theorem A.2}
\begin{thmappendix2}[DRW, Theorem 2.9] 
Suppose that Assumptions (S1), (S2) and (M1) hold with $n=2$ for the process $X_{t,T}$. Define $S_T = \sum_{t=1}^{T} (X_{t,T} - \mathbb{E}X_{t,T})$. Then
\begin{equation*}
\left\{S_{[Tu]}/ \sqrt{T}, \ 0 \leq u \leq 1 \right\}
\overset{D}{\to} 
\left\{\int_0^u \sigma(v) \ \text{d}W(v), \ 0 \leq u \leq 1 \right\},
\end{equation*}
where $W(v)$ is a standard Brownian motion and the long-run variance $\sigma^2 (v)$ is given by
\begin{equation*}
\sigma^2 (v) = \sum_{k \in \mathbb{Z}} \text{Cov}
(\widetilde{X}_0(v),\widetilde{X}_k(v)).
\end{equation*}
\end{thmappendix2}
In the ATV-GARCH model, we apply Theorem A.2 with $u=1$, in which case
\begin{equation}
    S_T/\sqrt{T} \overset{D}{\to} \int_0^1 \sigma(v) \ \text{d}W(v).
\end{equation}

To prove consistency of the QMLE of the parameters of the ATV-GARCH model, we apply Theorem A.1 to the log-likelihood function $l_{t,T}(\boldsymbol{\theta})$. For this we need (S1) to hold with $n=1$ for the log likelihood function $l_{t,T}(\boldsymbol{\theta})$ and its stationary approximation $\widetilde{l}_t(u, \boldsymbol{\theta})$. In the proof of asymptotic normality, we apply Theorem A.2 to a linear combination of the elements of the score $\mathbf{s}_{t,T}(\boldsymbol{\theta})$. For this, (S1), (S2) and (M1) have to hold with $n=2$ for the elements of the score $\mathbf{s}_{t,T}(\boldsymbol{\theta})$ and for the elements of its stationary approximation 
$\widetilde{\mathbf{s}}_t(u,\boldsymbol{\theta})$. Additionally, (S1) has to hold with $n=1$ for the Hessian $\mathbf{H}_{t,T}(\boldsymbol{\theta})$ and its stationary approximation $\widetilde{\mathbf{H}}_t(u,\boldsymbol{\theta})$. 

\subsection*{A.3 Lemmata}
Lemmas 1--5 contain some intermediate results. 
\begin{lemma}
	The logistic function $G=G(t/T; \gamma, c)$ in (\ref{eq:lfunct}) is Lipschitz continuous in $t/T\in[0,1]$, and its first three partial derivatives with respect to the parameters $c$ and $\gamma$, $\gamma<\infty$, are bounded and Lipschitz continuous in t/T. 
\end{lemma}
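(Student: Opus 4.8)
The plan is to exploit that $G$ is the composition of the standard sigmoid $\sigma(z):=(1+e^{-z})^{-1}$ with the polynomial argument $z=\gamma P(t/T)$, where $P(x):=\prod_{k=1}^{K}(x-c_{k})$. The lemma then reduces to two observations: every derivative of $\sigma$ is uniformly bounded on $\mathbb{R}$, and the inner map $z$, together with all of its derivatives in $t/T$ and in the parameters, is bounded on the relevant compact domains. Assembling these through the chain rule controls every quantity in the statement.

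First I would record the elementary identity $\sigma'(z)=\sigma(z)(1-\sigma(z))$ and note, by induction, that each higher derivative $\sigma^{(m)}(z)$ is a polynomial in $\sigma(z)$ with vanishing constant term. Since $\sigma(z)\in(0,1)$ for all $z$, this yields $\sup_{z}|\sigma^{(m)}(z)|\leq C_{m}<\infty$ for every order we need (here $m\leq 4$). This is the only place where the unboundedness of the range of $z$ enters, and it is dispatched uniformly once and for all.

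Next I would treat the inner function. Writing $x=t/T\in[0,1]$, the map $x\mapsto P(x)$ is a polynomial whose coefficients are continuous in $(c_{1},\ldots,c_{K})$; on the compact set $[0,1]$, with the parameters ranging over the compact space $\Theta$, both $P$ and each $x$-derivative $\partial_{x}^{a}P$ are bounded uniformly in the parameters. The same holds for the parameter derivatives, since $\partial_{\gamma}z=P(x)$ and $\partial_{c_{k}}z=-\gamma\prod_{j\neq k}(x-c_{j})$, and their iterates are finite products of $\gamma$ (finite by hypothesis) with polynomials in $x$ and the $c_{j}$. With these two ingredients in hand, any mixed partial derivative of $G$ of total order at most three in $(\gamma,c)$, optionally composed with one further $x$-derivative, expands by Fa\`{a} di Bruno's formula into a finite sum of terms $\sigma^{(m)}(z)\prod(\text{derivatives of }z)$, each factor of which is bounded by the preceding steps. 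This gives boundedness of the first three parameter derivatives, and Lipschitz continuity in $x$ of $G$ and of each such derivative follows from the mean value theorem, since differentiating once more in $x$ produces an expression of the same bounded form and a bounded $x$-derivative on $[0,1]$ furnishes the Lipschitz constant.

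The step requiring the most care is bookkeeping these higher-order chain-rule expansions and, in particular, checking that the finiteness of $\gamma$ is genuinely used: as $\gamma\to\infty$ the transition degenerates to a step function and the Lipschitz constant in $x$ blows up, so it is essential that the argument invokes compactness of $\Theta$ (equivalently $\gamma<\infty$) to keep every constant finite. The sigmoid bounds and the polynomial bounds are individually routine; the substance lies in assembling them uniformly over the compact parameter space.
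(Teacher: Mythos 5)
Your proof is correct, and it reaches the lemma by a more structural route than the paper. The paper's own proof works explicitly with the case of a single location parameter: it differentiates $G$ directly, recording that $\partial G/\partial(t/T)=\gamma G(1-G)\leq \gamma/4$ (whence Lipschitz continuity of $G$ via the mean value theorem), and then tabulates all first, second and third partial derivatives with respect to $\gamma$ and $c$ --- each of which is a polynomial in $\gamma$ and $(t/T-c)$ multiplied by a factor such as $G(1-G)$, $G(1-G)(1-2G)$ or $G(1-G)(1-6G+6G^{2})$, hence bounded because $G\in(0,1)$, and Lipschitz in $t/T$ because one further $t/T$-derivative has the same bounded form. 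Your factorization $G=\sigma(\gamma P(t/T))$, with the inductive observation that $\sigma^{(m)}$ is a bounded polynomial in $\sigma$ with no constant term, combined with Fa\`{a} di Bruno's formula, packages exactly the same two facts (bounded sigmoid-derived factors, polynomial inner function on a compact domain) without writing out the derivative tables; in exchange it covers the general case $K\geq 1$ and any derivative order in one stroke, whereas the paper's explicit formulas are stated only for $K=1$ even though the lemma refers to the general function in (\ref{eq:lfunct}). The trade-off is that the paper's concrete expressions for $\partial G/\partial\gamma$ and $\partial G/\partial c$ are reused verbatim later (e.g.\ in the proof of Lemma \ref{lemma:gpd}), which your abstract expansion would not directly supply. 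Both arguments share the same final step --- a bounded $t/T$-derivative on $[0,1]$ furnishes the Lipschitz constant --- and both genuinely require $\gamma$ bounded (compactness of $\Theta$) to keep that constant finite, a point you correctly flag.
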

\begin{proof}

The partial derivative of $G(t/T; \gamma, c)$ with respect to $t/T$ is given by $\partial G/\partial (t/T)=\gamma G(1-G)$, which achieves its maximum $\gamma/4$ when $G=1/2.$ Therefore, by the mean value theorem, for any points $a,b\in[0,1]$, $G(a)-G(b)\leq \gamma/4(a-b).$

	The first partial derivatives with respect to the parameters $\gamma$ and $c$ are
	\begin{align*}
		\frac{\partial G}{\partial \gamma} &=(t/T-c)G(1-G), \\
		\frac{\partial G}{\partial c} &=-\gamma G(1-G).
	\end{align*}%
	The second partial derivatives are
	\begin{align*}
		\frac{\partial ^{2}G}{\partial \gamma^{2}} &=\frac{\partial }{\partial \gamma}\left\{
		(t/T-c)G(1-G)\right\} =(t/T-c)^{2}G(1-G)(1-2G), \\
		\frac{\partial ^{2}G}{\partial \gamma\partial c} &=\frac{\partial }{\partial c}%
		\left\{ (t/T-c)G(1-G)\right\} =-\gamma(t/T-c)G(1-G)(1-2G)-G(1-G), \\
		\frac{\partial ^{2}G}{\partial c^{2}} &=\frac{\partial }{\partial c}\left\{
		-\gamma G(1-G)\right\} =(-\gamma)^{2}G(1-G)(1-2G).
	\end{align*}%
	The third partial derivatives are%
	\begin{align*}
		\frac{\partial ^{3}G}{\partial \gamma^{3}} &=\frac{\partial }{\partial \gamma}\left\{
		(t/T-c)^{2}G(1-G)(1-2G)\right\} =(t/T-c)^{3}G(1-G)(1-6G+6G^{2}), \\
		\frac{\partial ^{3}G}{\partial \gamma^{2}\partial c} &=\frac{\partial }{\partial
			c}\left\{ (t/T-c)^{2}G(1-G)(1-2G)\right\} =
        -\gamma(t/T-c)^{2}G(1-G)(1-6G+6G^{2}) \\
		&-2(t/T-c)G(1-G)(1-2G), \\
		\frac{\partial ^{3}G}{\partial \gamma\partial c^{2}} &=\frac{\partial }{\partial
			\gamma}\left\{ (-\gamma)^{2}G(1-G)(1-2G)\right\} =(-\gamma)^{2}(t/T-c)G(1-G)(1-6G+6G^{2}) \\
		&-2\gamma G(1-G)(1-2G), \\
		\frac{\partial ^{3}G}{\partial c^{3}} &=\frac{\partial }{\partial c}\left\{
		(-\gamma)^{2}G(1-G)(1-2G)\right\} =(-\gamma)^{3}G(1-G)(1-6G+6G^{2}).
	\end{align*}	
	These functions have bounded derivatives, so they are Lipschitz continuous.
\end{proof}

\begin{lemma}
 Under (A1), Lipschitz continuity of $g(\cdot)$ and (A5), for $n=1$, and additionally under (A6) and (A7), for $n=2$, uniformly in $t=1,\ldots ,T$ and $u,v\in[0,1]$,
 \begin{equation}
		\norm{X^2_{t,T}-\widetilde{X}^2_t(t/T)}_n=\frac{C_1}{T}, \label{eq:tuineq}
	\end{equation}
	\begin{equation}
		\norm{X^2_{t,T}-\widetilde{X}^2_t(u)}_n\leq\frac{C_1}{T} + C_2\left|t/T-u\right| \label{eq:uvineq}
	\end{equation}
	and
	\begin{equation}
		\norm{\widetilde{X}^2_{t}(u)-\widetilde{X}^2_t(v)}_n\leq C_3\left|u-v\right|.\label{eq:uvineq2}
	\end{equation}
	Furthermore,
	\begin{equation}
		\norm{h_{t,T}(\boldsymbol{\theta})-\widetilde{{h}}_{t}(t/T,\boldsymbol{\theta})}_n\leq \frac{C_4}{T},\label{eq:hthu}
	\end{equation}
	\begin{equation}
		\norm{h_{t,T}(\boldsymbol{\theta})-\widetilde{{h}}_{t}(u,\boldsymbol{\theta})}_n\leq \frac{C_4}{T} + C_5\left|t/T-u\right|, \label{eq:htapprox}
	\end{equation}
	and 
	\begin{equation}
		\norm{\widetilde{h}_{t}(u,\boldsymbol{\theta})-\widetilde{h}_{t}(v,\boldsymbol{\theta})}_n\leq C_6\left|u-v\right|. \label{eq:huhv}
	\end{equation}
\end{lemma}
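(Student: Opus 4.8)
The plan is to treat the six bounds as two linked groups: the three inequalities (\ref{eq:tuineq})--(\ref{eq:uvineq2}) for the squared process, which I would establish directly from the state-space representation (\ref{eq:GARCHSS})--(\ref{eq:ssstat}), and the three inequalities (\ref{eq:hthu})--(\ref{eq:huhv}) for the conditional-variance recursion, which I would then deduce from the BHK-type series representations (\ref{eq:decrecht}) and (\ref{eq:infatv}) using the first group as an input. Throughout, the argument runs simultaneously for $n=1$ and $n=2$, with (A1), Lipschitz continuity of $g$ and (A5) supplying the $n=1$ case, while (A6)--(A7) are invoked only to upgrade the relevant moment bounds to $n=2$.

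For the first group I would start from inequality (\ref{eq:lipstochprocesses}), apply Minkowski's inequality, and bound $\norm{W_t}_n$ and $\norm{V_{t,T}}_n$ uniformly in $t,T$ by Proposition 2.1 of Subba Rao (2006); its condition (ii), $\lambda_{\mathrm{spec}}((\mathbf{A})_n)<1-\delta$, reduces to (A5) when $n=1$ and is exactly (A7) when $n=2$. Since $\sigma^2_{t,T}$ is a component of $\mathcal{X}_{t,T}$ and $|\cdot|$ is the maximum norm, this yields $\norm{\sigma^2_{t,T}-\sigma^2_t(u)}_n\leq C_1/T+C_2|t/T-u|$. Because $X^2_{t,T}=\sigma^2_{t,T}\varepsilon_t^2$ and $\widetilde{X}^2_t(u)=\sigma^2_t(u)\varepsilon_t^2$ with $\varepsilon_t$ independent of both conditional variances, I would factor $\norm{(\sigma^2_{t,T}-\sigma^2_t(u))\varepsilon_t^2}_n=\norm{\sigma^2_{t,T}-\sigma^2_t(u)}_n\norm{\varepsilon_t^2}_n$, where $\norm{\varepsilon_0^2}_1=1$ and $\norm{\varepsilon_0^2}_2=(\mathbb{E}\varepsilon_0^4)^{1/2}<\infty$ by (A6); this gives (\ref{eq:uvineq}), and setting $u=t/T$ gives (\ref{eq:tuineq}). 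For the purely stationary Lipschitz bound (\ref{eq:uvineq2}) I would note that in (\ref{eq:ssstat}) only $\mathbf{b}_t(u)$ depends on $u$ (through $g(u)$) while $\mathbf{A}_t$ is constant, so unfolding the recursion expresses $\widetilde{\mathcal{X}}_t(u)-\widetilde{\mathcal{X}}_t(v)$ as a convergent series of matrix products applied to $\mathbf{b}_{t-k}(u)-\mathbf{b}_{t-k}(v)$; in the scalar $(1,1)$ case this reads $\sigma^2_t(u)-\sigma^2_t(v)=(g(u)-g(v))\sum_{k\geq0}\prod_{j=1}^k(\alpha_1\varepsilon_{t-j}^2+\beta_1)$, whose $n$-norm is $|g(u)-g(v)|$ times a geometric series that converges precisely because $\norm{\alpha_1\varepsilon_0^2+\beta_1}_1=\alpha_1+\beta_1<1$ under (A5) and $\norm{\alpha_1\varepsilon_0^2+\beta_1}_2<1$ under (A7). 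Lipschitz continuity of $g$ and factoring $\varepsilon_t^2$ then deliver (\ref{eq:uvineq2}).

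For the second group I would subtract the representations (\ref{eq:decrecht}) and (\ref{eq:infatv}) and apply Minkowski term by term, splitting the result into a deterministic intercept sum $\sum_i d_i(\boldsymbol{\theta})(g_{t-i+1,T}-g_{t-i+1}(u))$ and a stochastic sum $\sum_i c_i(\boldsymbol{\theta})(X^2_{t-i,T}-\widetilde{X}^2_{t-i}(u))$. In the first, Lipschitz continuity of $g$ and the triangle inequality through $t/T$ give $|g_{t-i+1,T}-g_{t-i+1}(u)|\leq C(i-1)/T+C|t/T-u|$; in the second, the uniform bound (\ref{eq:uvineq}) applied at index $t-i$ together with $|(t-i)/T-u|\leq i/T+|t/T-u|$ gives $\norm{X^2_{t-i,T}-\widetilde{X}^2_{t-i}(u)}_n\leq C_1/T+C_2 i/T+C_2|t/T-u|$. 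Weighting by the geometrically decaying coefficients (\ref{eq:cineq}) and (\ref{eq:dineq}) and summing then yields (\ref{eq:htapprox}); setting $u=t/T$ gives (\ref{eq:hthu}), and the fully stationary bound (\ref{eq:huhv}) follows the same way, now using $|g_{t-i+1}(u)-g_{t-i+1}(v)|\leq C|u-v|$ together with (\ref{eq:uvineq2}) in the two sums.

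The main obstacle I anticipate is not any single estimate but the bookkeeping needed to push the uniform approximation bounds through the infinite sums while preserving the $1/T$ rate. The lag-dependent error terms grow linearly in $i$ (via the $(i-1)/T$ and $i/T$ contributions), so the decisive point is that the coefficient bounds (\ref{eq:cineq})--(\ref{eq:dineq}) decay geometrically, making $\sum_{i\geq1} i\,\rho_0^{i/q}<\infty$; this convergence is what guarantees that the weighted sums remain of order $1/T$ plus $|t/T-u|$ rather than degrading the rate. A secondary care point is verifying that the scalar geometric-series argument for (\ref{eq:uvineq2}) extends to the $(p,q)$ case through the matrix contraction $\lambda_{\mathrm{spec}}((\mathbf{A})_n)<1$, which is again where (A5) and (A7) re-enter.
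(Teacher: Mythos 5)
Your proposal is correct and follows essentially the same route as the paper: the first three bounds are obtained from Subba Rao's (2006) Theorem 2.1 and Proposition 2.1 via the state-space representation (the paper cites these results directly where you fill in the $W_t$, $V_{t,T}$ moment bounds, the factoring of $\varepsilon_t^2$, and the identification of condition (ii) with (A5)/(A7)), and the last three follow from exactly the same decomposition of the difference of the series representations into a deterministic intercept sum and a stochastic sum, controlled by the geometric decay of $c_i$, $d_i$ and the convergence of $\sum_i i\rho_0^{i/q}$. Your handling of the intercept term, with the explicit $(i-1)/T$ contribution from evaluating $g$ at $(t-i+1)/T$, is in fact slightly more careful than the paper's intermediate display, which suppresses this lag dependence before restoring it in the final collection of terms.
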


\begin{proof}
	It follows from Subba Rao (2006, Section 5.2) that if $\mathbb{E}(\varepsilon_{t}^{2}) = 1$, and the parameter curves $\{\alpha_{i}(u)\}, i=0,\ldots,p$, and $\{\beta_{j}(u)\}, j=1,\ldots,q$, $u\in[0,1]$, are Lipschitz continuous and satisfy 
	\begin{equation}
		\underset{u}{\sup}\left(\sum\limits _{i=1}^{p}\alpha_{i}(u)+\sum\limits _{j=1}^{q}\beta_{j}(u)\right)<1-\eta, \label{eq:tvstatcond}
	\end{equation} 
	for some $\eta>0$, then the tvGARCH process is locally stationary. In the ATV-GARCH model only $\alpha_0$ is time-varying, and (\ref{eq:tvstatcond}) holds by (A5). It remains to show that the time-varying intercept is Lipschitz continuous, i.e. 
 \begin{equation}
		\left|\alpha_0(u) - \alpha_0(v)\right|\leq C\left|u-v\right|
	\end{equation}
	for some constant $C$. It follows from the mean value theorem that it is sufficient to show that the curve $\{\alpha_0(u)\}$ has bounded first derivatives. This is the case as the logistic function $G(t/T; \gamma, c)$ in (\ref{eq:lfunct}) is Lipschitz continuous by Lemma 1, and the parameter space is compact. The results (\ref{eq:tuineq}), (\ref{eq:uvineq}) and (\ref{eq:uvineq2}) now follow from Subba Rao (2006, Theorem 2.1 and Proposition 2.1).	
Next consider (\ref{eq:htapprox}). From (\ref{eq:decrecht}), we have
\begin{equation*}
	h_{t,T}(\boldsymbol{\theta})=c_{0}(\boldsymbol{\theta})
	+ \sum^\infty_{i=1}d_i(\boldsymbol{\theta})g_{t-i+1,T}(\boldsymbol{\theta}_1) + \sum^{\infty}_{i=1}c_i(\boldsymbol{\theta})X^2_{t-i,T},
\end{equation*}
and
\begin{equation*}
\widetilde{h}_t(u,\boldsymbol{\theta})=c_{0}(\boldsymbol{\theta})
	+ \sum^\infty_{i=1}d_i(\boldsymbol{\theta})g_{t-i+1}(u,\boldsymbol{\theta}_1) + \sum^{\infty}_{i=1}c_i(\boldsymbol{\theta})\widetilde{X}^{2}_{t-i}(u),
\end{equation*}
where $\boldsymbol{\theta}=(\boldsymbol{\theta}_{1}^{\intercal},\boldsymbol{\theta}_{2}^{\intercal})^{\intercal}$ is the partitioning of the parameter space defined in Section 3.1. BHK, Lemma 3.1, formula (3.4) gives $	c_i(\boldsymbol{\theta})\leq C_{1}\rho_0^{i/q}$
for a constant $C_{1} < \infty$ and $0<\rho_0<1$. Mimicking the proof of this lemma, it is straightforward to show that the coefficients $d_{i}(\boldsymbol{\theta })$ satisfy a similar formula: $d_i(\boldsymbol{\theta})\leq C_{2}\rho_0^{i/q}$  for $C_{2} < \infty$ and $0<\rho_0<1$.
Since the coefficients $c_i(\boldsymbol{\theta})$ and $d_i(\boldsymbol{\theta})$ are absolutely summable, \begin{align}
		\left|h_{t,T}(\boldsymbol{\theta})-\widetilde{h}_{t}(u,\boldsymbol{\theta})\right|  &\leq
 C_2
\sum_{i=1}^{\infty }\rho _{0}^{i/q}\left\vert g_{t-i+1,T}(\boldsymbol{\theta }_{1})-g_{t-i+1}(u,\boldsymbol{\theta }_{1})\right\vert \nonumber \\
		&+ C_{1}\sum^\infty_{i=1}\rho^{i/q}_0\left| X^2_{t-i,T}-\widetilde{X}^2_{t-i}(u) \right|. 
\label{eq:htdiffcicoeff}
 \end{align}
By Lemma 1, the function $g_{t,T}(\boldsymbol{\theta}_{1})$ is bounded and Lipschitz continuous. We therefore obtain for the first term on the right-hand side of (\ref{eq:htdiffcicoeff})
\begin{equation*}
\left| g_{t-i+1,T}(\boldsymbol{\theta }_{1})-g_{t-i+1}(u,\boldsymbol{\theta }_{1})\right| 
\leq C_{5}\left|\frac{t}{T}-u \right|.
\end{equation*}
For the second term, similarly to an argument in the proof of \textcite{kristensen2019local}, Theorem
6,  adding and subtracting $\widetilde{X}^2_{t-i}\left((t-i)/T\right)$ and $\widetilde{X}^2_{t-i}\left(t/T\right)$, and applying the triangle inequality yields for each index $t-i$, 
	\begin{align*}
		\left|X^2_{t-i,T}-\widetilde{X}^2_{t-i}(u)\right| &\leq \left|X^2_{t-i,T}-\widetilde{X}^2_{t-i}\left(\frac{t-i}{T}\right)\right| + \left|\widetilde{X}^2_{t-i}\left(\frac{t}{T}\right)-\widetilde{X}^2_{t-i}\left(\frac{t-i}{T}\right)\right| \\
		&+ \left|\widetilde{X}^2_{t-i}\left(\frac{t}{T}\right)-\widetilde{X}^2_{t-i}(u)\right|.\\
	\end{align*}
	Taking norms and using (\ref{eq:tuineq})--(\ref{eq:uvineq2}) gives
		\begin{equation*}
	\norm{X^2_{t-i,T}-\widetilde{X}^2_{t-i}(u)}_n\leq \frac{C_6}{T} + C_7\frac{i}{T} + C_8\left|\frac{t}{T}-u\right|.
	\end{equation*}
	Collecting results, we have
	\begin{align*}
		\norm{h_{t,T}(\boldsymbol{\theta})-\widetilde{h}_{t}(u,\boldsymbol{\theta})}_n
		& \leq C_{2}\sum_{i=1}^{\infty }\rho _{0}^{i/q}
        \left(\frac{C_3}{T} + C_4\frac{i}{T} + C_5\left|\frac{t}{T}-u\right|\right)\\
        & +  C_{1}\sum^\infty_{i=1}\rho^{i/q}_0 \left(\frac{C_6}{T} + C_7\frac{i}{T} + C_8\left|\frac{t}{T}-u\right|\right)\\
		&\leq \frac{C_9}{T} + C_{10}\left|\frac{t}{T}-u\right|.
	\end{align*}
We have used monotone convergence and the fact that for $0<\rho_0<1$, the arithmetic-geometric series $\sum^\infty_{i=1}\rho^{i/q}_0i<\infty$. 
This proves (\ref{eq:htapprox}). The result (\ref{eq:hthu}) follows from (\ref{eq:htapprox}) and the triangle inequality. The same arguments can be used to establish (\ref{eq:huhv}). 
\end{proof}

The following lemma generalises Lemma 5.1 in BHK to the ATV-GARCH$(p,q)$ model.
\begin{lemma}
Assume that (A1)--(A3) hold and $\mathbb{E}|\varepsilon_0^2|^d<\infty$ for some $d>0$. Then for any $0<\upsilon<d$, 
\begin{equation}
    \mathbb{E}\left\{\sup_{\boldsymbol{\theta}\in\Theta}\frac{h_{t,T}(\boldsymbol{\theta}_0)}{h_{t,T}(\boldsymbol{\theta})}\right\}^{\upsilon}<\infty. \label{eq:htratio} 
\end{equation}
\end{lemma}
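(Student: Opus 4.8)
The plan is to adapt the argument of BHK (Lemma 5.1) to the ATV representation (\ref{eq:decrecht}), the only genuinely new feature being the extra intercept terms $\sum_i d_i(\boldsymbol{\theta})g_{t-i+1,T}(\boldsymbol{\theta}_1)$, which by Lemma 1 are positive and uniformly bounded and will turn out to be harmless. Since the numerator $h_{t,T}(\boldsymbol{\theta}_0)=\sigma^2_{t,T}$ does not depend on $\boldsymbol{\theta}$, I would first rewrite the quantity to be controlled as $\sigma^2_{t,T}/\inf_{\boldsymbol{\theta}\in\Theta}h_{t,T}(\boldsymbol{\theta})$ and produce a lower bound on $h_{t,T}(\boldsymbol{\theta})$ that is uniform in $\boldsymbol{\theta}$. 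All coefficients in (\ref{eq:decrecht}) are nonnegative and, by the definition of $\Theta$ in (\ref{eq:parameterspace1})--(\ref{eq:parameterspace2}) together with (A2), the leading ARCH coefficient satisfies $c_1(\boldsymbol{\theta})=\alpha_1\geq\underline{\vartheta}>0$. Dropping every other (nonnegative) term then yields the key uniform bound
\begin{equation*}
h_{t,T}(\boldsymbol{\theta})\geq\underline{\vartheta}\,X^2_{t-1,T}=\underline{\vartheta}\,\sigma^2_{t-1,T}\varepsilon^2_{t-1},\qquad\boldsymbol{\theta}\in\Theta .
\end{equation*}

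Next I would insert this bound into the numerator written in one-step form, $\sigma^2_{t,T}=\alpha_0+g_{t,T}(\boldsymbol{\theta}_{1,0})+\sum_{i=1}^p\alpha_i X^2_{t-i,T}+\sum_{j=1}^q\beta_j\sigma^2_{t-j,T}$ at the true parameters, and substitute $X^2_{t-i,T}=\sigma^2_{t-i,T}\varepsilon^2_{t-i}$. The decisive point is that the heavy-tailed factor $\sigma^2_{t-1,T}$ sitting in the denominator cancels the conditional-variance factors in the numerator: using the pathwise recursion inequalities $\sigma^2_{t-1,T}\geq\alpha_0>0$ and $\sigma^2_{t-1,T}\geq\beta_1^{\,j-1}\sigma^2_{t-j,T}$ (both obtained by dropping nonnegative terms in the true recursion, with $\beta_1\geq\underline{\vartheta}>0$), together with the boundedness of $\alpha_0+g_{t,T}$ from Lemma 1, one arrives at
\begin{equation*}
\sup_{\boldsymbol{\theta}\in\Theta}\frac{h_{t,T}(\boldsymbol{\theta}_0)}{h_{t,T}(\boldsymbol{\theta})}\leq C_0+\frac{C_1}{\varepsilon^2_{t-1}}+\sum_{i=2}^p C_i\,\frac{\varepsilon^2_{t-i}}{\varepsilon^2_{t-1}},
\end{equation*}
with constants uniform in $t$, $T$ and $\boldsymbol{\theta}$. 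Here the intercept and GARCH contributions collapse onto the inverse term $\varepsilon^{-2}_{t-1}$, the $i=1$ ARCH lag contributes only a constant, and the higher-order ARCH lags produce the ratios $\varepsilon^2_{t-i}/\varepsilon^2_{t-1}$.

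Raising this bound to the power $\upsilon$, applying the $c_\upsilon$-inequality and taking expectations, the problem reduces to a finite collection of moment computations. Because the $\varepsilon_t$ are IID, the cross terms factor as $\mathbb{E}(\varepsilon^2_{0})^{\upsilon}\,\mathbb{E}(\varepsilon^{-2}_{0})^{\upsilon}$ for $i\geq 2$. The positive moments $\mathbb{E}|\varepsilon^2_0|^{\upsilon}$ are finite for $\upsilon<d$ by hypothesis, while the negative moments are finite because the left-tail condition $\lim_{t\to0}t^{-\mu}\mathbb{P}\{\varepsilon^2_0\leq t\}=0$ in (A1) forces $\mathbb{P}\{\varepsilon^2_0\leq t\}=o(t^{\mu})$, so that $\mathbb{E}(\varepsilon^{-2}_0)^{\upsilon}=\int_0^{\infty}\mathbb{P}\{\varepsilon^2_0\leq s^{-1/\upsilon}\}\,\mathrm{d}s<\infty$ whenever $\upsilon<\mu$. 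Collecting the finitely many terms gives (\ref{eq:htratio}) for $0<\upsilon<d$.

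I expect the main obstacle to be the negative-moment term $\varepsilon^{-2}_{t-1}$. It is produced precisely by the intercept and GARCH pieces of the numerator, which—unlike the ARCH lags—possess no matching $X^2_{t-1,T}$ factor to cancel against, so the denominator can be made small exactly when $\varepsilon^2_{t-1}$ is small; this is what forces the non-degeneracy/left-tail part of (A1) into the hypotheses, and it is the step where I would be most careful that every constant is genuinely uniform over the compact parameter space. By contrast, the extension of BHK to the ATV model is comparatively routine once Lemma 1 guarantees that $g_{t,T}(\boldsymbol{\theta}_1)$ is positive and bounded, since the added intercept terms then neither spoil the lower bound on the denominator nor the boundedness of the numerator's intercept.
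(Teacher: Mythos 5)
Your pathwise algebra is sound --- the uniform lower bound $h_{t,T}(\boldsymbol{\theta})\geq\underline{\vartheta}\,X^2_{t-1,T}$, the cancellations via $\sigma^2_{t-1,T}\geq\alpha_0>0$ and $\sigma^2_{t-1,T}\geq\beta_1^{\,j-1}\sigma^2_{t-j,T}$, and the observation that the intercept terms are bounded and harmless are all correct --- but the moment step at the end proves strictly less than the lemma asserts. Your decomposition forces you to integrate $\varepsilon_{t-1}^{-2\upsilon}$, and, as you yourself note, the left-tail condition in (A1) yields $\mathbb{E}(\varepsilon_0^{-2\upsilon})<\infty$ only for $\upsilon<\mu$, where $\mu$ is merely \emph{some} positive constant, unrelated to $d$ and possibly arbitrarily small. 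Hence your argument establishes (\ref{eq:htratio}) only for $0<\upsilon<\min(d,\mu)$, and your concluding sentence (``for $0<\upsilon<d$'') is a non sequitur. The shortfall is material, not cosmetic: the paper applies Lemma 3 with $\upsilon=1+\delta/2$ (proofs of Lemmas \ref{lemma:lls1} and \ref{lemma:ll6}) and with $\upsilon=2$ (verification of (N3)), exponents your proof cannot reach unless $\mu$ happens to exceed them.

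The root cause is the one-step recursion: the intercept and the GARCH terms $\beta_j\sigma^2_{t-j,T}$ in the numerator have no partner in your lower bound for the denominator, so small values of $\varepsilon^2_{t-1}$ can only be controlled by negative moments. The paper's proof avoids this altogether. It dominates the numerator by $h^{*}_{t,T}(\boldsymbol{\theta}_0)$, the conditional variance of the \emph{stationary} GARCH process obtained by freezing the intercept at its supremum (driven by the same innovations), bounds the denominator below using $\inf_{\boldsymbol{\theta}\in\Theta}\alpha_0$, and thereby reduces the quantity to a ratio of conditional variances of two stationary GARCH processes, to which Lemma 5.1 of BHK applies directly. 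The engine inside that stationary lemma is the ARCH($\infty$) representation (\ref{eq:decrecht}): each numerator term $c_i(\boldsymbol{\theta}_0)X^2_{t-i,T}$ is paired with $c_0(\boldsymbol{\theta})+c_i(\boldsymbol{\theta})X^2_{t-i,T}$ in the denominator, and the elementary inequality $x/(1+x)\leq x^{s}$ for arbitrarily small $s\in(0,1)$ (the same device the paper uses in the proof of Lemma 4) converts everything into small fractional \emph{positive} moments $\mathbb{E}|X^2_{t-i,T}|^{s\upsilon}$, finite for any $\upsilon$ once $s$ is chosen small enough; negative moments of $\varepsilon_0^2$ never enter. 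If you replace your one-step bound by this representation-based pairing, the rest of your plan --- including your correct treatment of the extra terms $d_i(\boldsymbol{\theta})g_{t-i+1,T}(\boldsymbol{\theta}_1)$ --- goes through and yields the full range $0<\upsilon<d$.
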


\begin{proof}
Let $\{X_{t,T}^{\ast}\}$ be a stationary process that bounds the locally stationary process $\{X_{t,T}\}$, i.e. 
\begin{align}
    X_{t,T}^* &= \sigma_{t,T}^{*} \varepsilon_t, \\
	\sigma_{t,T}^{*2} &= g^* + \sum_{i=1}^{p}\alpha_{i}X_{t-i,T}^{*2}+\sum_{j=1}^{q}\beta_{j}\sigma_{t-j,T}^{*2},\label{eq:sigmastar}
\end{align}
where $g^*:=\sup_{u\in[0,1]}g(u)$. Since $G(u, \gamma, \boldsymbol{c})$ is bounded from above by $1$ and the parameter space is compact, the process $\{X^{*}_{t,T}\}$ is well-defined. Further, by (A5), the process $\{X^{*}_{t,T}\}$ is stationary. We have
\begin{align}
\mathbb{E}\left\{\sup_{\boldsymbol{\theta}\in\Theta}\frac{h_{t,T}(\boldsymbol{\theta}_0)}{h_{t,T}(\boldsymbol{\theta})}\right\}^{\upsilon}
&\leq \mathbb{E}\left\{\sup_{\boldsymbol{\theta}\in\Theta}\frac{h_{t,T}^{\ast}(\boldsymbol{\theta}_0)}{h_{t,T}(\boldsymbol{\theta})}\right\}^{\upsilon} \nonumber \\ 
&\leq  \mathbb{E}\left\{\sup_{\boldsymbol{\theta}\in\Theta}\frac{h_{t,T}^*(\boldsymbol{\theta}_0)}{\widetilde{h}_t(u,\boldsymbol{\theta})}\right\}^{\upsilon}<\infty. \label{eq:boundedlsratio} 
\end{align}
The first inequality in (\ref{eq:boundedlsratio}) is due to the fact that $h_{t,T}^{\ast}(\boldsymbol{\theta})$ is the conditional variance of a stationary GARCH($p,q$) process with an intercept that bounds the time-varying intercept in (\ref{eq:tvintercept}). The second inequality follows, because by (\ref{eq:parameterspace1}), $\alpha_{0}+\sum\limits _{l=1}^{L}\alpha{}_{0l}G_l\left(u,\gamma_l,\boldsymbol{c}_l\right)\geq\inf_{\boldsymbol{\theta}\in\Theta}\alpha_0$. The third member of (\ref{eq:boundedlsratio}) is a ratio of conditional variances of two stationary GARCH$(p,q)$ processes. The result now follows from Lemma 5.1 of BHK. 
\end{proof}

Next we prove similar results for the derivatives of $h_{t,T}(\boldsymbol{\theta})$.

\begin{lemma}
	Assume that (A1)--(A5) hold. Then
	\begin{equation}     
		\mathbb{E}\sup_{\boldsymbol{\theta}\in\Theta}\left|\frac{1}{h_{t,T}(\boldsymbol{\theta})}\frac{\partial h_{t,T}(\boldsymbol{\theta})}{\partial\boldsymbol{\theta}}\right|^{\upsilon}<\infty, \label{eq:boundedh1deriv}
	\end{equation}

	\begin{equation}
		\mathbb{E}\sup_{\boldsymbol{\theta}\in\Theta}\left|\frac{1}{h_{t,T}(\boldsymbol{\theta})}\frac{\partial^2 h_{t,T}(\boldsymbol{\theta})}{\partial\boldsymbol{\theta}\partial\boldsymbol{\theta}^{T}}\right|^{\upsilon}<\infty \label{eq:boundedh2deriv}
	\end{equation}

and
	\begin{equation}
	\mathbb{E}\sup_{\boldsymbol{\theta}\in\Theta}\left|\frac{1}{h_{t,T}(\boldsymbol{\theta})}\frac{\partial^3 h_{t,T}(\boldsymbol{\theta})}{\partial \theta_i\partial \theta_j\partial \theta}_k\right|^{\upsilon}<\infty \label{eq:boundedh3deriv}
\end{equation}

for any $\upsilon>0.$
\end{lemma}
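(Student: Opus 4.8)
The plan is to differentiate the explicit series representation (\ref{eq:decrecht}) term by term and to control each resulting series using (i) the geometric decay of the coefficients, (ii) the boundedness of $g$ and its derivatives from Lemma 1, and (iii) the crucial cancellation provided by dividing by $h_{t,T}(\boldsymbol{\theta})$ itself. First I would record the facts that make the series tractable. By (\ref{eq:cineq})--(\ref{eq:dineq}) the coefficients satisfy $c_i,d_i\leq C\rho_0^{i/q}$, and differentiating them with respect to the GARCH parameters $\boldsymbol{\theta}_2$ (in particular the $\beta_j$) introduces only polynomial-in-$i$ prefactors, so that $|\partial^k c_i/\partial\boldsymbol{\theta}_2^k|$ and $|\partial^k d_i/\partial\boldsymbol{\theta}_2^k|$ remain bounded by $Ci^k\rho_0^{i/q}$, still summable. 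By Lemma 1, $g$ and its first three partial derivatives in $\boldsymbol{\theta}_1$ are bounded uniformly on the compact set $\Theta$. Finally, since every summand in (\ref{eq:decrecht}) is nonnegative, $h_{t,T}(\boldsymbol{\theta})\geq c_0(\boldsymbol{\theta})\geq\underline{h}>0$ uniformly on $\Theta$, so $1/h_{t,T}$ is uniformly bounded.

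I would then separate the two sources of dependence. The intercept part $\sum_i d_i g_{t-i+1}$ is deterministic given $\boldsymbol{\theta}$: any of its derivatives is bounded in absolute value by $C\sum_i i^k\rho_0^{i/q}\sup|\partial^m g|<\infty$, uniformly in $(t,T,\boldsymbol{\theta})$, and dividing by $h_{t,T}\geq\underline{h}$ leaves a bounded quantity whose moments of every order are trivially finite. Derivatives with respect to $\boldsymbol{\theta}_1$ of the GARCH part $c_0+\sum_i c_i X^2_{t-i,T}$ vanish, so it remains to control its $\boldsymbol{\theta}_2$-derivatives normalized by $h_{t,T}$; since $h_{t,T}\geq c_0+\sum_i c_i X^2_{t-i,T}$, these are dominated by the corresponding ratios for a standard GARCH conditional variance, which is exactly the object treated in BHK, and I would reprove the essential estimate in order to track all moments.

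The crux is the ratios arising from differentiation in the $\beta_j$ directions, where the prefactor $i^k$ appears. Writing $p_i(\boldsymbol{\theta})=c_i(\boldsymbol{\theta})X^2_{t-i,T}/h_{t,T}(\boldsymbol{\theta})$, one has $p_i\in[0,1]$ and $\sum_i p_i\leq 1$ for each fixed $\boldsymbol{\theta}$, and the normalized $k$-th derivative is bounded by $C\sum_i i^k p_i(\boldsymbol{\theta})$. For $\upsilon\geq 1$, Jensen's inequality applied to the subprobability $\{p_i\}$ gives, for fixed $\boldsymbol{\theta}$, $(\sum_i i^k p_i)^{\upsilon}\leq\sum_i i^{k\upsilon}p_i$; taking $\sup_{\boldsymbol{\theta}}$ inside the sum and then expectations, and using $\mathbb{E}\sup_{\boldsymbol{\theta}}p_i\leq\mathbb{E}\min(1,C\rho_0^{i/q}X^2_{t-i,T})\leq C\rho_0^{i/q}\mathbb{E}(X^2_{t-i,T})$ together with the uniform bound $\sup_{t,T}\mathbb{E}(X^2_{t,T})\leq\mathbb{E}(X^{*2}_0)<\infty$ coming from the dominating stationary process of Lemma 3, yields $\mathbb{E}\sup_{\boldsymbol{\theta}}(\sum_i i^k p_i)^{\upsilon}\leq C\sum_i i^{k\upsilon}\rho_0^{i/q}<\infty$. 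The case $0<\upsilon<1$ then follows from the case $\upsilon=1$ by concavity, since $\mathbb{E}(\sup_{\boldsymbol{\theta}}R)^{\upsilon}\leq(\mathbb{E}\sup_{\boldsymbol{\theta}}R)^{\upsilon}$.

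The remainder is bookkeeping: the second- and third-derivative bounds (\ref{eq:boundedh2deriv})--(\ref{eq:boundedh3deriv}) are obtained identically, the only change being higher powers $i^k$ with $k\leq 3$ and higher (still bounded) derivatives of $g$, all of which leave the geometric series $\sum_i i^{k\upsilon}\rho_0^{i/q}$ convergent. I expect the main obstacle to be precisely the $\beta_j$-derivatives: one must avoid the naive termwise bound $c_iX^2_{t-i,T}/h_{t,T}\leq 1$ followed by summation (which diverges because of the $i^k$ factor) and instead retain the subprobability constraint $\sum_i p_i\leq 1$, which is what delivers finiteness of \emph{all} moments of the ratio from only a second moment of $X_{t,T}$.
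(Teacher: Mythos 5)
Your proof is correct, and it shares the paper's skeleton: term-by-term differentiation of the representation (\ref{eq:decrecht}), the BHK coefficient bounds (so that $k$-th order derivatives of $c_i$ and $d_i$ pick up only factors $i^k$ against the geometric decay $\rho_0^{i/q}$), boundedness of $g$ and its derivatives from Lemma 1 to dispose of the deterministic intercept terms, and domination by the weakly stationary process of Lemma 3. Where you genuinely diverge is at the key step, the uniform-in-$\boldsymbol{\theta}$ moment bound for the stochastic ratio $\sum_i i^k c_i(\boldsymbol{\theta})X^2_{t-i,T}/h_{t,T}(\boldsymbol{\theta})$. The paper handles this with the Boussama/Francq--Zako\"{i}an inequality $x/(1+x)\leq x^s$, $s\in(0,1)$, plus $(a+b)^s\leq a^s+b^s$, which converts the ratio into $\sum_i i\rho_0^{is/q}X^{2s}_{t-i,T}$ and therefore requires only an arbitrarily small fractional moment of the bounding process, with $s$ tuned small relative to $\upsilon$ (hence the paper's remark that for large $\upsilon$ one chooses $s$ close to zero). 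You instead keep the full denominator, observe that $p_i=c_i(\boldsymbol{\theta})X^2_{t-i,T}/h_{t,T}(\boldsymbol{\theta})$ is a subprobability sequence ($\sum_i p_i\leq 1$), and apply the H\"{o}lder/Jensen step $\left(\sum_i i^k p_i\right)^{\upsilon}\leq \sum_i i^{k\upsilon}p_i$ for $\upsilon\geq 1$, so that a single $L^1$ bound $\mathbb{E}\sup_{\boldsymbol{\theta}}p_i\leq C\rho_0^{i/q}\,\mathbb{E}(X^{*2}_0)$ delivers all moments at once (with $0<\upsilon<1$ following by concavity). Both routes are valid under (A1)--(A5). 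The trade-off is that your argument is cleaner — no auxiliary exponent $s$ to tune against $\upsilon$ — but it genuinely needs $\mathbb{E}(X^{*2}_0)<\infty$, i.e.\ it leans on the weak-stationarity assumption (A5), whereas the paper's fractional-moment device would survive in settings where the process is only strictly stationary without a finite variance, which is precisely why BHK and Francq and Zako\"{i}an argue that way in the stationary GARCH case.
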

\begin{proof}
We begin by proving (\ref{eq:boundedh1deriv}). By (19), we have
\begin{equation}
\frac{\partial h_{t,T}(\boldsymbol{\theta })}{\partial \boldsymbol{\theta }}=\frac{%
\partial c_{0}(\boldsymbol{\theta })}{\partial \boldsymbol{\theta }}
+\sum_{i=1}^{\infty }\frac{\partial d_{i}(\boldsymbol{\theta })}{\partial 
\boldsymbol{\theta }}g_{t-i+1,T}(\boldsymbol{\theta }_{1})+\sum_{i=1}^{\infty }d_{i}(%
\boldsymbol{\theta })\frac{\partial g_{t-i+1,T}(\boldsymbol{\theta }_{1})}{\partial 
\boldsymbol{\theta }}+\sum_{i=1}^{\infty }\frac{\partial c_{i}(\boldsymbol{
\theta })}{\partial \boldsymbol{\theta }}X_{t-i,T}^{2}. \label{eq:lemma31}
\end{equation}
The proof consists of bounding the terms on the right-hand side of (\ref{eq:lemma31}). 
For the first term, because $c_{0}(\boldsymbol{\theta })=\alpha _{0}/(1-\sum_{i=1}^{q}\beta _{i})$, it
holds (cf. BHK, formula 3.6-3.8)
\begin{equation}
\sup_{\boldsymbol{\theta} \in \Theta }\left\vert \frac{\partial c_{0}(\boldsymbol{\theta })}{%
\partial \boldsymbol{\theta }}\right\vert \leq \frac{C_{1}}{(1-\rho _{0})^{2}}, \label{eq:lemma32}
\end{equation}
where $\sum_{i=1}^{q}\beta _{i}\leq \rho _{0}<1$. 

The derivatives $\partial c_{i}(\boldsymbol{\theta })/\partial \boldsymbol{\theta }_{1}=\boldsymbol{0}$, and by Lemma 3.2 of BHK, formula (3.9),
\begin{equation}
\frac{1}{c_{i}(\boldsymbol{\theta})} 
\left\vert \frac{\partial c_{i}(\boldsymbol{\theta })}{\partial \boldsymbol{\theta }%
_{2}}\right\vert \leq C_{2}i,\quad 1\leq i<\infty, \label{eq:lemma33}
\end{equation}
for some constant $C_2$. The derivatives $\partial d_{i}(\boldsymbol{\theta })/\partial \boldsymbol{\theta }_{1}=\boldsymbol{0}$. Mimicking the proof of this lemma, it is straightforward to show that the derivatives $\partial d_{i}(\boldsymbol{\theta })/\partial 
\boldsymbol{\theta }_{2}$ satisfy a similar formula:
\begin{equation}
\frac{1}{d_{i}(\boldsymbol{\theta })} \left\vert \frac{\partial d_{i}(\boldsymbol{\theta })}{\partial \boldsymbol{\theta }%
_{2}}\right\vert \leq C_{3}i,\quad 1\leq i<\infty, \label{eq:lemma34}
\end{equation}
for some constant $C_3$. The function $g_{t,T}(\boldsymbol{\theta }_{1})$ is bounded, and by Lemma 1 its derivatives are bounded. Now, applying (\ref{eq:lemma34}) to the second term on the right-hand side of (\ref{eq:lemma31}) yields
\begin{align}
\sup_{\boldsymbol{\theta} \in \Theta }\sum_{i=1}^{\infty }\left\vert \frac{\partial d_{i}(%
\boldsymbol{\theta })}{\partial \boldsymbol{\theta }}\right\vert g_{t-i+1,T}(\boldsymbol{\theta }_1) &\leq
C_{3}\sum_{i=1}^{\infty }id_{i}(\boldsymbol{\theta })g_{t-i+1,T}(\boldsymbol{\theta }_1) \label{eq:lemma35} \nonumber \\
&\leq C_{4}\sum_{i=1}^{\infty }i\rho _{0}^{i/q}g_{t-i+1,T}(\boldsymbol{\theta }_1) \nonumber \\
&<C_{5}.   
\end{align}
Similarly, applying (\ref{eq:dineq}) to the third term gives
\begin{align}
\sup_{\theta \in \Theta }\sum_{i=1}^{\infty }d_{i}(\boldsymbol{\theta }%
)\left\vert \frac{\partial g_{t-i+1,T}(\boldsymbol{\theta }_{1})}{\partial \boldsymbol{%
\theta }_{1}}\right\vert  &\leq C_{6}\sum_{i=1}^{\infty }\rho
_{0}^{i/q}\left\vert \frac{\partial g_{t-i+1,T}(\boldsymbol{\theta }_{1})}{%
\partial \boldsymbol{\theta }_{1}}\right\vert \label{eq:lemma36} \\
&< C_{7}. \nonumber 
\end{align}
Applying (\ref{eq:lemma32}), (\ref{eq:lemma33}), (\ref{eq:lemma35}) and (\ref{eq:lemma36}) to (\ref{eq:lemma31}), we obtain
\begin{equation}
\left\vert \frac{1}{h_{t,T}(\boldsymbol{\theta )}}\frac{\partial h_{t,T}(\boldsymbol{%
\theta )}}{\partial \boldsymbol{\theta }}\right\vert \leq C_{8}+ C_{9} \left\vert \frac{%
\sum_{i=1}^{\infty }ic_{i}(\boldsymbol{\theta })X_{t-i,T}^{2}}{c_0(\boldsymbol{\theta }) + d_0(\boldsymbol{\theta }) +\sum_{i=1}^{%
\infty }c_{i}(\boldsymbol{\theta })X_{t-i,T}^{2}}\right\vert. 
\end{equation}
Exploiting the inequality $x/(1+x)\leq x^{s}$ for all $x\geq 0$ and $s\in
(0,1)$ as in \textcite{boussama2000} and \textcite{FrancqZakoian2004}, and Jensen's inequality $(a+b)^{s}\leq a^{s}+b^{s}$, yields
 \begin{align*}
		\mathbb{E}\sup_{\boldsymbol{\theta}\in\Theta}\left|\frac{1}{h_{t,T}(\boldsymbol{\theta})}\frac{\partial h_{t,T}(\boldsymbol{\theta})}{\partial\boldsymbol{\theta}}\right|
		&\leq C_8 + C_{9}\mathbb{E}\left(\sup_{\boldsymbol{\theta}\in\Theta} \frac{\sum_{i=1}^\infty i c_{i}(\boldsymbol{\theta}) X_{t-i,T}^{2}}{c_0(\boldsymbol{\theta }) + d_0(\boldsymbol{\theta })
  +\sum_{i=1}^\infty c_{i}(\boldsymbol{\theta}) X_{t-i,T}^{2}}\right)\\
		&\leq C_8 + C_{9}\mathbb{E}\left(\sup_{\boldsymbol{\theta}\in\Theta} \sum_{i=1}^\infty i \left( \frac{c_i(\boldsymbol{\theta})X_{t-i,T}^{2}}{c_0(\boldsymbol{\theta }) + d_0(\boldsymbol{\theta })} \right) ^{s} \right) \\
		&\leq C_8 + C_{10}\mathbb{E}\left(\sup_{\boldsymbol{\theta}\in\Theta} \sum_{i=1}^\infty i\rho_0^{is/q}X_{t-i,T}^{2s}\right)\\
		&\leq C_8 + C_{10}\mathbb{E}\left(\sup_{\boldsymbol{\theta}\in\Theta} \sum_{i=1}^\infty i\rho_0^{is/q}X_{t-i,T}^{*2s}\right)\\
		& \leq C_8 + C_{10}\sup_{\boldsymbol{\theta}\in\Theta} \sum_{i=1}^\infty i\rho_0^{is/q}\mathbb{E}\left(\sup_{\boldsymbol{\theta}\in\Theta}X_{t-i,T}^{*2s}\right)\\
		&<\infty,
	\end{align*}
	for all $s\in (0,1]$ by weak stationarity of the bounding process $\{X_{t,T}^{\ast}\}$. Note that for large but finite values of $\upsilon$, we can choose $s$ close to zero. This proves (\ref{eq:boundedh1deriv}). The proofs of (\ref{eq:boundedh2deriv}) and (\ref{eq:boundedh3deriv}) follow using similar arguments, boundedness of the function $g_{t,T}(%
\boldsymbol{\theta }_{1})$, Lemma 1 and Lemmas 3.1--3.3 of BHK.
\end{proof}
\begin{lemma}
\label{lemma:gpd}
The partial derivatives corresponding to a time-varying intercept $\alpha_0(u,\boldsymbol{\theta}_1) = \alpha_0 + \alpha_{01}G(u,\boldsymbol{\theta}_1)$ consisting of one logistic transition function with one $c$ parameter are linearly independent functions in $u$.
\end{lemma}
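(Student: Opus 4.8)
The plan is to compute the partial derivatives of the intercept $\alpha_0(u,\boldsymbol{\theta}_1)=\alpha_0+\alpha_{01}G(u;\gamma,c)$ with respect to the parameters that actually enter it, namely $\alpha_0,\alpha_{01},\gamma,c$ (the GARCH parameters give identically zero derivatives and are irrelevant here), and then to show that the resulting functions of $u$ admit no nontrivial linear relation. Using the derivative formulas from Lemma 1, and suppressing the arguments of $G=G(u;\gamma,c)$, the four functions are
\[
\frac{\partial\alpha_0}{\partial\alpha_0}=1,\quad
\frac{\partial\alpha_0}{\partial\alpha_{01}}=G,\quad
\frac{\partial\alpha_0}{\partial\gamma}=\alpha_{01}(u-c)G(1-G),\quad
\frac{\partial\alpha_0}{\partial c}=-\alpha_{01}\gamma G(1-G).
\]
At the true parameter the model is identified, so $\alpha_{01}\neq0$ and $\gamma>0$; these nondegeneracy facts are used at the end.

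I would suppose that $\lambda_1\cdot 1+\lambda_2 G+\lambda_3\alpha_{01}(u-c)G(1-G)-\lambda_4\alpha_{01}\gamma G(1-G)=0$ holds for all $u\in[0,1]$ and aim to deduce $\lambda_1=\lambda_2=\lambda_3=\lambda_4=0$. Each of the four functions is real-analytic on $\mathbb{R}$ (the logistic $G$ is a composition of analytic maps with a strictly positive denominator, and products and the affine factor $u-c$ preserve analyticity), so the linear combination is real-analytic. Since it vanishes on the interval $[0,1]$ of positive length, the identity theorem for real-analytic functions forces it to vanish on all of $\mathbb{R}$. This passage to the whole line is the device that makes the separation of terms transparent.

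On $\mathbb{R}$ I would then exploit the asymptotics of the logistic function. Because $G(1-G)=e^{-\gamma(u-c)}/(1+e^{-\gamma(u-c)})^2$ decays exponentially as $u\to\pm\infty$, the two terms carrying the factor $G(1-G)$ vanish in both limits, the exponential decay dominating the linear factor $u-c$. Letting $u\to-\infty$ gives $G\to0$ and hence $\lambda_1=0$; letting $u\to+\infty$ gives $G\to1$ and hence $\lambda_1+\lambda_2=0$, so $\lambda_2=0$. What remains is $[\lambda_3\alpha_{01}(u-c)-\lambda_4\alpha_{01}\gamma]G(1-G)=0$ for all $u$; since $0<G<1$ makes $G(1-G)>0$ everywhere, I can divide it out to obtain the affine identity $\lambda_3\alpha_{01}(u-c)-\lambda_4\alpha_{01}\gamma\equiv0$. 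Matching the coefficient of $u$ gives $\lambda_3\alpha_{01}=0$, whence $\lambda_3=0$ because $\alpha_{01}\neq0$, and the constant term then forces $\lambda_4\alpha_{01}\gamma=0$, whence $\lambda_4=0$ because $\alpha_{01}\neq0$ and $\gamma>0$. All coefficients vanish, which is the claimed linear independence and verifies (A3) in this case.

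I do not expect a serious obstacle: the argument is elementary once the analytic extension is in place, and the only points requiring care are the domination of $(u-c)$ by the exponentially small $G(1-G)$ in the limits and the reliance on the identification facts $\alpha_{01}\neq0$, $\gamma>0$. A route that avoids the passage to $\mathbb{R}$ is the substitution $g=G(u)$, a real-analytic diffeomorphism of $[0,1]$ onto a subinterval of $(0,1)$, under which the four functions become $1,\ g,\ \tfrac{\alpha_{01}}{\gamma}g(1-g)\ln\frac{g}{1-g},\ -\alpha_{01}\gamma g(1-g)$; linear independence then reduces to the fact that the transcendental term $g(1-g)\ln\frac{g}{1-g}$ cannot coincide with a quadratic polynomial, after which $\{1,g,g(1-g)\}$ are independent by degree.
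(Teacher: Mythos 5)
Your proposal is correct, and it proves the lemma by a genuinely different route than the paper. The paper argues by contradiction using monotonicity: writing the supposed relation as $-\lambda_0-\lambda_1 G=\bigl[b_1(u-c)+b_2\bigr]G(1-G)$ with $b_1=\lambda_2\alpha_{01}$, $b_2=-\lambda_3\alpha_{01}\gamma$, it observes that the left-hand side is monotone in $u$ (if $\lambda_1\neq 0$) or constant (if $\lambda_1=0$), and then shows by a sign analysis of the derivative of the right-hand side at and around $u=c$ that the bell-shaped function $[b_1(u-c)+b_2]G(1-G)$ can be neither monotone nor constant unless $b_1=b_2=0$, after which all coefficients collapse to zero. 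You instead extend the identity from $[0,1]$ to all of $\mathbb{R}$ by real-analyticity and separate the four functions by their asymptotics: the exponential decay of $G(1-G)$ kills the $\gamma$- and $c$-derivative terms at $\pm\infty$, the limits $G\to 0$ and $G\to 1$ then force $\lambda_1=0$ and $\lambda_2=0$, and dividing the remainder by the strictly positive factor $G(1-G)$ reduces the problem to an affine identity in $u$. Both proofs rest on the same nondegeneracy facts ($\alpha_{01}\neq 0$, $\gamma>0$), which are guaranteed by the paper's identification discussion and parameter space. What your approach buys is a more mechanical and transparent argument that avoids the paper's somewhat delicate case analysis, at the cost of invoking the identity theorem for real-analytic functions; the paper's argument is more elementary in its toolkit (calculus on the interval only) but more ad hoc. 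Your closing alternative via the substitution $g=G(u)$, reducing independence to the fact that $g(1-g)\ln\frac{g}{1-g}$ is not a quadratic polynomial, is likewise sound and makes the transcendence underlying the result explicit.
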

\begin{proof}
    
Suppose for contradiction that there exists $\boldsymbol{\lambda} = (\lambda_0, \lambda_1, \lambda_2, \lambda_3)^\intercal \neq \boldsymbol{0}$ such that
\begin{equation*}
\boldsymbol{\lambda}^\intercal\frac{\partial \alpha_0(u,\boldsymbol{\theta}_1)}{\partial \boldsymbol{\theta}_1} = 0.
\end{equation*}
Suppress for notational convenience the dependence of $G$ on $u$.
Then it holds 
\begin{equation}
-\lambda_0 -\lambda_1 G = \lambda_2\alpha_{01}(u-c)G(1-G) + \lambda_3 \alpha_{01}(-\gamma)G(1-G). \label{eq:monotone}
\end{equation}
Then the function \begin{align*}
f &= \lambda_2\alpha_{01}(u-c)G(1-G) + \lambda_3 \alpha_{01}(-\gamma)G(1-G)
\end{align*} is either monotone or constant in $u$ as the LHS of (\ref{eq:monotone}) is a monotone function if $\lambda_1 \neq 0$ and constant if $\lambda_1 = 0$. We begin by showing that $\lambda_1 \neq 0$ leads to a contradiction. Let $b_1 = \lambda_2\alpha_{01}$ and $b_2 = \lambda_3\alpha_{01}(-\gamma).$ We have that \begin{equation}
\frac{\partial f}{\partial u} = b_1G(1-G) + \gamma b_1(u-c)G(1-G)(1-2G) + \gamma b_2G(1-G)(1-2G). \label{eq:fderiv}
\end{equation}
Note that, evaluated at $c$, we have $\partial f / \partial u =0.25b_1.$ In order for the function to be monotone, the derivative has to have the same sign everywhere. Therefore, $b_1$ completetely determines whether the function is monotonically increasing or decreasing. Suppose without loss of generality that $b_1>0.$ Then $f$ is monotonically increasing. Note that $G(1-G)$ is increasing for $u<c.$ Consider again \begin{equation}
f(u) = b_1(u-c)G(1-G) + b_2G(1-G).
\end{equation}
This function is increasing for $u<c$ if $b_1(u-c) + b_2>0$. Letting $u\to c$ then shows $b_2>0,$ as if $b_2<0$ we can pick $u$ so that $b_1(u-c) + b_2 < 0.$ Note now that $G(1-G)$ is decreasing for $u>c.$ 
But then $b_1(u-c)>0$ and $b_2>0$ implies that $f$ is decreasing for $u>c,$ which contradicts $f$ monotone. 

Now suppose $\lambda_1 = 0$ but $\lambda_0 >0.$ Then $f$ is constant. Then $\partial f / \partial u$ is zero everywhere. Evaluating again this derivative at $u=c$, we see that $b_1 =0$ in (\ref{eq:fderiv}), and consequently we have to have $b_2 = 0$ as well. Then the RHS of (\ref{eq:monotone}) is zero, which in turn implies $\lambda_0 =0.$ Now, it follows easily from (\ref{eq:monotone}) that $\lambda_2$ and $\lambda_3$ are $0$ as well. This concludes the proof. 

\end{proof}

\subsection*{A.4 Proof of Theorem 1}

We prove consistency of the estimator $\widehat{\boldsymbol{\theta}}_{T}$. Consistency of the truncated estimator $\bar{\boldsymbol{\theta}}_{T}$ will be proved in Appendix A.6. We make use of the following theorem.  
\newtheorem*{thmappendix3}{Theorem A.3}
\begin{thmappendix3}[\cite{Amemiya1985}, Theorem 4.1.1]

	Suppose that \begin{itemize}
		\item[(C1)] The parameter space $\Theta$ is a compact subset of $\mathbb{R}^k$.
		\item[(C2)] The objective function $Q_T(\boldsymbol{\theta})$ is a measurable function of the data for all $\boldsymbol{\theta}\in\Theta$, and $Q_T(\boldsymbol{\theta})$ is continuous in $\boldsymbol{\theta}\in\Theta$.
		\item[(C3)] $Q_T(\boldsymbol{\theta})$ converges uniformly in probability to a non-stochastic function $Q(\boldsymbol{\theta})$, and $Q(\boldsymbol{\theta})$ attains a unique global maximum at $\boldsymbol{\theta}_0$.
	\end{itemize}
Then $$\widehat{\boldsymbol{\theta}}_T\overset{P}{\to}\boldsymbol{\theta}_0.$$
	\end{thmappendix3}
\noindent \textbf{Proof of Theorem 1.} In order to prove Theorem 1, we verify conditions (C1)--(C3) in Theorem A.3. In our case, $Q_T(\boldsymbol{\theta})=L_T(\boldsymbol{\theta})$ 
defined in (\ref{loglik}). Condition (C1) holds by (A2). Clearly, $L_T(\boldsymbol{\theta})$ is a measurable function of the data. Moreover, it is Lipschitz continuous in $\boldsymbol{\theta}$, which implies that it is continuous in $\boldsymbol{\theta}$, so (C2) is satisfied. To verify (C3), we first use Theorem A.1 to show that the log-likelihood function converges in probability to its expectation. Compactness of the parameter space and stochastic equicontinuity are then sufficient conditions for uniform convergence. Finally, uniqueness of the global maximum at $\boldsymbol{\theta}_0$ follows from a widely used argument for GARCH-type log-likelihood functions.

To employ Theorem A.1, we have to verify Assumption (S1) with $n=1$ for the log-likelihood function $l_{t,T}(\boldsymbol{\theta})$ and its stationary approximation $\widetilde{l}_t(u, \boldsymbol{\theta})$. We begin by proving two lemmas that will be used for the purpose.
\begin{lemma} \label{lemma:lls1}
Uniformly in $t=1,\ldots,T$ and $u\in[0,1]$,
	\begin{equation}
	\sup_{\boldsymbol{\theta}\in\Theta}\norm{l_{t,T}(\boldsymbol{\theta})-\widetilde{l}_t(u, \boldsymbol{\theta})}_1\leq \frac{C_1}{T} + C_2\left|\frac{t}{T}-u\right|. \label{eq:datlip}
	\end{equation}
\end{lemma}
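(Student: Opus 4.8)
The plan is to bound, for each fixed $\boldsymbol{\theta}\in\Theta$ (with constants that turn out to be uniform in $\boldsymbol{\theta}$), the $L^1$-distance between the two per-observation quasi-log-likelihoods by writing
\begin{equation*}
l_{t,T}(\boldsymbol{\theta})-\widetilde l_t(u,\boldsymbol{\theta})=-\frac12\left[\log h_{t,T}(\boldsymbol{\theta})-\log\widetilde h_t(u,\boldsymbol{\theta})\right]-\frac12\left[\frac{X_{t,T}^2}{h_{t,T}(\boldsymbol{\theta})}-\frac{\widetilde X_t^2(u)}{\widetilde h_t(u,\boldsymbol{\theta})}\right]
\end{equation*}
and treating the logarithmic term and the standardized-residual term separately. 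The key preliminary observation is that on the compact parameter space both conditional variances are bounded below by a positive constant, $h_{t,T}(\boldsymbol{\theta}),\widetilde h_t(u,\boldsymbol{\theta})\geq\underline h>0$ (this follows from the construction of $\Theta$, as in BHK, since the time-varying intercept is bounded below). Consequently $x\mapsto\log x$ and $x\mapsto 1/x$ are Lipschitz on the range of the variances, which is exactly what converts an approximation error in $h$ into an approximation error in $l$.

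For the logarithmic term the mean value theorem and the lower bound give $|\log h_{t,T}(\boldsymbol{\theta})-\log\widetilde h_t(u,\boldsymbol{\theta})|\leq\underline h^{-1}|h_{t,T}(\boldsymbol{\theta})-\widetilde h_t(u,\boldsymbol{\theta})|$, so taking $L^1$-norms and invoking \eqref{eq:htapprox} with $n=1$ bounds this term by $C(1/T+|t/T-u|)$ with $C$ free of $\boldsymbol{\theta}$ (the constants in Lemma 2 are uniform over $\Theta$ because the coefficient bounds $c_i,d_i\leq C\rho_0^{i/q}$ and the Lipschitz bound for $g$ from Lemma 1 are). For the residual term I would exploit the coupling underlying the stationary approximation: since $X_{t,T}^2=h_{t,T}(\boldsymbol{\theta}_0)\varepsilon_t^2$ and $\widetilde X_t^2(u)=\widetilde h_t(u,\boldsymbol{\theta}_0)\varepsilon_t^2$ share the same innovation, the residual term equals $\varepsilon_t^2\big(r_{t,T}(\boldsymbol{\theta})-\widetilde r_t(u,\boldsymbol{\theta})\big)$, where $r_{t,T}(\boldsymbol{\theta})=h_{t,T}(\boldsymbol{\theta}_0)/h_{t,T}(\boldsymbol{\theta})$ and $\widetilde r_t(u,\boldsymbol{\theta})=\widetilde h_t(u,\boldsymbol{\theta}_0)/\widetilde h_t(u,\boldsymbol{\theta})$. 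Because $\varepsilon_t^2$ is independent of $\mathcal{F}_{t-1}$ with $\mathbb{E}\varepsilon_t^2=1$, while $r_{t,T}-\widetilde r_t$ is $\mathcal{F}_{t-1}$-measurable, the factor $\varepsilon_t^2$ drops out of the expectation and I am left with bounding $\mathbb{E}|r_{t,T}(\boldsymbol{\theta})-\widetilde r_t(u,\boldsymbol{\theta})|$.

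Writing this difference of ratios over a common denominator and adding and subtracting $h_{t,T}(\boldsymbol{\theta})\widetilde h_t(u,\boldsymbol{\theta}_0)$ in the numerator splits it as
\begin{equation*}
r_{t,T}(\boldsymbol{\theta})-\widetilde r_t(u,\boldsymbol{\theta})=\frac{h_{t,T}(\boldsymbol{\theta}_0)-\widetilde h_t(u,\boldsymbol{\theta}_0)}{\widetilde h_t(u,\boldsymbol{\theta})}+r_{t,T}(\boldsymbol{\theta})\,\frac{\widetilde h_t(u,\boldsymbol{\theta})-h_{t,T}(\boldsymbol{\theta})}{\widetilde h_t(u,\boldsymbol{\theta})}.
\end{equation*}
The first term is again dispatched by the lower bound together with \eqref{eq:htapprox} at $n=1$ evaluated at $\boldsymbol{\theta}_0$. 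The second term is the crux: it pairs the variance ratio $r_{t,T}(\boldsymbol{\theta})$, which by \eqref{eq:htratio} (Lemma 3) is controlled only in a low fractional moment, $\mathbb{E}\sup_{\boldsymbol{\theta}}r_{t,T}(\boldsymbol{\theta})^{\upsilon}<\infty$ for $\upsilon<1+d$, with the relative approximation error $|\widetilde h_t(u,\boldsymbol{\theta})-h_{t,T}(\boldsymbol{\theta})|/\widetilde h_t(u,\boldsymbol{\theta})$, whose $L^1$-smallness comes from \eqref{eq:htapprox}.

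Controlling this product is where the work lies, and it is the main obstacle. I would separate the $g$-part and the $X^2$-part of $\widetilde h_t(u,\boldsymbol{\theta})-h_{t,T}(\boldsymbol{\theta})$: the $g$-part is bounded deterministically by $C|t/T-u|$ using Lemma 1 and the lower bound on $\widetilde h_t(u,\boldsymbol{\theta})$, leaving $C|t/T-u|\,r_{t,T}(\boldsymbol{\theta})$, whose expectation is $O(|t/T-u|)$ since $\mathbb{E}r_{t,T}(\boldsymbol{\theta})<\infty$; the $X^2$-part requires combining the geometric decay $c_i\leq C\rho_0^{i/q}$, the inner bound on $\norm{X_{t-i,T}^2-\widetilde X_{t-i}^2(u)}$ established in the proof of Lemma 2, and Hölder's inequality against the fractional moment of $r_{t,T}(\boldsymbol{\theta})$. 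The delicate point is the moment bookkeeping: because \eqref{eq:htapprox} is available only with $n=1$ under the consistency assumptions (A1) and (A5), the Hölder split must be calibrated so that the approximation error is measured in $L^1$ while the ratio is measured in a fractional $L^{\upsilon}$-norm, which I would arrange exactly as in the proof of Lemma 4, via the elementary inequalities $x/(1+x)\leq x^s$ and $(a+b)^s\leq a^s+b^s$ for small $s$, bounding $X_{t-i,T}^2$ by the stationary majorant $\{X_{t,T}^{\ast}\}$ so that only low-order moments, guaranteed by (A1) and the weak stationarity of $\{X_{t,T}^{\ast}\}$, are needed. Assembling the three contributions yields the claimed bound $C_1/T+C_2|t/T-u|$ uniformly in $t$ and $u$, and since all constants are free of $\boldsymbol{\theta}$, taking the supremum over $\boldsymbol{\theta}\in\Theta$ completes the proof.
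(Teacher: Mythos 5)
Your skeleton matches the paper's proof: the same split of $l_{t,T}(\boldsymbol{\theta})-\widetilde l_t(u,\boldsymbol{\theta})$ into a logarithmic term and a standardized-residual term, the same treatment of the log term (mean value theorem, the uniform lower bound on the conditional variances, then (\ref{eq:htapprox})), and the same realization that everything reduces to pairing a variance ratio having only low-order moments with an approximation error that is small only in $L^1$. Your one genuine departure is attractive: by invoking the coupling ($X_{t,T}^2$ and $\widetilde X_t^2(u)$ are driven by the same innovation $\varepsilon_t$) and the $\mathcal{F}_{t-1}$-measurability of all four conditional variances, you factor $\varepsilon_t^2$ out of the expectation, so that only $\mathbb{E}\varepsilon_t^2=1$ is needed for this step; the paper instead adds and subtracts $\widetilde X_t^2(u)/h_{t,T}(\boldsymbol{\theta})$ and pays for the innovations through $\norm{\varepsilon_t^2}_{1+\delta/2}$ inside a H\"older step, which is where the $1+d$ moment in (A1) enters.

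The gap sits exactly at the step you call the crux, and it is real. H\"older's inequality cannot be ``calibrated so that the approximation error is measured in $L^1$ while the ratio is measured in a fractional $L^{\upsilon}$-norm'': conjugate exponents satisfy $1/p+1/q=1$, so placing the ratio $r_{t,T}(\boldsymbol{\theta})$ in $L^{1+\delta/2}$ (all that Lemma 3 under (A1) provides) forces the relative error into $L^{1+2/\delta}$, a norm in which Lemma 2 gives no bound of the form $C_1/T+C_2|t/T-u|$. Likewise, the inequalities $x/(1+x)\le x^s$ and $(a+b)^s\le a^s+b^s$ from the proof of Lemma 4, combined with the stationary majorant $\{X^{*}_{t,T}\}$, establish \emph{finiteness} of moments but throw away the \emph{rate}, and the rate is the entire content of the lemma. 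The paper closes this precise spot with a device your proposal lacks: since the add-and-subtract step is symmetric, one may assume without loss of generality that $h_{t,T}(\boldsymbol{\theta})\ge\widetilde h_t(u,\boldsymbol{\theta})$, so the relative error $\left(h_{t,T}(\boldsymbol{\theta})-\widetilde h_t(u,\boldsymbol{\theta})\right)/h_{t,T}(\boldsymbol{\theta})$ lies in $[0,1)$; the pointwise inequality $y^q\le y$ on $[0,1)$ then pulls the high-order norm demanded by H\"older down to the $L^1$ quantity that Lemma 2 controls at the desired rate. Your ratio decomposition admits the same repair: take the WLOG ordering $\widetilde h_t(u,\boldsymbol{\theta})\ge h_{t,T}(\boldsymbol{\theta})$ (otherwise use the symmetric split with the roles of the two variances exchanged), so that your light factor $\left(\widetilde h_t(u,\boldsymbol{\theta})-h_{t,T}(\boldsymbol{\theta})\right)/\widetilde h_t(u,\boldsymbol{\theta})$ lies in $[0,1)$, and then apply H\"older with exponents $(1+\delta/2,\,1+2/\delta)$ together with the power inequality. (Even then, strictly speaking one obtains $\norm{y}_q\le\norm{y}_1^{1/q}$, i.e.\ a H\"older-type rate with exponent $1/q<1$ rather than the linear rate; this still satisfies DRW's Assumption (S1) with $\alpha=1/q$, so the downstream consistency argument is unaffected, and the same caveat applies to the paper's own computation.) As written, however, your decisive estimate is asserted rather than proved.
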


\begin{rmk}
DRW and Kristensen and Lee (2019) proved invariance principles for locally stationary processes.  Theorem 6 in Kristensen and Lee (2019) contains such an invariance principle for time-varying GARCH-type models. We verify (S1) and (S2) for the log-likelihood function directly in order to avoid introducing notation and terminology that would otherwise not be needed in this paper.    
\end{rmk}

\begin{proof}
By the triangle inequality,
	\begin{equation}
		\norm{l_t(\boldsymbol{\theta})-l^{\prime}_t(\boldsymbol{\theta})}_1 \leq\norm{\ln h_t-\ln h^{\prime}_{t}}_1 + \norm{\frac{{X}^2_t}{h_t}-\frac{X^{\prime2}_t}{h_t^{\prime}}}_1. \label{eq:logliktriangle}
	\end{equation}
First examine $\norm{\ln h_t-\ln h^{\prime}_{t}}_1$. Consider $\left| \ln x - \ln y \right|$ and assume that $y$ is bounded from below by some constant $C>0$. Then, as $\ln (1+x)\leq x$ for all $x>-1$,
	\begin{align*}
		\left| \ln x - \ln y \right|
		&=\left|\ln \left( 1+\left( \frac{%
			x}{y}-1\right) \right)\right|\\ &\leq \left|\frac{x}{y}-1\right| \leq \frac{1}{C}\left|x-y\right|.
	\end{align*}
Consequently, for $h_t$ bounded from below by $C>0$,
	\begin{align}
		\norm{\ln h_t-\ln h_t^{\prime}}_1&\leq \frac{1}{C}\norm{h_t-h_t^{\prime}}_1.
  \label{eq:loglikh}
	\end{align}
Next, take the second term on the right-hand side of (\ref{eq:logliktriangle}). By adding and subtracting $X_t^{\prime2}/ h_t$ and using the triangle inequality, we obtain \begin{align}
		\norm{\frac{{X}^2_t}{{h}_t}-\frac{X^{\prime2}_t}{{h}_t^{\prime}}}_1
		&\leq \norm{\frac{1}{h_t}\left(X_t^2 - X_t^{\prime 2}\right)}_1 +  \norm{\frac{X_t^{\prime2}}{h_t h_t^{\prime}}\left(h_t - h_t^{\prime}\right)}_1.
  \label{eq:loglikx}
	\end{align}
Now, set $h_{t} = h_{t,T}(\boldsymbol{\theta})$ and $h^{\prime}_t = \widetilde{h}_t(u,\boldsymbol{\theta})$ in (\ref{eq:loglikh}), and take the sup norm of the resulting function. Further, as $\widetilde{h}_t(u,\boldsymbol{\theta})>0$ and bounded from below by a constant by (A3),
\begin{equation}
	 \frac{1}{\widetilde{h}_t(u,\boldsymbol{\theta})}<\infty, \label{eq:boundedbelow}
\end{equation}
so it follows that
\begin{align}
		\sup_{\boldsymbol{\theta}\in\Theta} \norm{\ln h_{t,T}(\boldsymbol{\theta})-\ln \widetilde{h}_t(u,\boldsymbol{\theta})}_1&\leq \frac{1}{C}\sup_{\boldsymbol{\theta}\in\Theta}\norm{h_{t,T}(\boldsymbol{\theta})-\widetilde{h}_t(u,\boldsymbol{\theta})}_1. \label{eq:A1}
	\end{align}
Similarly, setting $X^{2}_t=X_{t,T}^{2}$ and 
$X_t^{\prime 2} = \widetilde{X}_t^2 (u)$ in (\ref{eq:loglikx}) and taking the sup norm leads to
\begin{align}
		\sup_{\boldsymbol{\theta}\in\Theta}\norm{\frac{{X}^2_{t,T}}{{h}_{t,T}(\boldsymbol{\theta})}-\frac{\widetilde{X}^2_t(u)}{\widetilde{h}_t(u,\boldsymbol{\theta})}}_1\leq\sup_{\boldsymbol{\theta}\in\Theta}\norm{\frac{1}{h_{t,T}(\boldsymbol{\theta})}\left({X}^2_{t,T}-\widetilde{X}^2_t(u)\right)}_1 +  \sup_{\boldsymbol{\theta}\in\Theta}\norm{\frac{\widetilde{X}_t^{2}(u)}{\widetilde{h}_t(u,\boldsymbol{\theta})}\frac{\left(h_{t,T}(\boldsymbol{\theta}) - \widetilde{h}_t(u,\boldsymbol{\theta})\right)}{h_{t,T}(\boldsymbol{\theta})}}_1. \label{eq:A2}
	\end{align}
Using (\ref{eq:boundedbelow}), we have for the first term on the right-hand side of (\ref{eq:A2}),
\begin{align}
\sup_{\boldsymbol{\theta}\in\Theta}\norm{\frac{1}{h_{t,T}(\boldsymbol{\theta})}\left({X}^2_{t,T}-\widetilde{X}^2_t(u)\right)}_1
\leq
\frac{1}{C} \norm{X_{t,T}^2 - \widetilde{X}_t^2 (u)}_1.
\label{eq:A22}
\end{align}
Applying Hölder's inequality to the second term yields 
\begin{align}
\sup_{\boldsymbol{\theta}\in\Theta}\norm{\frac{\widetilde{X}_t^{2}(u)}{\widetilde{h}_t(u,\boldsymbol{\theta})}\frac{\left(h_{t,T}(\boldsymbol{\theta}) - \widetilde{h}_t(u,\boldsymbol{\theta})\right)}{h_{t,T}(\boldsymbol{\theta})}}_1
\leq
\sup_{\boldsymbol{\theta}\in\Theta}\norm{\frac{\widetilde{X}_t^{2}(u)}{\widetilde{h}_t(u,\boldsymbol{\theta})}}_p\sup_{\boldsymbol{\theta}\in\Theta}\norm{\frac{\left(h_{t,T}(\boldsymbol{\theta}) - \widetilde{h}_t(u,\boldsymbol{\theta})\right)}{h_{t,T}(\boldsymbol{\theta})}}_q. \label{eq:A3}
\end{align}
Let $p=1+\delta/2$ with $\delta>0$ in (\ref{eq:A3}). By Lemma 5.1 of BHK, (A1) and independence of $\varepsilon_t$ and $\widetilde{h}_t(u,\boldsymbol{\theta})$,
\begin{align*}
\mathbb{E}\left(
\sup_{\boldsymbol{\theta}\in\Theta}\frac{\widetilde{X}_t^{2}(u)}{\widetilde{h}_t(u,\boldsymbol{\theta})}
\right)^{1+\delta/2}
&=
\mathbb{E}\left(
\varepsilon_t^2\sup_{\boldsymbol{\theta}\in\Theta}\frac{\widetilde{h}_t(u, \boldsymbol{\theta}_0)}{\widetilde{h}_t(u,\boldsymbol{\theta})}
\right)^{1+\delta/2}\\
&=
\mathbb{E}(\varepsilon_t^{2+\delta})
\mathbb{E}\left(
\sup_{\boldsymbol{\theta}\in\Theta}\frac{\widetilde{h}_t(u, \boldsymbol{\theta}_0)}{\widetilde{h}_t(u,\boldsymbol{\theta})}
\right)^{1+\delta/2}
< \infty.
\end{align*}
Since the arguments in (\ref{eq:loglikx}) remain valid if we instead of $X_t^{\prime2}/ h_t$ add and subtract $X_t^{2}/ h_t^{\prime}$ in (\ref{eq:logliktriangle}), we may without loss of generality assume that $h_{t,T}(\boldsymbol{\theta})\geq \widetilde{h}_t(u,\boldsymbol{\theta}).$ Then, since $\left(h_{t,T}(\boldsymbol{\theta}) - \widetilde{h}_t(u,\boldsymbol{\theta})\right)/h_{t,T}(\boldsymbol{\theta})\in[0,1)$, we have
\begin{equation}
\left(
\frac{h_{t,T}(\boldsymbol{\theta}) - \widetilde{h}_t(u,\boldsymbol{\theta})}{h_{t,T}(\boldsymbol{\theta})}
\right)^q
\leq
\frac{h_{t,T}(\boldsymbol{\theta}) - \widetilde{h}_t(u,\boldsymbol{\theta})}{h_{t,T}(\boldsymbol{\theta})} 
\label{eq:holderq}
\end{equation}
for all $q \geq 1$.
From (\ref{eq:boundedbelow})  and (\ref{eq:holderq})  it follows that
\begin{equation}
\sup_{\boldsymbol{\theta}\in\Theta}\norm{\frac{\left(h_{t,T}(\boldsymbol{\theta}) - \widetilde{h}_t(u,\boldsymbol{\theta})\right)}{h_{t,T}(\boldsymbol{\theta})}}_q    
\leq
\sup_{\boldsymbol{\theta}\in\Theta}\norm{\frac{\left(h_{t,T}(\boldsymbol{\theta}) - \widetilde{h}_t(u,\boldsymbol{\theta})\right)}{h_{t,T}(\boldsymbol{\theta})}}_1
\leq
\frac{1}{C} \sup_{\boldsymbol{\theta}\in\Theta}\norm{
h_{t,T}(\boldsymbol{\theta}) - \widetilde{h}_t(u,\boldsymbol{\theta})}_1.
\label{eq:A222}
\end{equation}
In view of (\ref{eq:A1}), (\ref{eq:A22}) and (\ref{eq:A222}), we have
\begin{align}
\sup_{\boldsymbol{\theta}\in\Theta}\norm{l_{t,T}(\boldsymbol{\theta})-\widetilde{l}_t(u,\boldsymbol{\theta})}_1 
&\leq \frac{1}{C_{1}}\sup_{\boldsymbol{\theta}\in\Theta}\norm{h_{t,T}(\boldsymbol{\theta})-\widetilde{h}_t(u,\boldsymbol{\theta})}_1 \nonumber \\ 
& + 
\frac{1}{C_{2}}\sup_{\boldsymbol{\theta}\in\Theta}\norm{{X}^2_{t,T}-\widetilde{X}^2_t(u)}_1. \label{eq:loglikfinal}
\end{align}
Applying (\ref{eq:uvineq}) and (\ref{eq:htapprox}) in Lemma 2 to the terms on the right-hand side of (\ref{eq:loglikfinal}) completes the proof.
\end{proof}

\begin{lemma} \label{lemma:ll5}
	The inequality $\sup_{u\in[0,1]}\norm{\widetilde{l}_t(u,\boldsymbol{\theta})}_1<\infty$ holds.
\end{lemma}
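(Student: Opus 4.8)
The plan is to bound the two pieces of $\widetilde{l}_t(u,\boldsymbol{\theta}) = -\frac{1}{2}\left[\log\widetilde{h}_t(u,\boldsymbol{\theta}) + \widetilde{X}^2_t(u)/\widetilde{h}_t(u,\boldsymbol{\theta})\right]$ separately, uniformly in $u$ (and in $\boldsymbol{\theta}$ over the compact set $\Theta$), starting from the triangle inequality
$$\norm{\widetilde{l}_t(u,\boldsymbol{\theta})}_1 \leq \tfrac{1}{2}\norm{\log\widetilde{h}_t(u,\boldsymbol{\theta})}_1 + \tfrac{1}{2}\norm{\frac{\widetilde{X}^2_t(u)}{\widetilde{h}_t(u,\boldsymbol{\theta})}}_1.$$
The workhorse throughout is the deterministic lower bound $\widetilde{h}_t(u,\boldsymbol{\theta}) \geq C > 0$, uniform in $u$ and $\boldsymbol{\theta}$, already available from (\ref{eq:boundedbelow}); it holds because $\widetilde{h}_t(u,\boldsymbol{\theta}) \geq c_0(\boldsymbol{\theta})$, all coefficients $c_i,d_i$ and the summands $g_{t-i+1}(u),\widetilde{X}^2_{t-i}(u)$ being nonnegative, and $c_0(\boldsymbol{\theta})$ being bounded away from zero on $\Theta$.

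For the ratio term I would argue directly. Since $\widetilde{h}_t(u,\boldsymbol{\theta}) \geq C$,
$$\norm{\frac{\widetilde{X}^2_t(u)}{\widetilde{h}_t(u,\boldsymbol{\theta})}}_1 \leq \frac{1}{C}\,\mathbb{E}\left[\widetilde{X}^2_t(u)\right].$$
The stationary approximation $\widetilde{X}_t(u)$ is a standard GARCH process with intercept $\alpha_0 + g(u)$, so by (A5) and $\mathbb{E}(\varepsilon_0^2)=1$ it has a finite second moment equal to $(\alpha_0 + g(u))/(1 - \sum_i\alpha_i - \sum_j\beta_j)$, which is bounded uniformly in $u$ because $g$ is bounded by Lemma 1. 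Hence this term is finite uniformly in $u$ and $\boldsymbol{\theta}$. Equivalently, one may factor $\widetilde{X}^2_t(u)/\widetilde{h}_t(u,\boldsymbol{\theta}) = \varepsilon_t^2\,\widetilde{h}_t(u,\boldsymbol{\theta}_0)/\widetilde{h}_t(u,\boldsymbol{\theta})$ and invoke the stationary analogue of Lemma 3 together with independence of $\varepsilon_t$, exactly as in the proof of Lemma \ref{lemma:lls1}.

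For the logarithmic term I would use the elementary inequality $|\log x| \leq x + 1/x$, valid for all $x>0$, to write
$$\norm{\log\widetilde{h}_t(u,\boldsymbol{\theta})}_1 \leq \mathbb{E}\left[\widetilde{h}_t(u,\boldsymbol{\theta})\right] + \mathbb{E}\left[\frac{1}{\widetilde{h}_t(u,\boldsymbol{\theta})}\right].$$
The second expectation is at most $1/C$ by the lower bound above. For the first, I would take expectations in the series representation of $\widetilde{h}_t(u,\boldsymbol{\theta})$, obtaining
$$\mathbb{E}\left[\widetilde{h}_t(u,\boldsymbol{\theta})\right] = c_0(\boldsymbol{\theta}) + \sum_{i=1}^{\infty} d_i(\boldsymbol{\theta})\,g_{t-i+1}(u) + \sum_{i=1}^{\infty} c_i(\boldsymbol{\theta})\,\mathbb{E}\left[\widetilde{X}^2_{t-i}(u)\right],$$
and bound it using the geometric decay $c_i(\boldsymbol{\theta}) \leq C_2\rho_0^{i/q}$ and $d_i(\boldsymbol{\theta}) \leq C_1\rho_0^{i/q}$ from (\ref{eq:cineq})--(\ref{eq:dineq}), boundedness of $g$, and the uniform-in-$u$ bound on $\mathbb{E}[\widetilde{X}^2_{t-i}(u)]$ just established; the two series converge since $0 < \rho_0 < 1$. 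This gives a finite bound uniform in $u$ and $\boldsymbol{\theta}$.

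Combining the displays yields a bound on $\norm{\widetilde{l}_t(u,\boldsymbol{\theta})}_1$ that does not depend on $u$, so taking the supremum over $u \in [0,1]$ proves the claim. I expect the only genuine work to be the uniform control of $\mathbb{E}[\widetilde{h}_t(u,\boldsymbol{\theta})]$, i.e. verifying summability of the two infinite series uniformly in $u$ and $\boldsymbol{\theta}$, since both the positive and negative parts of $\log\widetilde{h}_t$ must be handled; the lower bound on $\widetilde{h}_t$ and the finite second moment of $\widetilde{X}_t(u)$ are already in hand from (\ref{eq:boundedbelow}) and Proposition 1.
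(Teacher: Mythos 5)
Your proof is correct, but it takes a genuinely different route from the paper. The paper's proof of this lemma is a single line: it invokes Lemma 5.3 of BHK --- for each fixed $u$ the stationary approximation $\widetilde{X}_t(u)$ is an ordinary strictly stationary GARCH process with intercept $\alpha_0+g(u,\boldsymbol{\theta}_1)$, so BHK's moment bound for the stationary quasi-log-likelihood applies verbatim --- and then passes to the supremum over $u$. You instead re-derive the bound from scratch: the uniform lower bound $\widetilde{h}_t(u,\boldsymbol{\theta})\geq C>0$, the elementary inequality $|\log x|\leq x+1/x$, monotone convergence applied to the nonnegative series representation of $\widetilde{h}_t(u,\boldsymbol{\theta})$ together with the geometric decay (\ref{eq:cineq})--(\ref{eq:dineq}), and the uniformly bounded second moment of $\widetilde{X}_t(u)$ under (A5). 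What your longer version buys is that the uniformity in $u$ becomes explicit: read literally, the paper's step ``finite for all $u$, so the result also holds for the supremum'' is a non sequitur, since pointwise finiteness of a family of norms does not by itself bound their supremum; what rescues it is precisely what your computation exhibits, namely that the bound depends on $u$ only through $g(u,\cdot)$ and $\mathbb{E}\widetilde{X}_t^2(u)$, both of which are bounded uniformly on $[0,1]\times\Theta$ by compactness of $\Theta$ and boundedness of the logistic functions. What the paper's approach buys is brevity and consistency with its general strategy of leaning on established stationary GARCH theory (your alternative factorization $\widetilde{X}_t^2(u)/\widetilde{h}_t(u,\boldsymbol{\theta})=\varepsilon_t^2\,\widetilde{h}_t(u,\boldsymbol{\theta}_0)/\widetilde{h}_t(u,\boldsymbol{\theta})$ is exactly the device the paper uses inside the proof of Lemma \ref{lemma:lls1}). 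Two cosmetic points: the second-moment formula should be written at the true parameter $\boldsymbol{\theta}_0$, since $\widetilde{X}_t(u)$ is generated under $\boldsymbol{\theta}_0$ while only $\widetilde{h}_t(u,\boldsymbol{\theta})$ carries the free $\boldsymbol{\theta}$; and your bound is uniform in $\boldsymbol{\theta}$ as well as $u$, which is slightly stronger than the statement requires but is what the consistency proof ultimately uses.
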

\begin{proof}
Lemma 5.3 in BHK implies that $\norm{\widetilde{l}_t(u,\boldsymbol{\theta})}_1<\infty$ for all $u \in [0,1]$, so the result also holds for the supremum.
\end{proof}
Lemmas \ref{lemma:lls1} and \ref{lemma:ll5} together imply that (S1) is satisfied with $n=1$ for the log likelihood function.
The following lemma strengthens pointwise convergence to uniform convergence.

\begin{lemma} \label{lemma:ll6}
	The log-likelihood function $L_T(\boldsymbol{\theta}):= (1/T)\sum^T_{t=1}l_{t,T}(\boldsymbol{\theta})$ is stochastically equicontinuous (SE). 
\end{lemma}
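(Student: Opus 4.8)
The plan is to establish stochastic equicontinuity through a global Lipschitz bound on $L_T(\boldsymbol{\theta})$ whose random Lipschitz constant is bounded in mean uniformly in $T$. I would start from the standard sufficient condition: if there is a sequence of random variables $B_T$ with $\sup_T\mathbb{E}B_T<\infty$ such that $|L_T(\boldsymbol{\theta}_1)-L_T(\boldsymbol{\theta}_2)|\leq B_T|\boldsymbol{\theta}_1-\boldsymbol{\theta}_2|$ for all $\boldsymbol{\theta}_1,\boldsymbol{\theta}_2\in\Theta$, then $\{L_T\}$ is SE. Indeed, given $\epsilon,\eta>0$, Markov's inequality supplies an $M$ with $\mathbb{P}(B_T>M)<\eta$ for all $T$, and $\delta=\epsilon/M$ then yields $\mathbb{P}(\sup_{|\boldsymbol{\theta}_1-\boldsymbol{\theta}_2|<\delta}|L_T(\boldsymbol{\theta}_1)-L_T(\boldsymbol{\theta}_2)|>\epsilon)\leq\mathbb{P}(B_T>M)<\eta$.

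By the mean value theorem, $|L_T(\boldsymbol{\theta}_1)-L_T(\boldsymbol{\theta}_2)|\leq\sup_{\boldsymbol{\theta}\in\Theta}|\partial L_T(\boldsymbol{\theta})/\partial\boldsymbol{\theta}|\cdot|\boldsymbol{\theta}_1-\boldsymbol{\theta}_2|$, so it suffices to take $B_T=\sup_{\boldsymbol{\theta}\in\Theta}|(1/T)\sum_{t=1}^T\boldsymbol{s}_{t,T}(\boldsymbol{\theta})|$ and to verify $\sup_T\mathbb{E}B_T<\infty$. Since $\mathbb{E}B_T\leq(1/T)\sum_{t=1}^T\mathbb{E}\sup_{\boldsymbol{\theta}\in\Theta}|\boldsymbol{s}_{t,T}(\boldsymbol{\theta})|$, the whole task reduces to bounding $\mathbb{E}\sup_{\boldsymbol{\theta}\in\Theta}|\boldsymbol{s}_{t,T}(\boldsymbol{\theta})|$ uniformly in $t$ and $T$. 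Using the explicit score (\ref{eq:firstderivs}) together with $X_{t,T}^2=h_{t,T}(\boldsymbol{\theta}_0)\varepsilon_t^2$, I obtain the pointwise domination
\[
\sup_{\boldsymbol{\theta}\in\Theta}\left|\boldsymbol{s}_{t,T}(\boldsymbol{\theta})\right|\leq\frac{1}{2}\left(1+\varepsilon_t^2\sup_{\boldsymbol{\theta}\in\Theta}\frac{h_{t,T}(\boldsymbol{\theta}_0)}{h_{t,T}(\boldsymbol{\theta})}\right)\sup_{\boldsymbol{\theta}\in\Theta}\left|\frac{1}{h_{t,T}(\boldsymbol{\theta})}\frac{\partial h_{t,T}(\boldsymbol{\theta})}{\partial\boldsymbol{\theta}}\right|.
\]

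Taking expectations, the constant piece contributes $\tfrac12\,\mathbb{E}\sup_{\boldsymbol{\theta}}D_{t,T}$ with $D_{t,T}:=|h_{t,T}(\boldsymbol{\theta})^{-1}\partial h_{t,T}(\boldsymbol{\theta})/\partial\boldsymbol{\theta}|$, which is finite by the bound (\ref{eq:boundedh1deriv}). For the remaining term I would exploit that $\varepsilon_t$ is independent of both $\sup_{\boldsymbol{\theta}}(h_{t,T}(\boldsymbol{\theta}_0)/h_{t,T}(\boldsymbol{\theta}))$ and $\sup_{\boldsymbol{\theta}}D_{t,T}$, since these depend only on $\{\varepsilon_s:s<t\}$; hence $\mathbb{E}\varepsilon_t^2=1$ factors out and Hölder's inequality leaves a product of $\norm{\sup_{\boldsymbol{\theta}}(h_{t,T}(\boldsymbol{\theta}_0)/h_{t,T}(\boldsymbol{\theta}))}_{p'}$ and $\norm{\sup_{\boldsymbol{\theta}}D_{t,T}}_{q'}$. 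As the bounding process is stationary, every bound is uniform in $t,T$, giving $\sup_T\mathbb{E}B_T<\infty$ and hence SE. The main obstacle is precisely this uniform moment control of the score and the matching of the Hölder exponents: one needs $\mathbb{E}\sup_{\boldsymbol{\theta}}(h_{t,T}(\boldsymbol{\theta}_0)/h_{t,T}(\boldsymbol{\theta}))^{p'}<\infty$ at an exponent $p'>1$, which is available from the ratio bound (\ref{eq:htratio}) only because (A1) furnishes $\mathbb{E}|\varepsilon_0^2|^{1+d}<\infty$ with $1+d>1$, while the conjugate exponent $q'$ can be taken as large as needed since (\ref{eq:boundedh1deriv}) holds for every $\upsilon>0$.
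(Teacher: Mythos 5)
Your proposal is correct and takes essentially the same route as the paper's proof: a mean value theorem bound with a random Lipschitz coefficient, the Andrews-type sufficient condition $\sup_{T}\mathbb{E}(B_T)<\infty$, and then moment control of the score via H\"older's inequality, independence of $\varepsilon_t$ from $h_{t,T}(\boldsymbol{\theta})$ and its derivatives, the ratio bound of Lemma 3 (which needs the $1+d$ moment from (A1)), and the all-orders derivative bound of Lemma 4. The only cosmetic difference is the order of operations: you factor out $\mathbb{E}(\varepsilon_t^2)=1$ by independence \emph{before} applying H\"older, so only a second moment of $\varepsilon_t$ enters at that step, whereas the paper applies H\"older first and then factors $\mathbb{E}(\varepsilon_t^{2+\delta})<\infty$ out of the second factor — both are covered by (A1).
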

\begin{proof}
By an application of the mean value theorem, we get 	$$\left|l_{t,T}(\boldsymbol{\theta}^{'})-l_{t,T}(\boldsymbol{\theta})\right|
\leq \sup_{\boldsymbol{\theta}\in\Theta} K_{t,T}(\boldsymbol{\theta}_m)|\boldsymbol{\theta}^{'}-\boldsymbol{\theta}|,$$
where 
\begin{equation}
    K_{t,T}(\boldsymbol{\theta}_m)=\left|\frac{1}{h_{t,T}(\boldsymbol{\theta}_m)}\frac{\partial h_{t,T}(\boldsymbol{\theta}_m)}{\partial \boldsymbol{\theta}}\left(1+\frac{X^2_{t,T}}{h_{t,T}(\boldsymbol{\theta}_m)}\right)\right| \label{eq:Kn}
	\end{equation}
and $\boldsymbol{\theta}_m$ lies on the line segment connecting $\boldsymbol{\theta}^{'}$ and $\boldsymbol{\theta}$. Let $K_{t,T} = \sup_{\boldsymbol{\theta}\in\Theta} K_{t,T}(\boldsymbol{\theta})$. 
We show that the following sufficient condition for SE in Andrews (1992, Lemma 2):
\begin{equation}
	    \sup_{T\geq 1}\frac{1}{T}\sum_{t=1}^{T}\mathbb{E}(K_{t,T})<\infty
     \label{eq:SE}
\end{equation} 
is satisfied. Hölder's inequality applied to $K_{t,T}$ yields 
\begin{align}
\mathbb{E}(K_{t,T}(\boldsymbol{\theta}))
&= 
\mathbb{E}\sup_{\boldsymbol{\theta}\in\Theta}\left|\frac{1}{h_{t,T}(\boldsymbol{\theta})}\frac{\partial h_{t,T}(\boldsymbol{\theta})}{\partial \boldsymbol{\theta}}\left(1+\frac{X^2_{t,T}}{h_{t,T}(\boldsymbol{\theta})}\right)\right| \nonumber \\
&\leq
\mathbb{E}\sup_{\boldsymbol{\theta}\in\Theta}
\left|\frac{1}{h_{t,T}(\boldsymbol{\theta})}\frac{\partial h_{t,T}(\boldsymbol{\theta})}{\partial \boldsymbol{\theta}}\right|^p
\mathbb{E}\sup_{\boldsymbol{\theta}\in\Theta}\left(1 + \frac{X^2_{t,T}}{h_{t,T}(\boldsymbol{\theta})}\right)^q. \label{eq:holderk}
\end{align}
Set $q = 1 + \delta/2$ in (\ref{eq:holderk}). Then by Lemma 3, (A1) and independence of $\varepsilon_t$ and $h_{t,T}(\boldsymbol{\theta})$,
\begin{equation}
\mathbb{E}\sup_{\boldsymbol{\theta}\in\Theta}\left(\frac{X^2_{t,T}}{h_{t,T}(\boldsymbol{\theta})}\right)^{1+\delta/2}
=
\mathbb{E}\left(\varepsilon_t^2 \sup_{\boldsymbol{\theta}\in\Theta}\frac{h_{t,T}(\boldsymbol{\theta}_0)}{h_{t,T}(\boldsymbol{\theta})}\right)^{1+\delta/2}
=
\mathbb{E}(\varepsilon_t^{2+\delta}) \mathbb{E}\left(\sup_{\boldsymbol{\theta}\in\Theta}\frac{h_{t,T}(\boldsymbol{\theta}_0)}{h_{t,T}(\boldsymbol{\theta})}\right)^{1+\delta/2}
< \infty. \label{eq:holderk2} 
\end{equation}
From Lemma 4 we have
\begin{equation*}
\mathbb{E}\left|\sup_{\boldsymbol{\theta}\in\Theta}\frac{1}{h_{t,T}(\boldsymbol{\theta})}\frac{\partial h_{t,T}(\boldsymbol{\theta})}{\partial \boldsymbol{\theta}}\right|^{v} < \infty
\end{equation*}
for any $v$, which, together with (\ref{eq:holderk2}), verifies (\ref{eq:SE}). This completes the proof of Lemma 8.
\end{proof}

Applying Theorem A.1 to $L_T(\boldsymbol{\theta})$ yields
\begin{equation}
	L_T(\boldsymbol{\theta})\overset{P}{\to} \int_0^1 \mathbb{E}(L(u,\boldsymbol{\theta}))\ \text{d}u=L(\boldsymbol{\theta})
\end{equation}
pointwise. Stochastic equicontinuity (Lemma 8) and compact parameter space (A2) are sufficient conditions for the convergence to hold uniformly. 

It remains to be shown that $L(\boldsymbol{\theta})$ has a unique solution at the maximum $\boldsymbol{\theta}_0$. This is done as follows. We note that the log-likelihood function has the same form as in the stationary case, which enables us to use the arguments in BHK and Francq and Zakoïan (2004) to show that a global maximum exists. Since $$L(u,\boldsymbol{\theta}_0) - L(u,\boldsymbol{\theta}) = -\frac{1}{2}+\frac{1}{2}\mathbb{E}\left(\frac{\widetilde{h}_t(\boldsymbol{\theta}_0)}{\widetilde{h}_t(\boldsymbol{\theta})}-\ln \frac{\widetilde{h}_t(\boldsymbol{\theta}_0)}{\widetilde{h}_t(\boldsymbol{\theta})}\right), $$ 
we can use the fact that $x-\ln(x)>0$ for $x>0$ and reaches its minimum when $x=1$ to deduce that for each $u$, $L(u,\boldsymbol{\theta})$ has a global maximum at $\boldsymbol{\theta}_0.$ BHK, Theorem 2.3 and Lemma 5.5, shows that for each $u,$ if $\widetilde{h}_t(\boldsymbol{\theta}_a) = \widetilde{h}_t(\boldsymbol{\theta}_b),$ then $\boldsymbol{\theta}_a = \boldsymbol{\theta}_b.$ Therefore, the log likelihood function is uniquely maximized at $\boldsymbol{\theta}_0$. This completes the verification of (C3) and with it the proof of Theorem 1. $\qedsymbol{}$

\begin{rmk}
In this article we focus on weak consistency, i.e. convergence in probability of the estimator $\widehat{\boldsymbol{\theta}}_T$. Theorem A.1 yields $L_1$ convergence, which implies convergence in probability. The condition 
(\ref{eq:SE}) is a condition for (weak) SE. BHK proved strong consistency of the QMLE in strictly stationary GARCH models. It requires the condition $(1/T)\sum_{t=1}^{T}K_{t,T} =  O(1)$, almost surely, which in \textcite{andrews1992generic} is a condition for strong SE.
\end{rmk}

\subsection*{A.5 Proof of Theorem 2}
The proof of asymptotic normality of the estimator $\widehat{\boldsymbol{\theta}}_{T}$ is based on a standard Taylor series expansion argument. Asymptotic normality of the truncated estimator $\bar{\boldsymbol{\theta}}$ will be considered in Appendix A.6.  

Expanding the score vector around $\boldsymbol{\theta} _{0}$ (ignoring the factor $-1/2$ in what follows), we have
\begin{align}
	\bold{0} &= \frac{1}{\sqrt{T}} \sum_{t=1}^{T} \frac{\partial l_{t,T}(\widehat{\boldsymbol{\theta}}_{T})}{\partial \boldsymbol{\theta}} \nonumber \\ 
	&= \frac{1}{\sqrt{T}} \sum_{t=1}^{T} \frac{\partial l_{t,T}\left(\boldsymbol{\theta}_{0}\right)}{\partial \boldsymbol{\theta}}
	+ \left(
	\frac{1}{T} \sum_{t=1}^{T} \frac{\partial^{2} l_{t,T}\left(\boldsymbol{\theta}^{\ast}\right)}{\partial \boldsymbol{\theta} \partial \boldsymbol{\theta}^{\intercal}} \right)
	\sqrt{T} (\widehat{\boldsymbol{\theta}}_T - \boldsymbol{\theta}_0), \label{eq:TaylorScore}
\end{align}
where $\boldsymbol{\theta}^{\ast}$ lies on the line segment connecting $\widehat{\boldsymbol{\theta}}_T$ and $\boldsymbol{\theta}_0$. 
To avoid direct verification of uniform convergence of the Hessian, we use a Taylor series expansion of the second derivatives of the log-likelihood function about $\boldsymbol{\theta}_{0}$. For all $i, j$, we have
\[
\frac{1}{T} \frac{\partial ^{2}l_{t,T}(\boldsymbol{\theta} _{ij}^{\ast })}{\partial \theta _{i}\partial
\theta _{j}}= \frac{1}{T}\frac{\partial ^{2}l_{t,T}(\boldsymbol{\theta} _{0})}{\partial \theta
_{i}\partial \theta _{j}}
+ \frac{1}{T} \frac{\partial}{\partial \boldsymbol{\theta}^T }
\left\{
\frac{\partial^2 l_{t,T}(\boldsymbol{\theta}_{ij}^{\ast \ast})}{\partial \theta_i \partial \theta_j}
\right\}
\left(
\boldsymbol{\theta}_{ij}^{\ast} 
- \boldsymbol{\theta}_0
\right),
\]
where $\boldsymbol{\theta} _{ij}^{\ast \ast }$ lies on the line segment connecting $\boldsymbol{\theta} _{ij}^{\ast }$ and $\boldsymbol{\theta} _{0}$. Since $\boldsymbol{\theta} _{ij}^{\ast }$ converges in probability to $\boldsymbol{\theta} _{0}$, convergence in probability of the Hessian evaluated at $\boldsymbol{\theta}_{0}$ to $\mathbf{B}$ and bounded third derivatives of the log-likelihood function in a neighbourhood of $\boldsymbol{\theta}_{0}$ are sufficient conditions for the first factor of the second term on the right-hand side of  (\ref{eq:TaylorScore}) to converge in probability to $\mathbf{B}$. 
To prove Theorem 2, we show that the following three conditions hold. First,

(N1)$$ \frac{1}{\sqrt{T}} \sum_{t=1}^{T} \frac{\partial l_{t,T}\left(\boldsymbol{\theta}_{0}\right)}{\partial \boldsymbol{\theta}} \overset{D}{\to} N\left(\mathbf{0},\mathbf{A}
\right)$$ as $T \to\infty$, for a nonrandom $\mathbf{A}$. Second,

(N2) $$-\frac{1}{T} \sum_{t=1}^{T} \frac{\partial^{2} l_{t,T}\left(\boldsymbol{\theta}_{0}\right)}{\partial \boldsymbol{\theta} \partial \boldsymbol{\theta}^{\intercal}} \overset{P}{\to} 
\mathbf{B}$$
as $T \to\infty$, for a nonrandom positive-definite
matrix $\mathbf{B}$, and finally,

(N3)$$\mathbb{E}\left(\sup _{\left\|\boldsymbol{\theta}-\boldsymbol{\theta}_{0}\right\| \leq \delta}\left|\frac{\partial^{3} l_{t,T}(\boldsymbol{\theta})}{\partial \theta_{i} \partial \theta_{j} \partial \theta_{k}}\right|\right) < \infty $$ for all $i, j, k$ and all $\delta>0$. These conditions were employed by \textcite{ComteLieberman} for strictly stationary GARCH models. See also \textcite{BasawaFeiginHeyde}.

We begin by showing that (N1) holds. The score for observation $t$ evaluated at $\boldsymbol{\theta}_0$ is 
\begin{equation}
    \mathbf{s}_{t,T}(\boldsymbol{\theta}_0) = (1-\varepsilon^2_t)\left(\frac{1}{h_{t,T}(\boldsymbol{\theta}_0)}\frac{\partial h_{t,T}(\boldsymbol{\theta}_0)}{\partial \boldsymbol{\theta}}\right). \label{eq:scoretheta0}
\end{equation}
Our proof strategy will be to show that the assumptions of Theorem A.2 hold for an arbitrary linear combination of the elements of the score. This will allow us to appeal to the Cramér-Wold device, which will yield the desired multivariate convergence in distribution. To this end, let $\boldsymbol{\lambda} \in \mathbb{R}^{\text{dim}(\boldsymbol{\theta})}\setminus \{0\}.$ We need to verify Assumptions (S1), (S2) and (M1) with $n=2$ for  $\boldsymbol{\lambda}^\intercal\mathbf{s}_{t,T}(\boldsymbol{\theta}_0)$ and its stationary approximation 
$\boldsymbol{\lambda}^\intercal\widetilde{\mathbf{s}}_t(u,\boldsymbol{\theta}_0)$. We begin with (S2) since it implies part of (S1). By Corollary 3.2 of \textcite{rao2006}, $\widetilde{X}_{t}(u)$ is almost surely Lipschitz continuous in $u$, which implies that it is almost surely continuous in $u$ for all $t \in \mathbb{Z}$. Because the derivatives of the log-likelihood function are continuous functions, they are almost surely continuous for all $t \in \mathbb{Z}$. In (\ref{eq:scoretheta0}), $\varepsilon_t$ is independent and  $\mathbb{E}(1-\varepsilon^2_t)^2 < \infty$ by (A6). By Lemma 5.2 of BHK, \begin{equation}
		\mathbb{E}\left|\sup_{u\in[0,1]}\frac{1}{\widetilde{h}_t(u,\boldsymbol{\theta}_0)}\frac{\partial \widetilde{h}_t(u,\boldsymbol{\theta}_0)}{\partial\boldsymbol{\theta}}\right|^{2}<\infty.
\end{equation}
It follows that 
\begin{equation}
\norm{\sup_{u\in[0,1]}\widetilde{\bold{s}}_t(u,\boldsymbol{\theta}_0)}_2 = (\mathbb{E}(1-\varepsilon^2_t)^2)^{1/2}\norm{\sup_{u\in[0,1]}\frac{1}{\widetilde{h}_t(u,\boldsymbol{\theta}_0)}\frac{\partial \widetilde{h}_t(u,\boldsymbol{\theta}_0)}{\partial \boldsymbol{\theta}}}_2<\infty.    
\end{equation}
This completes the verfication of (S2). 

To verify (S1) for an arbitrary linear combination of the elements of the score, we have to show that
\begin{equation}	
\sup_{u \in [0,1]} \norm{\boldsymbol{\lambda}^{\intercal}\widetilde{\mathbf{s}}_t(u,\boldsymbol{\theta}_0)}_2 
<
\infty \label{eq:supscore}
\end{equation}
and
\begin{equation}	
\norm{\boldsymbol{\lambda}^{\intercal}\mathbf{s}_{t,T}(\boldsymbol{\theta}_0) -\boldsymbol{\lambda}^{\intercal}\widetilde{\mathbf{s}}_t(u,\boldsymbol{\theta}_0)}_2 
\leq
C\left(\frac{1}{T} + |t/T-u| \right). \label{eq:scoreS1}
\end{equation}
The condition (\ref{eq:supscore}) is implied by (S2), since \begin{equation*}
    \norm{\sup_{u \in [0,1]} \boldsymbol{\lambda}^{\intercal}\widetilde{\mathbf{s}}_t(u,\boldsymbol{\theta})}_{2} \geq \sup_{u \in [0,1]} \norm{\boldsymbol{\lambda}^{\intercal}\widetilde{\mathbf{s}}_t(u,\boldsymbol{\theta})}_2.
\end{equation*}
To prove the second part of (S1), it suffices to consider differences of the type
\begin{equation*}	
\left|\frac{1}{h_t}\boldsymbol{\lambda}^{\intercal}\frac{\partial h_t}{\partial \boldsymbol{\theta}} - \frac{1}{h_t^{\prime}}\boldsymbol{\lambda}^{\intercal}\frac{\partial h_t^{\prime}}{\partial \boldsymbol{\theta}}\right|.
\end{equation*}
Adding and subtracting $(1/h_t)\boldsymbol{\lambda}^{\intercal}\partial h_t^{\prime}/\partial \boldsymbol{\theta}$, rearranging terms and using the triangle inequality, we can write
\begin{align}
	\left|\frac{1}{h_t}\boldsymbol{\lambda}^{\intercal}\frac{\partial h_t}{\partial \boldsymbol{\theta}} - \frac{1}{h_t^{\prime}}\boldsymbol{\lambda}^{\intercal}\frac{\partial h_t^{\prime}}{\partial \boldsymbol{\theta}}\right| 
 &\leq  \left|\frac{1}{h_t}\right|\left|\boldsymbol{\lambda}^{\intercal}\frac{\partial h_t}{\partial \boldsymbol{\theta}} - \boldsymbol{\lambda}^{\intercal}\frac{\partial h_t^{\prime}}{\partial \boldsymbol{\theta}}\right|
	+ \left|\frac{1}{h_t^{\prime}}\boldsymbol{\lambda}^{\intercal}\frac{\partial h_t^{\prime}}{\partial \boldsymbol{\theta}}\right|\left|\frac{h_t - h_t^{\prime}}{h_t}\right|. \label{eq:score_decomposition}
\end{align}
Consider the first term on the right-hand side of (\ref{eq:score_decomposition}). Set $h_t = h_{t,T}(\boldsymbol{\theta}_0)$, $\partial h_{t}/\partial \boldsymbol{\theta} = \partial h_{t,T}(\boldsymbol{\theta}_0)/\partial \boldsymbol{\theta}$, $\partial h_t^{\prime}/\partial \boldsymbol{\theta} = \partial \widetilde{h}_t (u,\boldsymbol{\theta}_0)/\partial \boldsymbol{\theta}$. Since $h_{t,T}(\boldsymbol{\theta}_0)>0$ and bounded from below, we have $1/h_{t,T}(\boldsymbol{\theta}_0) < \infty$ and, consequently,
\begin{align*} 
	\norm{\frac{1}{h_{t,T}(\boldsymbol{\theta}_0)} \left( 
 \boldsymbol{\lambda}^{\intercal}\frac{\partial h_{t,T}(\boldsymbol{\theta}_0)}{\partial \boldsymbol{\theta}} - \boldsymbol{\lambda}^{\intercal}\frac{\partial \widetilde{h}_t (u,\boldsymbol{\theta}_0)}{\partial \boldsymbol{\theta}}\right)}_2
 \leq
\frac{1}{C_1}
\norm{\boldsymbol{\lambda}^{\intercal}\left(\frac{\partial h_{t,T}(\boldsymbol{\theta}_0)}{\partial \boldsymbol{\theta}} - \frac{\partial \widetilde{h}_t (u,\boldsymbol{\theta}_0)}{\partial \boldsymbol{\theta}}\right)}_2.
\end{align*}
By (\ref{eq:lemma31}) and the Minkowski inequality, we obtain\begin{align}
&\norm{\boldsymbol{\lambda}^{\intercal}\left(\frac{\partial h_{t,T}(\boldsymbol{\theta})_0}{\partial \boldsymbol{\theta}}- \frac{\partial \widetilde{h}_t (u,\boldsymbol{\theta}_0)}{\partial \boldsymbol{\theta}}\right)}_2 \leq \left|\boldsymbol{\lambda}^{\intercal}\left(\frac{\partial c_0(\boldsymbol{\theta}_0)}{\partial \boldsymbol{\theta}} - \frac{\partial c_0(u,\boldsymbol{\theta}_0)}{\partial \boldsymbol{\theta}}\right)\right| \nonumber \\
&+ \left|\boldsymbol{\lambda}^{\intercal}\left(\sum_{i=1}^{\infty}\frac{\partial d_i(\boldsymbol{\theta}_0)}{\partial \boldsymbol{\theta}}g_{t-i+1,T}(\boldsymbol{\theta}_0) - \sum_{i=1}^{\infty}\frac{\partial d_i(\boldsymbol{\theta}_0)}{\partial \boldsymbol{\theta}}g_{t-i+1,T}(u,\boldsymbol{\theta}_0)\right)\right| \nonumber \\
&+ \left|\boldsymbol{\lambda}^{\intercal}\left(\sum_{i=1}^{\infty}d_i(\boldsymbol{\theta}_0)\frac{\partial g_{t-i+1,T}(\boldsymbol{\theta_0})}{\partial \boldsymbol{\theta}} - \sum_{i=1}^{\infty}d_i(\boldsymbol{\theta}_0)\frac{\partial g_{t-i+1,T}(u,\boldsymbol{\theta_0}) }{\partial \boldsymbol{\theta}})\right)\right| \nonumber\\
&+ \norm{\boldsymbol{\lambda}^{\intercal}\left(\sum_{i=1}^{\infty}\frac{\partial{c}_i(\boldsymbol{\theta}_0)}{\partial \boldsymbol{\theta}}X^2_{t-i,T} - \sum_{i=1}^{\infty}\frac{\partial{c}_i(\boldsymbol{\theta}_0)}{\partial \boldsymbol{\theta}}\widetilde{X}^2_{t-i}(u)\right)}_2. \label{eq:fullderivs}
\end{align}
The first term in (\ref{eq:fullderivs}) is zero. 
By using (\ref{eq:cineq}), (\ref{eq:dineq}), (\ref{eq:lemma33}) and monotone convergence, we have 
\begin{align}	
  &\left|\sum_{i=1}^{\infty}\frac{\partial d_i(\boldsymbol{\theta}_0)}{\partial \boldsymbol{\theta}}\left(g_{t-i+1,T}(\boldsymbol{\theta}_0) - g_{t-i+1,T}(u,\boldsymbol{\theta}_0)\right)\right| \leq
\left|C_1\sum_{i=1}^{\infty}i\rho_0^{i/q}\left(g_{t-i+1,T}(\boldsymbol{\theta}_0) - g_{t-i+1}(u,\boldsymbol{\theta}_0)\right)\right|, \nonumber \\ 
&\left|\sum_{i=1}^{\infty}d_i(\boldsymbol{\theta}_0)\left(\frac{\partial g_{t-i+1,T}(\boldsymbol{\theta}_0) }{\partial \boldsymbol{\theta}} - \frac{\partial g_{t-i+1,T}(u,\boldsymbol{\theta}_0) }{\partial \boldsymbol{\theta}} \right)\right| \leq
\left|C_2\sum_{i=1}^{\infty}\rho_0^{i/q}\left(\frac{\partial g_{t-i+1,T}(\boldsymbol{\theta}_0)}{\partial \boldsymbol{\theta}} - \frac{\partial g_{t-i+1}(u,\boldsymbol{\theta}_0)}{\partial \boldsymbol{\theta}}\right)\right|\nonumber, \\
&\text{and} \nonumber \\
&\norm{\sum_{i=1}^{\infty}\frac{\partial{c}_i(\boldsymbol{\theta}_0)}{\partial \boldsymbol{\theta}}X^2_{t-i,T} - \sum_{i=1}^{\infty}\frac{\partial{c}_i(\boldsymbol{\theta}_0)}{\partial \boldsymbol{\theta}}\widetilde{X}^2_{t-i,T}(u)}_2 \leq
C_3\sum_{i=1}^{\infty}i\rho_0^{i/q}\norm{X^2_{t-i,T} - \widetilde{X}_{t-i}^{2}(u)}_2. \nonumber
\end{align}
By Lemma 1, $g_{t,T}(\boldsymbol{\theta}_1)$ and its derivatives are Lipschitz continuous. As already mentioned, $\sum_{i=1}^{\infty}i\rho_0^{i/q}$ converges. We therefore obtain for the second and third terms in (\ref{eq:fullderivs})
\begin{align}
\left\vert C_1\sum_{i=1}^{\infty }i\rho _{0}^{i/q}( g_{t-i+1,T}(\boldsymbol{\theta }_{0})-g_{t-i+1}(u,\boldsymbol{\theta }_{0})) \right\vert
 \leq C_4\left( \frac{1}{T}+\left\vert \frac{t}{T}-u\right\vert \right), \label{eq:gderivdiff1}
\end{align}
and
\begin{align}
  \left\vert C_2\sum_{i=1}^{\infty }\rho _{0}^{i/q} \left( \frac{\partial
g_{t-i+1,T}(\boldsymbol{\theta }_{0})}{\partial \boldsymbol{\theta }}-\frac{\partial g_{t-i+1}(u,\boldsymbol{\theta }_{0})}{\partial \boldsymbol{\theta }} \right) \right\vert \leq C_5\left( \frac{1}{T}+\left\vert
\frac{t}{T}-u\right \vert \right), \label{eq:gderivdiff2}
\end{align}
respectively.
Finally, applying Lemma 2, formula (\ref{eq:uvineq}), to the fourth term gives
\begin{equation}
C_3\sum_{i=1}^{\infty}i\rho_0^{i/q}\norm{X^2_{t-i} - \widetilde{X}^{2}_{t-i}(u)
}_2
\leq
C_6 \left( \frac{1}{T}+\left\vert \frac{t}{T}-u\right\vert \right). \label{eq:garchderivdiff}
\end{equation} 
In view of (\ref{eq:gderivdiff1}), (\ref{eq:gderivdiff2}) and (\ref{eq:garchderivdiff}), we have
\begin{equation}
		\norm{\boldsymbol{\lambda}^\intercal\frac{\partial h_{t,T}(\boldsymbol{\theta}_0)}{\partial \boldsymbol{\theta}} - \boldsymbol{\lambda}^\intercal\frac{\partial \widetilde{h}_t(u,\boldsymbol{\theta}_0)}{\partial \theta}}_2\leq C\left(\frac{1}{T} + \left| \frac{t}{T}-u\right|\right). \label{eq:derivS1}
\end{equation}
Next we turn to the second term on the right-hand side of (\ref{eq:score_decomposition}). Setting $h_t^{\prime} = \widetilde{h}_t (u,\boldsymbol{\theta}_0)$ yields
\begin{equation}
\norm{\frac{1}{\widetilde{h}_t (u,\boldsymbol{\theta}_0)}\boldsymbol{\lambda}^\intercal\frac{\partial \widetilde{h}_t (u,\boldsymbol{\theta}_0)}{\partial \boldsymbol{\theta}}\left(\frac{h_{t,T}(\boldsymbol{\theta}_0)-\widetilde{h}_t (u,\boldsymbol{\theta}_0)}{h_{t,T}(\boldsymbol{\theta}_0)}\right)}_2.
\end{equation}
Let $\widetilde{\nabla}_t(u,\boldsymbol{\theta}) := (1/\widetilde{h}_t (u,\boldsymbol{\theta}))(\boldsymbol{\lambda}^\intercal\partial \widetilde{h}_t (u,\boldsymbol{\theta})/\partial \boldsymbol{\theta})$.
Then, by Lemma 4, formula ({\ref{eq:boundedh1deriv}}),
$$\sup_{\boldsymbol{\theta}\in\Theta} \norm{\widetilde{\nabla}_t(u,\boldsymbol{\theta})}_n < \infty$$ for all $n$.
Since the arguments in (\ref{eq:score_decomposition}) remain valid if we instead of $(1/h_t)\boldsymbol{\lambda}^\intercal\partial h_t^{\prime}/\partial \boldsymbol{\theta}$ add and subtract $(1/h_t^{\prime})\boldsymbol{\lambda}^\intercal\partial h_t/\partial \boldsymbol{\theta}$, we may without loss of generality assume that  $h_{t,T}(\boldsymbol{\theta}_0)\geq \widetilde{h}_t(u, \boldsymbol{\theta}_0)$. Then, since
$(h_{t,T}(\boldsymbol{\theta}_0)-\widetilde{h}_t(u, \boldsymbol{\theta}_0))/h_{t,T}(\boldsymbol{\theta}_0) \in [0,1)$, we have that for all $s \in (0,1)$,
$$\norm{\widetilde{\nabla}_t(u,\boldsymbol{\theta}_0)\left(\ \frac{h_{t,T}(\boldsymbol{\theta}_0)-\widetilde{h}_t (u,\boldsymbol{\theta}_0)}{h_{t,T}(\boldsymbol{\theta}_0)}\right)}_2 \leq 
\norm{\widetilde{\nabla}_t(u,\boldsymbol{\theta}_0)\left(\ \frac{h_{t,T}(\boldsymbol{\theta}_0)-\widetilde{h}_t (u,\boldsymbol{\theta}_0)}{h_{t,T}(\boldsymbol{\theta}_0)}\right)^s}_2.
$$ 
Now choose $s=1/2$. Then the Cauchy-Schwarz inequality and Lemma 2, formula (\ref{eq:htapprox}), give
\begin{align}
\norm{\widetilde{\nabla}_t(u,\boldsymbol{\theta}_0)\left(\ \frac{h_{t,T}(\boldsymbol{\theta}_0)-\widetilde{h}_t (u,\boldsymbol{\theta}_0)}{h_{t,T}(\boldsymbol{\theta}_0)}\right)^{1/2}}_2
&\leq \norm{\widetilde{\nabla}_t(u,\boldsymbol{\theta}_0)}_4\norm{\left(\ \frac{h_{t,T}(\boldsymbol{\theta}_0)-\widetilde{h}_t (u,\boldsymbol{\theta}_0)}{h_{t,T}(\boldsymbol{\theta}_0)}\right)^{1/2}}_4 \nonumber \\ 
&\leq
C_1\norm{h_t(\boldsymbol{\theta}_0) -\widetilde{h}_t(u,\boldsymbol{\theta}_0)}_2 \nonumber \\ &\leq C_2\left(\frac{1}{T} + \left|\frac{t}{T}-u\right|\right).\label{eq:csscore}    
\end{align}
Put together, (\ref{eq:derivS1}) and (\ref{eq:csscore}) imply (\ref{eq:scoreS1}), which completes the verification of (S1).   

Finally, we need Assumption (M1) to be fulfilled for an arbitrary linear combination of the elements of the score with $n=2$. The assumption entails the following mixing condition on the stationary approximation: 
\begin{equation}
\sum_{t=0}^{\infty}\delta_{2}^{\boldsymbol{\lambda}^\intercal \widetilde{\bold{s}}(\boldsymbol{\theta}_0)}(t)= \sum_{t=0}^{\infty}\sup_{u\in[0,1]}\norm{\boldsymbol{\lambda}^\intercal\widetilde{\bold{s}}_{t}(u,\boldsymbol{\theta}_0) - \boldsymbol{\lambda}^\intercal \widetilde{\bold{s}}_t^{e}(u,\boldsymbol{\theta}_0)}_2 <\infty, \label{eq:scoreM1}
\end{equation} 
where $\widetilde{\bold{s}}_t^{e}(\boldsymbol{\theta})$ is a coupled version of the stationary approximation $\widetilde{\bold{s}}_t(u,\boldsymbol{\theta})$ of the score with the error term replaced at index $t=0$ in the information set. The condition ensures that the process "forgets the history geometrically quickly"; see \textcite{wu2011asymptotic}.
The first term in the sum (\ref{eq:scoreM1}) is a constant, and for $t>0$ we obtain
\begin{align}
\delta_{2}^{\boldsymbol{\lambda}^\intercal \widetilde{\bold{s}}(u, \boldsymbol{\theta}_0)}(t)
 &\leq C_1\sup_{u\in[0,1]}\norm{\boldsymbol{\lambda}^\intercal \frac{\partial \widetilde{h}_{t}(u,\boldsymbol{\theta}_0)}{\partial \boldsymbol{\theta}} - \boldsymbol{\lambda}^\intercal\frac{\partial \widetilde{h}_{t}^{e}(u,\boldsymbol{\theta}_0)}{\partial \boldsymbol{\theta}}}_2  \nonumber \\
&+ C_2\sup_{u\in[0,1]}\norm{ \widetilde{h}_{t}(u,\boldsymbol{\theta}_0) - \widetilde{h}_{t}^{e}(u,\boldsymbol{\theta}_0)}_2 \nonumber \\
	&\leq C_1\sup_{u\in[0,1]}\sum_{i=1}^{\infty}i\rho_0^{i/q}\mathbb{E}\norm{\widetilde{X}^2_{t-i}(u) - \widetilde{X}^{e2}_{t-i}(u)}_2 \nonumber \\
	&+C_2\sup_{u\in[0,1]}\sum_{i=1}^{\infty}\rho_0^{i/q}\norm{\widetilde{X}^2_{t-i}(u) - \widetilde{X}^{e2}_{t-i}(u)}_2, \label{eq:initM1}
	\end{align}
where we have used the decomposition (\ref{eq:score_decomposition}) and then reused arguments from the proof of (S1). The inequalities are true because the derivatives with respect to the components in the time-varying intercept are deterministic, and thus the components of the score that correspond to them do not have memory, so swapping the error term at some index only affects them at $t=0$. The two terms in the last inequality of (\ref{eq:initM1}) are treated similarly. It suffices to consider the first one and the conclusion for the second one will follow using the same techniques.  
Since for $x,y\geq0$ we have $(x^2-y^2)=(x+y)(x-y)\leq C\left|x+y\right|\left|x-y\right|$ for some $C\geq1$, the function $f(x) = x^2$ satisfies the invariance property of DRW, Proposition 2.5, with $M = 1$. It states that if an assumption, among them (M1), is fulfilled for a process with $\widetilde{n} = n(M+1)$, then the assumption is fulfilled for a function of the process that satisfies the invariance property. In our case, since $M=1$ and we need $n=2$, we choose $\widetilde{n}=4$, which translates into us needing
\begin{equation}
C\sup_{u\in[0,1]}\sum_{i=1}^{\infty}i\rho_0^{i/q}\mathbb{E}\left(\widetilde{X}_{t-i}(u) - \widetilde{X}^e_{t-i}(u)\right)^4<\infty. \label{eq:M1}
\end{equation}
Since the expectation in (\ref{eq:M1}) involves a stationary GARCH process, we can invoke \textcite[Proposition 3]{wu2005linear}. For any $u$,

\begin{equation}
    	C\sum_{i=1}^{\infty}i\rho_0^{i/q}\mathbb{E}\left(\widetilde{X}_{t-i}(u) - \widetilde{X}^{e}_{t-i}(u)\right)^4\leq     	C\sum_{i=1}^{\infty}i\rho_0^{i/q}r^{t-i},
\end{equation}
where $r\in(0,1)$ for $t-i\geq0,$ $r=0$ otherwise. Now choose $a\in(0,1)$ such that $a>r,$ $a>\rho_0$ for $t-i\geq0,$ $a=0$ otherwise. Then, $$\delta_{2}(i,t):=C\sum_{i=1}^{\infty}ia^{i/q + t-i} \geq C\sum_{i=1}^{\infty}i\rho_0^{i/q}r^{t-i}.$$
To illustrate, we have
\begin{equation*}
 \delta_{2}(i,t) =\\\left\{
  \begin{array}{@{}ll@{}}
       C\left(a^{1/q+t-1}+\ldots+ (t-1)a^{(t-1)/q+t-(t-1)}\right), & \text{if}\ t\geq i \\
      0, & \text{otherwise.}
     \end{array}\right.
\end{equation*}
Since the terms in the sum $\delta_{2}(i,t)$ are increasing in $i$ until $i=t$, their sum is less than the term at index $i=t$ multiplied by the number of terms $t$, which yields the inequality \begin{equation*}
    \sum^{\infty}_{t=0}\delta_{2}(i,t)<\sum^{\infty}_{t=0}t^2a^{t/q} <\infty,
\end{equation*}
since $a<1$. Thus, $\sum_{t=0}^\infty\delta_{2}^{\boldsymbol{\lambda}^{\intercal}\widetilde{\bold{s}}(\boldsymbol{\theta}_0)}(t) < \infty$, and (M1) is satisfied.

Having verified (S1), (S2) and (M1), we can apply Theorem A.2 to the linear combination of the elements of the score. The theorem yields \begin{equation}
	\frac{1}{\sqrt{T}} \sum_{t=1}^{T} \boldsymbol{\lambda}^{\intercal}\frac{\partial l_{t,T}\left(\boldsymbol{\theta}_{0}\right)}{\partial \boldsymbol{\theta}}\overset{D}{\to}\left\{\int_{0}^{1}\sigma_\lambda(v,\boldsymbol{\theta}_{0})dW(v)\right\}, \label{eq:lambdascoreCLT}
\end{equation}
where $$\sigma_\lambda^2(v,\boldsymbol{\theta}_{0}) = \text{Var}\left(\boldsymbol{\lambda}^{\intercal}\frac{\partial \widetilde{l}_0\left(v,\boldsymbol{\theta}_{0}\right)}{\partial \boldsymbol{\theta}}\right) = \boldsymbol{\lambda}^\intercal\boldsymbol{A}(v)\boldsymbol{\lambda}$$ and $W$ is a Brownian motion. By the It\^{o} Isometry we can conclude that $$\mathbb{E}\left\{\int_{0}^{1}\sigma_\lambda(v,\boldsymbol{\theta}_{0})dW(v)\right\}=0$$ and $$\text{Var}\left\{\int_{0}^{1}\sigma_\lambda(v,\boldsymbol{\theta}_{0})dW(v)\right\}=\int_{0}^{1}\mathbb{E}\boldsymbol{\lambda}^{\intercal}\boldsymbol{A}(v)\boldsymbol{\lambda}\space \text{d}v:=\boldsymbol{\lambda}^{\intercal}\boldsymbol{A}\boldsymbol{\lambda}.$$
It is well known that integrals of the form (\ref{eq:lambdascoreCLT}) follow a normal distribution. Finally, since $\boldsymbol{\lambda}$ was arbitrary, the Cramér-Wold device yields \begin{equation}
	\frac{1}{\sqrt{T}} \sum_{t=1}^{T}\frac{\partial l_{t,T}\left(\boldsymbol{\theta}_{0}\right)}{\partial \boldsymbol{\theta}}\overset{D}{\to}N\left(\boldsymbol{0}, \boldsymbol{A}\right), \label{eq:scoreCLT}
\end{equation}
where $\bold{A}$ is nonrandom.

It remains to show that $\bold{A}$ is positive definite. It is clear that $\bold{A}$ is positive semi-definite. We now show that it is non-singular. By (\ref{eq:scoretheta0}), we have
\begin{align}
    \bold{A} =
    \left(\mathbb{E} \varepsilon_{0}^{4}-1 \right)
        \int_0^1\mathbb{E}\left(\frac{1}{\widetilde{h}_t(u,\boldsymbol{\theta}_{0})^2}\frac{\partial \widetilde{h}_t(u,\boldsymbol{\theta}_{0})}{\partial\boldsymbol{\theta}}\frac{\partial \widetilde{h}_t(u,\boldsymbol{\theta}_{0})}{\partial\boldsymbol{\theta}^{\intercal}}\right)\text{d}u. \label{eq:intA}
\end{align} 
Suppose
\begin{align*}
\boldsymbol{\lambda}^{\intercal} \bold{A}(u) \boldsymbol{\lambda}
= \boldsymbol{\lambda}^{\intercal} \mathbb{E}\left(\frac{1}{\widetilde{h}_t(u,\boldsymbol{\theta}_{0})^2}\frac{\partial \widetilde{h}_t(u,\boldsymbol{\theta}_{0})}{\partial\boldsymbol{\theta}}\frac{\partial \widetilde{h}_t(u,\boldsymbol{\theta}_{0})}{\partial\boldsymbol{\theta}^{\intercal}}\right) \boldsymbol{\lambda} \\
= \mathbb{E}\left(\frac{1}{\widetilde{h}_t(u,\boldsymbol{\theta}_{0})^2}\left(\boldsymbol{\lambda}^{\intercal}\frac{\partial \widetilde{h}_t(u,\boldsymbol{\theta}_{0})}{\partial\boldsymbol{\theta}}\right)^2\right) = 0 
\end{align*}
for some vector $\boldsymbol{\lambda}\in\mathbb{R}^{\text{dim}(\boldsymbol{\theta})}$. Then, almost surely $\boldsymbol{\lambda}^{T}(\partial \widetilde{h}_t(u,\boldsymbol{\theta}_0)/\partial \boldsymbol{\theta}) =0$. 
Consider first for notational simplicity a model with one logistic transition function with one $c$ parameter. Define the vector of partial derivatives of the conditional variance $\boldsymbol{\phi}_t = \partial h_{t,T}(\boldsymbol{\theta})/\partial \boldsymbol{\theta}$ and its stationary approximation $\widetilde{\boldsymbol{\phi}}_t(u) = \partial \widetilde{h}_t(u,\boldsymbol{\theta})/\partial \boldsymbol{\theta}.$ The stationary approximation takes the form \begin{equation}
\widetilde{\boldsymbol{\phi}}_t(u) = \widetilde{\boldsymbol{\psi}}_t + \beta_1\widetilde{\boldsymbol{\phi}}_{t-1}(u) + \ldots + \beta_q\widetilde{\boldsymbol{\phi}}_{t-1}(u),
\end{equation}
where $\widetilde{\boldsymbol{\psi}}_t = \left(1, g_\gamma(u), g_c(u), g_{\alpha 01}(u), \widetilde{X}^2_{t-1}(u), \widetilde{X}^2_{t-2}(u),\ldots,\widetilde{X}^2_{t-p}(u), \widetilde{h}_{t-1}(u), \widetilde{h}_{t-2}(u),  \ldots, \widetilde{h}_{t-q}(u)\right)^{\intercal}.$ Define $\boldsymbol{\alpha}_0(u) = (1, g_\gamma(u), g_c(u), g_{\alpha 01}(u))^\intercal$.
By stationarity and the arguments in BHK (2004, Lemma 5.7),  singularity implies that  $$\boldsymbol{\lambda}^{\intercal}\widetilde{\boldsymbol{\psi}}_t = \boldsymbol{0}$$ almost surely for some $\boldsymbol{\lambda} \neq \boldsymbol{0}$ and for all $t$. Consider the partition $\boldsymbol{\lambda} = (\boldsymbol{\lambda}_0^\intercal, \boldsymbol{\lambda}_1^\intercal, \boldsymbol{\lambda}_2^\intercal)^\intercal,$ with $\boldsymbol{\lambda}_0 \in \mathbb{R}^4, \boldsymbol{\lambda}_1 \in \mathbb{R}^p, \boldsymbol{\lambda}_2 \in \mathbb{R}^q.$ We begin by showing that $\boldsymbol{\lambda}_1$ and $\boldsymbol{\lambda}_2$ are zero. This is done as in \textcite{FrancqZakoian2004} and BHK (2004) by showing that it implies a reduced order of the GARCH process.
We have that  \begin{align}
\boldsymbol{\lambda^{\intercal}\widetilde{\psi}}_t &= \sum^4_{k=1} \lambda_{0,k} \alpha_{0,k} + \sum^p_{i=1}\lambda_{1,i}\widetilde{X}^2_{t-i} +  \sum^q_{j = 1}\lambda_{2,j}\widetilde{h}_{t-j}  = 0 \nonumber \\
\Leftrightarrow \lambda_{1,1}\widetilde{X}^2_{t-1} &= -\sum^p_{i=2}\lambda_{1,i}\widetilde{X}^2_{t-i} -\sum^q_{j = 1}\lambda_{2,j}\widetilde{h}_{t-j} - \sum^4_{k=1} \lambda_{0,k} \alpha_{0,k}
\label{eq:derivarg1}
\end{align}
Now we have that $\lambda_{1,1}=0$, otherwise $X^2_{t-1}$ is measurable with respect to the $\sigma$-field  $\mathcal{F}_{t-2}$ generated by $\{\varepsilon_{t-2}, \varepsilon_{t-3}, \ldots \}$ as everything on the RHS of (\ref{eq:derivarg1}) is contained in $\mathcal{F}_{t-2}.$ We can then further rearrange (\ref{eq:derivarg1}) to obtain \begin{align}
\lambda_{2,1}\widetilde{h}_{t-1} &= -\sum^p_{i=2}\lambda_{1,i}\widetilde{X}^2_{t-i} -\sum^q_{j = 2}\lambda_{2,j}\widetilde{h}_{t-j} - \sum^4_{k=1} \lambda_{0,k} \alpha_{0,k}. \label{eq:garchpqminus}
\end{align}
Thus, if $\lambda_{2,1}\neq 0$, $\widetilde{h}_{t-1}$ can be written in terms of a GARCH($p-1, q-1$) representation as the sums on the RHS of (\ref{eq:garchpqminus}) contain $p-1$ and $q-1$ terms, respectively. But then  $\lambda_{1,2} = 0$ as (\ref{eq:derivarg1}) can be rearranged so that the LHS is $\lambda_{1,2}\widetilde{X}^2_{t-2}$ and the RHS contains at most $\mathcal{F}_{t-3}$ information.
Continuing this argument iteratively shows that $\boldsymbol{\lambda}_1  \neq \boldsymbol{0} $ or $\boldsymbol{\lambda}_2 \neq \boldsymbol{0}$ contradicts the minimality of the model. By BHK, Theorem 2.5, non-degenerate errors (A1), $\mathbb{E}(\ln\sigma_0^2)<\infty$ and (A4)  imply minimality.  
A similar result can also be found in BHK, Lemma 5.7.

Next consider $\boldsymbol{\lambda}_0$. In this case with one transition function and one $c$ parameter, Lemma \ref{lemma:gpd} shows that $\boldsymbol{\lambda}_0 = \boldsymbol{0}.$ In the general case, it follows by (A3). This implies that $\bold{A}$ is positive definite and thus completes the proof of (N1).

Now we turn to verifying (N2). To be able to apply Theorem A.1, we have to show that the Hessian evaluated at $\boldsymbol{\theta}_0$ fulfils (S1) with $n=1$. The Hessian for observation $t$ evaluated at $\boldsymbol{\theta}_0$ is 
\begin{align}
\bold{H}_{t,T}(\boldsymbol{\theta}_0)&= \left(1-\varepsilon^2_t\right)\frac{1}{h_{t,T}(\boldsymbol{\theta}_0)}\frac{\partial^2 h_{t,T}(\boldsymbol{\theta}_0)}{\partial\boldsymbol{\theta}\partial\boldsymbol{\theta}^{\intercal}} \label{eq:H_0hessian} \nonumber \\
		&+\left(2\varepsilon_t^2-1\right)\frac{1}{h_{t,T}^{2}(\boldsymbol{\theta}_0)}\frac{\partial h_{t,T}(\boldsymbol{\theta}_0)}{\partial\boldsymbol{\theta}}\frac{\partial h_{t,T}(\boldsymbol{\theta}_0)}{\partial\boldsymbol{\theta}^{\intercal}}. 
	\end{align}
We need to show that
\begin{equation*}	
\sup_{u\in[0,1]} \norm{\widetilde{\mathbf{H}}_t(u,\boldsymbol{\theta}_0)}_1
<
\infty
\end{equation*}
and
\begin{equation*}	
\norm{\mathbf{H}_{t,T}(\boldsymbol{\theta}_0) - \widetilde{\mathbf{H}}_t(u,\boldsymbol{\theta}_0)}_1 
\leq
C\left(\frac{1}{T} + \left |\frac{t}{T}-u \right| \right),
\end{equation*}
where $\widetilde{\mathbf{H}}_t(u,\boldsymbol{\theta}_0)$
is the stationary approximation of the Hessian at $u$. By Lemma 5.6 of BHK, we have
\begin{equation*}
\mathbb{E}\left|\sup_{u\in[0,1]} \widetilde{\mathbf{H}}_t(u,\boldsymbol{\theta}_0)\right|<\infty,
\end{equation*}
so that
\begin{equation}	\sup_{u\in[0,1]}\mathbb{E}\left|\widetilde{\mathbf{H}}_t(u,\boldsymbol{\theta}_0)\right|\leq\mathbb{E}\left|\sup_{u\in[0,1]} \widetilde{\mathbf{H}}_t(u,\boldsymbol{\theta}_0)\right|<\infty. \label{eq:Hess2}
\end{equation}

The first term on the right-hand side of (\ref{eq:H_0hessian}) has expectation zero. Consider the second term. By (A1), $\mathbb{E}\left(2\varepsilon_0^2-1\right)=1$. Let $\boldsymbol{\nabla}_t = (1/h_{t})\partial h_{t}/\partial \boldsymbol{\theta}$ and $\boldsymbol{\nabla}_t^{\prime} = (1/h_{t}^{\prime})\partial h_{t}^{\prime}/\partial \boldsymbol{\theta}$ and consider differences of the type
$|\boldsymbol{\nabla}_t\boldsymbol{\nabla}_{t}^{\intercal}-\boldsymbol{\nabla}_t^{\prime}\boldsymbol{\nabla}_{t}^{\prime \intercal}|$.
By adding and subtracting $\boldsymbol{\nabla}_t\boldsymbol{\nabla}_{t}^{\prime \intercal}$, rearranging terms and using the triangle inequality, we can write
\begin{equation*}
|\boldsymbol{\nabla}_t\boldsymbol{\nabla}_{t}^{\intercal}-\boldsymbol{\nabla}_t^{\prime}\boldsymbol{\nabla}_{t}^{\prime \intercal}|
\leq
|\boldsymbol{\nabla}_t(\boldsymbol{\nabla}_{t}^{\intercal}-\boldsymbol{\nabla}_{t}^{\prime \intercal})| 
+
|(\boldsymbol{\nabla}_t -\boldsymbol{\nabla}_{t}^{\prime})\boldsymbol{\nabla}_t^{\prime \intercal}|.
\end{equation*}
Setting $\boldsymbol{\nabla}_{t} = \boldsymbol{\nabla}_{t,T}(\boldsymbol{\theta}_0) = (1/h_{t,T}(\boldsymbol{\theta}_0)) \partial h_{t,T}(\boldsymbol{\theta}_0)/\partial \boldsymbol{\theta}$ and $\boldsymbol{\nabla}_t^{\prime} = 
\widetilde{\boldsymbol{\nabla}}_{t}(u,\boldsymbol{\theta}_0) =
(1/\widetilde{h}_{t}(u,\boldsymbol{\theta}_0)) \partial \widetilde{h}_{t}(u,\boldsymbol{\theta}_0)/\partial \boldsymbol{\theta}$
yields
\begin{align*}
	&\norm{\boldsymbol{\nabla}_{t,T}(\boldsymbol{\theta}_0)\boldsymbol{\nabla}_{t,T}^{\intercal}(\boldsymbol{\theta}_0)-\widetilde{\boldsymbol{\nabla}}_{t}(u,\boldsymbol{\theta}_0)\widetilde{\boldsymbol{\nabla}}_{t}^{\intercal}(u,\boldsymbol{\theta}_0)}_1 \\
 &\leq
  \norm{\boldsymbol{\nabla}_{t,T}(\boldsymbol{\theta}_0)(\boldsymbol{\nabla}_{t,T}^{\intercal}(u,\boldsymbol{\theta}_0)-\widetilde{\boldsymbol{\nabla}}_{t}^{\intercal}(u,\boldsymbol{\theta}_0))}_1
 +
  \norm{(\boldsymbol{\nabla}_{t,T}(\boldsymbol{\theta}_0) -\widetilde{\boldsymbol{\nabla}}_{t}(u,\boldsymbol{\theta}_0))\widetilde{\boldsymbol{\nabla}}_t^{\intercal}(u,\boldsymbol{\theta}_0)}_1. \nonumber
\end{align*}
By the Cauchy-Schwarz inequality and Lemma 4, formula (\ref{eq:boundedh1deriv}),
\begin{align*}
 \norm{\boldsymbol{\nabla}_{t,T}(\boldsymbol{\theta}_0)(\boldsymbol{\nabla}_{t,T}^{\intercal}(\boldsymbol{\theta}_0)-\widetilde{\boldsymbol{\nabla}}_{t}^{\intercal}(u,\boldsymbol{\theta}_0))}_1
&\leq
\norm{\boldsymbol{\nabla}_{t,T}(\boldsymbol{\theta}_0)}_2 
\norm{\boldsymbol{\nabla}_{t,T}^{\intercal}(\boldsymbol{\theta}_0)-\widetilde{\boldsymbol{\nabla}}_{t}^{\intercal}(u,\boldsymbol{\theta}_0)}_2
 \\ 
&\leq
  C_{1} \norm{\boldsymbol{\nabla}_{t,T}^{\intercal}(\boldsymbol{\theta}_0)-\widetilde{\boldsymbol{\nabla}}_t^{\intercal}(u,\boldsymbol{\theta}_0)}_2. 
\end{align*}
The functions $\boldsymbol{\nabla}_{t,T}(\boldsymbol{\theta}_0)$ and $\widetilde{\boldsymbol{\nabla}}_t(u,\boldsymbol{\theta}_0)$ are of the type appearing in (\ref{eq:score_decomposition}), so we can conclude that
\begin{equation*}	
\norm{\mathbf{H}_{t,T}(\boldsymbol{\theta}_0) - \widetilde{\mathbf{H}}_t(u,\boldsymbol{\theta}_0)}_1 
\leq
C\left(\frac{1}{T} + |t/T-u| \right).
\end{equation*}
Applying Theorem A.1 component-wise to the Hessian yields
$$-\frac{1}{T} \sum_{t=1}^{\intercal} \frac{\partial^{2} l_{t,T}\left(\boldsymbol{\theta}_{0}\right)}{\partial \boldsymbol{\theta} \partial \boldsymbol{\theta}^{\intercal}} \overset{P}{\to}\int^1_0\mathbb{E} \left(\widetilde{\mathbf{H}}_{0}(u,\boldsymbol{\theta}_0) \right) \text{d}u:= 
\bold{B},$$ 
where $\bold{B}$ is nonrandom. Finally, using (\ref{eq:TaylorScore}) and anticipating the proof of (N3), we obtain
\begin{equation}
\frac{1}{T} \sum_{t=1}^{\intercal} \frac{\partial l_{t,T}\left(\boldsymbol{\theta}_{0}\right)}{\partial \boldsymbol{\theta}}
= -\left(\bold{B} + \bold{o}_{P}(1) \right)(\widehat{\boldsymbol{\theta}}_T - \boldsymbol{\theta}_0). \label{eq:A_nonsing} 
\end{equation}
By (N1), $\lim_{T \to \infty}T^{-1}\text{Cov}\left( \sum_{t=1}^{T} \partial l_{t,T}\left(\boldsymbol{\theta}_{0}\right)/\partial \boldsymbol{\theta} \right)=\bold{A}$, and $\bold{A}$ is nonsingular. Therefore (\ref{eq:A_nonsing}) implies nonsingularity of $\bold{B}$. This completes the verification of (N2).

To establish (N3) we show that the third derivatives of the log-likelihood function are uniformly integrable. If they are uniformly integrable, they are also integrable in a neighbourhood of $\boldsymbol{\theta}_0$, so that (N3) is satisfied. The third derivatives of the log-likelihood function with respect to the parameters $\theta_{i}$, $\theta_{j}$ and $\theta_{k}$ are given by 
\begin{align}
\frac{\partial^{3} l_{t,T}(\boldsymbol{\theta})}{\partial \theta_{i} \partial \theta_{j} \partial \theta_{k}}&=\left\{1-\frac{X^2_{t,T}}{h_{t,T}(\boldsymbol{\theta})}\right\}\left\{\frac{1}{h_{t,T}(\boldsymbol{\theta})} \frac{\partial^{3} h_{t,T}(\boldsymbol{\theta})}{\partial \theta_{i} \partial \theta_{j} \partial \theta_{k}}\right\} \label{eq:3deriv} \nonumber \\
&+\left\{2 \frac{X^2_{t,T}}{h_{t,T}(\boldsymbol{\theta})}-1\right\}\left\{\frac{1}{h_{t,T}(\boldsymbol{\theta})} \frac{\partial h_{t,T}(\boldsymbol{\theta})}{\partial \theta_{i}}\right\}\left\{\frac{1}{h_{t,T}(\boldsymbol{\theta})} \frac{\partial^{2} h_{t,T}(\boldsymbol{\theta})}{\partial \theta_{j} \partial \theta_{k}}\right\} \nonumber \\
&+\left\{2 \frac{X^2_{t,T}}{h_{t,T}(\boldsymbol{\theta})}-1\right\}\left\{\frac{1}{h_{t,T}(\boldsymbol{\theta})} \frac{\partial h_{t,T}(\boldsymbol{\theta})}{\partial \theta_{j}}\right\}\left\{\frac{1}{h_{t,T}(\boldsymbol{\theta})} \frac{\partial^{2} h_{t,T}(\boldsymbol{\theta})}{\partial \theta_{i} \partial \theta_{k}}\right\} \nonumber \\
&+\left\{2 \frac{X^2_{t,T}}{h_{t,T}(\boldsymbol{\theta})}-1\right\}\left\{\frac{1}{h_{t,T}(\boldsymbol{\theta})} \frac{\partial h_{t,T}(\boldsymbol{\theta})}{\partial \theta_{k}}\right\}\left\{\frac{1}{h_{t,T}(\boldsymbol{\theta})} \frac{\partial^{2} h_{t,T}(\boldsymbol{\theta})}{\partial \theta_{i} \partial \theta_{j}}\right\} \nonumber \\
&+\left\{2-6 \frac{X^2_{t,T}}{h_{t,T}(\boldsymbol{\theta})}\right\}\left\{\frac{1}{h_{t,T}(\boldsymbol{\theta})} \frac{\partial h_{t,T}(\boldsymbol{\theta})}{\partial \theta_{i}}\right\}\left\{\frac{1}{h_{t,T}(\boldsymbol{\theta})} \frac{\partial h_{t,T}(\boldsymbol{\theta})}{\partial \theta_{j}}\right\}\left\{\frac{1}{h_{t,T}(\boldsymbol{\theta})} \frac{\partial h_{t,T}(\boldsymbol{\theta})}{\partial \theta_{k}}\right\} .
\end{align}
By repeated use of the Cauchy-Schwarz inequality, \textcite{FrancqZakoian2004} proved that in the strictly stationary case, (N3) is true in some neighbourhood of $\boldsymbol{\theta}_0$. They used weaker assumptions than ours. The restriction to \emph{some} neighbourhood, rather than all of $\Theta$, is necessary in their case because the fraction $X^2_{t,T}/h_{t,T}(\boldsymbol{\theta})$ is not uniformly integrable under their assumptions but \emph{is} integrable in some neighbourhood of $\boldsymbol{\theta}_0$.  
The fraction $X^2_{t,T}/h_{t,T}(\boldsymbol{\theta})$ in (\ref{eq:3deriv}) is uniformly integrable, and the factors involving derivatives of $h_{t,T}(\boldsymbol{\theta})$ with respect to the parameters admit moments of any order by Lemma 4. By independence of $\varepsilon_{t}$ and $h_{t,T}(\boldsymbol{\theta})$, (A6) and (\ref{eq:boundedlsratio}), $$\norm{\sup_{\boldsymbol{\theta}\in\Theta}\frac{X^2_{t,T}}{h_{t,T}(\boldsymbol{\theta})}}_2=\norm{\varepsilon_0^2}_2\norm{\sup_{\boldsymbol{\theta}\in\Theta}\frac{h_{t,T}(\boldsymbol{\theta}_0)}{h_{t,T}(\boldsymbol{\theta})}}_2<\infty.$$ The proof now proceeds similarly to \textcite{FrancqZakoian2004}. Applying the Cauchy-Schwarz and Hölder inequalities to the terms involving the derivatives of $h_{t,T}(\boldsymbol{\theta})$ with respect to the parameters gives
\begin{align}&\norm{\sup_{\boldsymbol{\theta}\in\Theta}\left(\left\{1-\frac{X^2_{t,T}}{h_{t,T}(\boldsymbol{\theta})}\right\}\left\{\frac{1}{h_{t,T}(\boldsymbol{\theta})} \frac{\partial^{3} h_{t,T}(\boldsymbol{\theta})}{\partial \theta_{i} \partial \theta_{j} \partial \theta_{k}}\right\}\right)}_1 \nonumber \\ \leq  &\norm{\sup_{\boldsymbol{\theta}\in\Theta}\left\{1-\frac{X^2_{t,T}}{h_{t,T}(\boldsymbol{\theta})}\right\}}_2\norm{\sup_{\boldsymbol{\theta}\in\Theta}\left\{\frac{1}{h_{t,T}(\boldsymbol{\theta})} \frac{\partial^{3} h_{t,T}(\boldsymbol{\theta})}{\partial \theta_{i} \partial \theta_{j} \partial \theta_{k}}\right\}}_2
<\infty \label{eq:fin3deriv1}\end{align}
and
\begin{align}&\norm{\sup_{\boldsymbol{\theta}\in\Theta}\left(\left\{2 \frac{X^2_{t,T}}{h_{t,T}(\boldsymbol{\theta})}-1\right\}\left\{\frac{1}{h_{t,T}(\boldsymbol{\theta})} \frac{\partial h_{t,T}(\boldsymbol{\theta})}{\partial \theta_{i}}\right\}\left\{\frac{1}{h_{t,T}(\boldsymbol{\theta})} \frac{\partial^{2} h_{t,T}(\boldsymbol{\theta})}{\partial \theta_{j} \partial \theta_{k}}\right\}\right)}_1 \nonumber \\ 
\leq&\norm{\sup_{\boldsymbol{\theta}\in\Theta}\left\{2 \frac{X^2_{t,T}}{h_{t,T}(\boldsymbol{\theta})}-1\right\}}_2 \norm{\sup_{\boldsymbol{\theta}\in\Theta}\left\{\frac{1}{h_{t,T}(\boldsymbol{\theta})} \frac{\partial h_{t,T}(\boldsymbol{\theta})}{\partial \theta_{i}}\right\}}_p\norm{\sup_{\boldsymbol{\theta}\in\Theta}\left\{\frac{1}{h_{t,T}(\boldsymbol{\theta})} \frac{\partial^{2} h_{t,T}(\boldsymbol{\theta})}{\partial \theta_{j} \partial \theta_{k}}\right\}}_q<\infty \label{eq:fin3deriv2}
\end{align}
for some $1/p+1/q=1/2$, and similarly for the two subsequent terms. Following \textcite{FrancqZakoian2004}, we can deal with the last term on the right-hand side of (99) by writing \begin{align}&\norm{\sup_{\boldsymbol{\theta}\in\Theta}\left(\left\{2-6 \frac{X^2_{t,T}}{h_{t,T}(\boldsymbol{\theta})}\right\}\left\{\frac{1}{h_{t,T}(\boldsymbol{\theta})} \frac{\partial h_{t,T}(\boldsymbol{\theta})}{\partial \theta_{i}}\right\}\left\{\frac{1}{h_{t,T}(\boldsymbol{\theta})} \frac{\partial h_{t,T}(\boldsymbol{\theta})}{\partial \theta_{j}}\right\}\left\{\frac{1}{h_{t,T}(\boldsymbol{\theta})} \frac{\partial h_{t,T}(\boldsymbol{\theta})}{\partial \theta_{k}}\right\}\right)}_1 \nonumber \\
\leq&\norm{\sup_{\boldsymbol{\theta}\in\Theta}\left\{2-6 \frac{X^2_{t,T}}{h_{t,T}(\boldsymbol{\theta})}\right\}}_2\max_i\norm{\sup_{\boldsymbol{\theta}\in\Theta}\left\{\frac{1}{h_{t,T}(\boldsymbol{\theta})} \frac{\partial h_{t,T}(\boldsymbol{\theta})}{\partial \theta_{i}}\right\}}_6^3<\infty. \label{eq:fin3deriv3}
\end{align}
Combining (\ref{eq:fin3deriv1}), (\ref{eq:fin3deriv2}) and (\ref{eq:fin3deriv3}) verifies (N3), which in turn completes the proof of Theorem 2.  $\qedsymbol{}$

\subsection*{A.6 The truncated estimator}
We show that consistency and asymptotic normality of the QMLE 
$\widehat{\boldsymbol{\theta} }_{T}$ remain true for the truncated QMLE $\bar{\boldsymbol{\theta} }_{T}$. By Lemma 3, the truncated part of the recursion of $h_{t,T}(\boldsymbol{\theta})$ in 
(\ref{eq:decrecht}) is bounded by a stationary process. Because the rate of convergence of $\widehat{\boldsymbol{\theta} }_{T}$ in the ATV-GARCH model is the same as in stationary GARCH models, the situation is entirely analogous to that in the proofs of Theorems 4.3 and 4.4 in BHK, which we follow. We thus show that $\underset{\boldsymbol{\theta}\in\varTheta}{\sup} \left|L_T(\boldsymbol{\theta})-\bar{L}_T(\boldsymbol{\theta})\right| = o(1)$
almost surely. By the triangle inequality,
\begin{align}
		\underset{\boldsymbol{\theta}\in\varTheta}{\sup} \left|L_T(\boldsymbol{\theta})-\bar{L}_T(\boldsymbol{\theta})\right|
		\leq \underset{\boldsymbol{\theta}\in\varTheta}{\sup} \frac{1}{T}\sum^{T}_{t=1}\left|\ln h_{t,T}(\boldsymbol{\theta}) - \ln \bar{h}_{t,T}(\boldsymbol{\theta})\right|
		+ \underset{\boldsymbol{\theta}\in\varTheta}{\sup} \frac{1}{T}\sum^{T}_{t=1}\left|\frac{X_{t,T}^2}{h_{t,T}(\boldsymbol{\theta})}-\frac{X_{t,T}^2}{\bar{h}_{t,T}(\boldsymbol{\theta})}\right|.
		\label{eq:trianglell}
	\end{align}
Consider the first term on the right-hand side of (\ref{eq:trianglell}). In view of (\ref{eq:loglikh}),
\begin{align}
&\underset{\boldsymbol{\theta}\in\varTheta}{\sup}\frac{1}{T}\sum^{T}_{t=1}\left|\ln h_{t,T}(\boldsymbol{\theta}) - \ln \bar{h}_{t,T}(\boldsymbol{\theta})\right| \nonumber \\
&\leq \underset{\boldsymbol{\theta}\in\varTheta}{\sup}\frac{1}{T}\sum^{T}_{t=1}\frac{1}{C_1}\left|h_{t,T}(\boldsymbol{\theta}) - \bar{h}_{t,T}(\boldsymbol{\theta})\right| \nonumber \\
&\leq\underset{\boldsymbol{\theta}\in\varTheta}{\sup} \frac{1}{T}\sum^{T}_{t=1}\frac{1}{C_1}\left|\sum^{\infty}_{i=1}c_i(\boldsymbol{\theta})X^2_{t-i,T}-\sum^{t-1}_{i=1}c_i(\boldsymbol{\theta})X^2_{t-i,T}\right| \nonumber \\
&+
\underset{\boldsymbol{\theta}\in\varTheta}{\sup} \frac{1}{T}\sum^{T}_{t=1}\frac{1}{C_1}\left|\sum^{\infty}_{i=1}d_i(\boldsymbol{\theta})g_{t-i+1,T}(\boldsymbol{\theta}_1)-\sum^{t-1}_{i=1}d_i(\boldsymbol{\theta})g_{t-i+1,T}(\boldsymbol{\theta}_1)\right| \nonumber \\
&\leq \frac{1}{T}\sum^{T}_{t=1}\frac{C_2}{C_1}\sum^{\infty}_{i={t}}\rho_0^{i/q}X^2_{t-i,T} 
+
\frac{1}{T}\sum^{T}_{t=1}\frac{C_2}{C_1}\sum^{\infty}_{i={t}}\rho_0^{i/q}\underset{\boldsymbol{\theta}\in\varTheta}{\sup}g_{t-i+1,T}(\boldsymbol{\theta}_1) \nonumber \\
&= \frac{1}{T}\sum^{T}_{t=1}\frac{C_2}{C_1}\rho_0^{t/q}\sum^{\infty}_{j=0}\rho_0^{j/q}X^2_{-j,T} 
+
\frac{1}{T}\sum^{T}_{t=1}\frac{C_2}{C_1}\rho_0^{t/q}\sum^{\infty}_{j=0}\rho_0^{j/q}\underset{\boldsymbol{\theta}\in\varTheta}{\sup}g_{-j + 1,T}(\boldsymbol{\theta}_1).
\label{eq:ltruncdiff2}
\end{align}
Note that the stochastic sum $\sum^{\infty}_{j=0}\rho_0^{i/q}X^2_{-i,T}$ and the deterministic sum $\sum^{\infty}_{j=0}\rho_0^{i/q}g_{-i + 1,T}(\boldsymbol{\theta}_1)$ both start at $j = 0$. The stochastic one is smaller than the sum of the stationary process in Lemma 3 obtained by fixing the intercept of $X^2_{-j,T}$ at $\sup_{u<0} g(u)$. By Lemmas 2.2 and 2.3 of BHK, it is convergent with probability one. Hence it is $O(1)$ almost surely. The deterministic sum is also $O(1)$ for any $\boldsymbol{\theta}\in\varTheta$. It follows that the terms in (\ref{eq:ltruncdiff2}) are $o(1)$. Setting $C=C_2/C_1$ and applying the same results (\ref{eq:ltruncdiff2}) to the second term on the right-hand side of (\ref{eq:trianglell}), we obtain 
\begin{align}
		&\underset{\boldsymbol{\theta}\in\varTheta}{\sup}\frac{1}{T}\sum^{T}_{t=1}\left|\frac{X_{t,T}^2}{h_{t,T}(\boldsymbol{\theta})}-\frac{X_{t,T}^2}{\bar{h}_{t,T}(\boldsymbol{\theta})}\right| \nonumber \\ &= \underset{\boldsymbol{\theta}\in\varTheta}{\sup}\frac{1}{T}\sum^{T}_{t=1}\frac{X_{t,T}^2}{h_{t,T}(\boldsymbol{\theta})}\left|\frac{h_{t,T}(\boldsymbol{\theta})-\bar{h}_{t,T}(\boldsymbol{\theta})}{\bar{h}_{t,T}(\boldsymbol{\theta})}\right| \nonumber \\ &\leq \frac{C}{T}\left(\sum_{j=0}^{\infty}\rho_0^{j/q}X^2_{-j,T}
  +
  \sum_{j=0}^{\infty}\rho_0^{j/q}\underset{\boldsymbol{\theta}\in\varTheta}{\sup}g_{-j + 1,T}(\boldsymbol{\theta}_1)\right) \underset{\boldsymbol{\theta}\in\varTheta}
    {\sup}\sum_{t=1}^T\frac{X_{t,T}^2}{h_{t,T}(\boldsymbol{\theta})}\rho_0^{t/q} \nonumber \\
		&\leq
  \frac{C}{T}\left(\sum_{j=0}^{\infty}\rho_0^{j/q}X^2_{-j,T}
  +
  \sum_{j=0}^{\infty}\rho_0^{j/q}\underset{\boldsymbol{\theta}\in\varTheta}{\sup}g_{-j + 1,T}(\boldsymbol{\theta}_1)\right) 
  \underset{\boldsymbol{\theta}\in\varTheta}{\sup}\sum_{t=1}^T\frac{X_{t,T}^{*2}}{\widetilde{h}_{t,T}(\boldsymbol{\theta})}\rho_0^{t/q}.
		\label{eq:ltruncdiff3}
	\end{align}
The sequence $\{X_{t,T}^2/h_{t,T}(\boldsymbol{\theta})\}$ is bounded by the stationary process $\{X_{t,T}^{*2}/\widetilde{h}_{t,T}(\boldsymbol{\theta})\}$ in Lemma 3. By Lemmas 2.2 and 5.1 of BHK,
\begin{align*}
\underset{\boldsymbol{\theta}\in\varTheta}{\sup}\sum_{t=1}^T\frac{X_t^{*2}}{\widetilde{h}_t(\boldsymbol{\theta})}\rho_0^{t/q} < \infty 
\end{align*}
almost surely. Thus the second term of on the right-hand side of (\ref{eq:trianglell}) is $o(1)$ almost surely. Consistency of $\bar{\boldsymbol{\theta}}_T$ now follows from consistency of $\widehat{\boldsymbol{\theta}}_T$ in Theorem 1.

Now we turn to the limiting distribution of $\sqrt{T} (\bar{\boldsymbol{\theta}}_{T} - \boldsymbol{\theta}_0)$. To this end, by modifying the proofs of equations (5.34) and (5.37) in BHK, we readily find that 
\begin{equation}
    \underset{\boldsymbol{\theta}\in\boldsymbol{\varTheta}}{\sup} \left|\mathbf{s}_T(\boldsymbol{\theta})-\bar{\mathbf{s}}_T(\boldsymbol{\theta})\right|
    = O\left(\frac{1}{T}\right)
\end{equation}
almost surely. Since both $\mathbf{s}_T(\widehat{\boldsymbol{\theta}}_T)=\mathbf{0}$ and $\bar{\mathbf{s}}_T(\bar{\boldsymbol{\theta}}_T)=\mathbf{0}$, we have $\mathbf{0} = s_T(\widehat{\boldsymbol{\theta}}) - \bar{s}_T(\bar{\boldsymbol{\theta}})$. We can replace $\bar{\mathbf{s}}_T(\bar{\boldsymbol{\theta}}_T)$ by $\mathbf{s}_T(\bar{\boldsymbol{\theta}}_T)$. The error is
\begin{equation*}
\boldsymbol{s}_T(\widehat{\boldsymbol{\theta}}) - \mathbf{s}_T(\bar{\boldsymbol{\theta}}_T)
= \boldsymbol{s}_T(\widehat{\boldsymbol{\theta}}) -
\bar{\mathbf{s}}_T(\bar{\boldsymbol{\theta}}_T)
+ \boldsymbol{O}\left(\frac{1}{T} \right) = \boldsymbol{O}\left(\frac{1}{T} \right)
\end{equation*}
almost surely. As in the proof of Theorem 4.4 of BHK, we linearize the difference $\widehat{\boldsymbol{\theta} }_{T}-\bar{\boldsymbol{\theta} }_{T}$ by an
application of the mean value theorem. This gives
\begin{equation}
\mathbf{s}_T(\widehat{\boldsymbol{\theta} }_{T})-\mathbf{s}_T(\bar{\boldsymbol{\theta} }_{T})=
\mathbf{H}_{T}(\boldsymbol{\theta} ^{\ast}) 
(\widehat{\boldsymbol{\theta} }_{T}-
\bar{\boldsymbol{\theta} }_{T})+\mathbf{O}\left( \frac{1}{T}\right)\label{eq:truncscorediff}
\end{equation}
almost surely, where $\boldsymbol{\theta}^{\ast}$ lies on the line segment connecting $\widehat{\boldsymbol{\theta} }_{T}$ and $\bar{\boldsymbol{\theta} }_{T}$. (N2) and (N3) together imply $\mathbf{H}_{T}(\boldsymbol{\theta} ^{\ast})\overset{P}{\rightarrow }\mathbf{H}(\boldsymbol{\theta}_{0})$.
We can therefore write
\begin{equation*}
\mathbf{s}_T(\widehat{\boldsymbol{\theta}}_{T})-\mathbf{s}_T(\bar{\boldsymbol{\theta}}_{T})=\mathbf{H}(\boldsymbol{\theta}_{0})(1+\mathbf{o}_{P}(1))(\widehat{\boldsymbol{\theta} }_{T}
- \bar{\boldsymbol{\theta}}_{T}),
\end{equation*}
which implies
\begin{equation*}
|\widehat{\boldsymbol{\theta}}_{T}-\bar{\boldsymbol{\theta}}_{T}|=\mathbf{O}_{P}\left(\frac{1}{T}%
\right).
\end{equation*}
The limiting distribution of $\sqrt{T} (\bar{\boldsymbol{\theta}}_{T} - \boldsymbol{\theta}_0)$ is therefore identical to the limiting distribution of $\sqrt{T} (\widehat{\boldsymbol{\theta}}_{T} - \boldsymbol{\theta}_0)$ in Theorem 2. Hence it is proved that asymptotic normality of $\bar{\boldsymbol{\theta} }_{T}$ follows from asymptotic normality of 
$\widehat{\boldsymbol{\theta} }_{T}$ in Theorem 2.

\clearpage
\printbibliography
\end{document}